\definecolor{mygray}{gray}{.9}
\def\eqref#1{equation~\ref{#1}}
\def\1{\bm{1}}
\DeclareMathAlphabet{\mathsfit}{\encodingdefault}{\sfdefault}{m}{sl}
\SetMathAlphabet{\mathsfit}{bold}{\encodingdefault}{\sfdefault}{bx}{n}
\newtheorem{theorem}{Theorem}[section]
\newtheorem{corollary}{Corollary}[theorem]
\title{CASteer: Cross-Attention Steering for Controllable Concept Erasure}
\author{Tatiana Gaintseva\textsuperscript{1,2}, Andreea-Maria Oncescu\textsuperscript{2}, Chengcheng Ma\textsuperscript{3}, Ziquan Liu\textsuperscript{1}, \\
\textbf{Martin Benning\textsuperscript{4}}, \textbf{Gregory Slabaugh\textsuperscript{1}}, \textbf{Jiankang Deng\textsuperscript{2,5}}, \textbf{Ismail Elezi\textsuperscript{2}} \\ \\
\textsuperscript{1}Queen Mary University of London \hspace{0.4cm}
\textsuperscript{2}Huawei Noah's Ark \hspace{0.4cm}
\textsuperscript{3}CASIA \hspace{0.4cm} \\
\textsuperscript{4}University College London \hspace{0.4cm}
\textsuperscript{5}Imperial College London \\
}
\begin{document}

\maketitle

\begin{abstract}
Diffusion models have transformed image generation, yet controlling their outputs to reliably erase undesired concepts remains challenging.
Existing approaches usually require task-specific training and struggle to generalize across both concrete (e.g., objects) and abstract (e.g., styles) concepts.
We propose CASteer (\textbf{C}ross-\textbf{A}ttention \textbf{Steer}ing), a training-free framework for concept erasure in diffusion models using steering vectors to influence hidden representations dynamically.
CASteer precomputes concept-specific steering vectors by averaging neural activations from images generated for each target concept.
During inference, it dynamically applies these vectors to suppress undesired concepts only when they appear, ensuring that unrelated regions remain unaffected. This selective activation enables precise, context-aware erasure without degrading overall image quality.
This approach achieves effective removal of harmful or unwanted content across a wide range of visual concepts, all without model retraining.
CASteer outperforms state-of-the-art concept erasure techniques while preserving unrelated content and minimizing unintended effects.
Code is available at \href{https://github.com/Atmyre/CASteer}{https://github.com/Atmyre/CASteer}.%
\end{abstract}

\section{Introduction}
\label{sec:intro}

Recent advances in diffusion models \cite{DBLP:conf/nips/HoJA20, DBLP:conf/cvpr/RombachBLEO22} have revolutionized image \cite{DBLP:conf/iclr/PodellELBDMPR24} and video generation \cite{emu-video}, achieving unprecedented realism.
These models operate by gradually adding noise to data during a forward process and then learning to reverse this noise through a series of iterative steps, reconstructing the original data from randomness.
By leveraging this denoising process, diffusion models generate high-quality, realistic outputs, making them a powerful tool for creative and generative tasks.

However, the same capabilities that make diffusion models transformative also raise profound ethical and practical concerns. The ability to generate hyper-realistic content amplifies societal vulnerabilities. Risks range from deepfakes and misinformation to subtler effects such as erosion of trust in digital media and targeted manipulation. Addressing these challenges requires not only reactive safeguards (e.g., blocking explicit content) but proactive methods to constrain or remove harmful concepts at the level of the model itself. Current approaches to moderation often treat symptoms rather than causes, limiting their adaptability as risks and applications evolve.

Existing methods for concept erasure in diffusion models remain narrow in scope. LoRA-based fine-tuning \cite{DBLP:conf/iclr/HuSWALWWC22} is effective for removing specific objects or styles but struggles with abstract or composite concepts (e.g., nudity, violence, or ideological symbolism), and scales poorly when multiple concepts must be removed, requiring separate adapters or costly retraining. Prompt-based interventions \cite{DBLP:journals/corr/abs-2410-12761} offer greater flexibility for abstract harm reduction but lack precision in suppressing concrete attributes, often failing to generalize across concept variations. As a result, existing strategies fall short of delivering reliable, efficient, and broad-spectrum concept erasure.

In this work, we introduce CASteer, a training-free method for controllable concept erasure that leverages the principle of \textit{steering} to influence hidden representations of diffusion models dynamically. Our method builds on recent findings that deep neural networks encode features into approximately linear subspaces~\cite{elhage2021mathematical,DBLP:conf/nips/WuGIPG23}. Prior research has shown that intermediate subspaces of diffusion backbones also exhibit this property, with directions that modulate the strength of particular features~\cite{DBLP:conf/iclr/KwonJU23, DBLP:conf/nips/ParkKCJU23, DBLP:conf/cvpr/SiH0024, DBLP:conf/cvpr/TumanyanGBD23, DBLP:conf/cvpr/0010S00G24}. Yet, these techniques remain limited in scope, often restricted to specific subspaces, requiring training, or offering only coarse control.

Our approach departs from this paradigm. We show that \textit{multiple} subspaces within diffusion models exhibit linear properties that can be harnessed for precise concept erasure. 
For each concept of interest, we generate 
$k$ \textit{positive} images (where $k \geq 1$)  containing the concept and $k$ \textit{negative} images not containing it, and compute the steering vectors by subtracting the averaged hidden representations of the network across \textit{negative} images from those of \textit{positive} ones.
During inference, these precomputed vectors are applied directly to the model activations, allowing us to selectively suppress undesirable concepts without retraining or degrading the overall image quality.
Experiments demonstrate that CASteer achieves fine-grained erasure of harmful or unwanted concepts (e.g., nudity, violence), while maintaining robustness across a wide range of diffusion models, including SD 1.4, SDXL~\cite{DBLP:conf/iclr/PodellELBDMPR24}, Sana~\cite{xie2025sana}, and their distilled variants (e.g., SDXL-Turbo~\cite{DBLP:conf/eccv/SauerLBR24}, Sana-Sprint~\cite{chen2025sana-sprint}).

In summary, our \textbf{contributions} are the following:
\vspace{-0.2cm}
\begin{itemize}
\item We \textbf{propose} a novel training-free framework for controllable concept erasure in diffusion models, leveraging steering vectors to suppress unwanted image features without retraining.
\item We \textbf{demonstrate} that CASteer effectively handles both concrete (e.g., specific characters) and abstract (e.g., nudity, violence) concepts, and scales to multiple simultaneous erasures.
\item We \textbf{achieve} state-of-the-art performance in concept erasure across diverse tasks and diffusion backbones, validating the robustness, versatility, and practicality of our approach.
\end{itemize}

\vspace{-0.3cm}
\section{Related work}
\label{sec:related_work}

\noindent\textbf{Data-driven AI Safety.} Ensuring the safety of image and text-to-image generative models hinges on preventing the generation of harmful or unwanted content.
Common approaches include curating training data with licensed material \cite{adobefirefly, DBLP:conf/nips/SchuhmannBVGWCC22}, fine-tuning models to suppress harmful outputs \cite{DBLP:conf/cvpr/RombachBLEO22, DBLP:journals/corr/abs-2006-11807}, or deploying post-hoc content detectors \cite{nudenet, DBLP:journals/corr/abs-2210-04610}.
While promising, these strategies face critical limitations: data filtering introduces inherent biases \cite{DBLP:journals/corr/abs-2006-11807}, detectors are computationally efficient but often inaccurate or easily bypassed \cite{DBLP:conf/iccv/GandikotaMFB23, remove_nsfw_detection}, and model retraining becomes costly when new harmful concepts emerge.
Alternative methods leverage text-domain interventions, such as prompt engineering \cite{DBLP:journals/corr/abs-2006-11807} or negative prompts \cite{DBLP:journals/corr/abs-2305-16807, DBLP:conf/cvpr/SchramowskiBDK23}.
%
Yet these remain vulnerable to adversarial attacks, lack flexibility and precision as they operate in the discrete space of tokens, and often fail to address the disconnect between text prompts and visual outputs—models can still generate undesired content even when text guidance is ``safe''.
Our approach instead operates in the joint image-text latent space of diffusion models, enabling more robust and granular control over generated content without relying solely on textual constraints.  

\noindent\textbf{Model-driven AI Safety.} Current methods \cite{DBLP:conf/iccv/GandikotaMFB23,DBLP:conf/iccv/KumariZWS0Z23,DBLP:conf/nips/HengS23,zhang2023forgetmenot,huang2023receler,lee2024cpe} erase unwanted concepts by fine-tuning or otherwise optimising models and adapters to shift probability distributions toward null or surrogate tokens, often combined with regularization or generative replay \cite{DBLP:conf/nips/ShinLKK17}. Other methods, such as \cite{DBLP:conf/wacv/GandikotaOBMB24,DBLP:conf/eccv/GongCWCJ24}, use direct weight editing to remove unwanted concepts.
Although effective, these approaches lack precision, inadvertently altering or removing unrelated concepts.
Advanced techniques like SPM~\cite{DBLP:conf/cvpr/Lyu0HCJ00HD24} and MACE~\cite{DBLP:conf/cvpr/LuWLLK24} improve specificity through LoRA adapters \cite{DBLP:conf/iclr/HuSWALWWC22}, transport mechanisms, or prompt-guided projections to preserve model integrity.
However, while promising for concrete concepts (e.g., Mickey Mouse), they still struggle with abstract concepts (e.g., nudity) and require parameter updates.
Another group of methods focuses on interventions into internal mechanisms of generative models. Methods like Prompt-to-Prompt \cite{DBLP:conf/iclr/HertzMTAPC23} enable fine-grained control over text-specified concepts (e.g., amplifying or replacing elements) through interventions to cross-attention maps, yet fail to fully suppress undesired content, particularly when concepts are implicit or absent from prompts. 
This task-specific specialization limits their utility for safety-critical erasure, where complete removal is required. 
CASteer bridges this gap, enabling precise, universal concept suppression without relying on textual priors or compromising unrelated model capabilities.
Another area of research focuses on removing information about undesired concepts from text embeddings that generative models are conditioned on \cite{DBLP:journals/corr/abs-2410-12761,DBLP:journals/corr/abs-2410-02710,DBLP:journals/corr/abs-2411-10329}. However, as these methods operate on a discrete space of token embeddings, their trade-off between the effectiveness of erasure and the preservation of other features is limited.
\cite{zhang2024defensive} proposes using adversarial training for concept unlearning; however, training this method is computationally intensive.
In contrast, CASteer eliminates training entirely, enabling direct, non-invasive concept suppression in the model’s latent space without collateral damage to unrelated features.

\noindent\textbf{Utilizing directions in latent spaces.}  
This area of research focuses on finding interpretable directions in various intermediate spaces of diffusion models~\cite{DBLP:conf/iclr/KwonJU23, DBLP:conf/nips/ParkKCJU23, DBLP:conf/cvpr/SiH0024, DBLP:conf/cvpr/TumanyanGBD23}, which can then be used to control the semantics of generated images. Based on this idea, SDID~\cite{DBLP:conf/cvpr/0010S00G24} recently proposed to learn a vector for each given concept, which is then added to the intermediate activation of a bottleneck layer of the diffusion model during inference to provoke the presence of this concept in the generated image. However, this method is highly architecture-specific and fails to deliver precise control over attributes. In our work, we propose a training-free method for constructing interpretable directions in intermediate activation spaces of various diffusion models for more precise control of image generation. SAeUron~\cite{cywinski2025saeuron} utilizes Sparse Autoencoders~\cite{Olshausen1997SparseCW} (SAEs) to find interpretable directions in the activation space of the diffusion model. However, SAEs are unstable, require extensive training, and do not provide initial control over the set of attributes that can be erased. In contrast, CASteer does not require training and provides direct control over the manipulated attributes.


%

\vspace{-0.3cm}
\section{Methodology}
\label{sec:method}


The main operating principle of CASteer is to modify outputs of certain intermediate layers during inference in order to affect the semantics of generated images, thus preventing the generation of a desired concept. These outputs are modified using specially designed \textit{steering vectors}. In this section, we begin by justifying the choice of the intermediate layers that CASteer modifies (Sec.~\ref{3.1}), then proceed with the procedure of construction of steering vectors (Sec.~\ref{method:create_steering}), and after that describe how these steering vectors are used during inference to control the generation process (Sec.~\ref{method_use_steering}). Finally, we elaborate on practical aspects regarding the calculation and use of the steering vectors (Sec.~\ref{3.4}).
\vspace{-0.1cm}
\subsection{Choice of layers to steer}
\vspace{-0.1cm}
\label{3.1}

Most modern diffusion models use U-Net or Diffusion Transformers (DiT)~\cite{DBLP:conf/iccv/PeeblesX23} as a backbone. They  consist of a set of Transformer blocks, each having three main components: cross-attention (CA) layer, self-attention (SA) layer, and MLP layer, all of which contribute to the residual stream of the model. Among those, CA layers are the only place in the model where information from the text prompt goes into the model, guiding text-to-image generation. For every image patch and prompt embedding, each CA layer generates a vector matching the size of the image patch embedding. After summation, these vectors transmit text-prompt information to corresponding image regions~\cite{DBLP:conf/iclr/HertzMTAPC23}.

As the semantics of the resulting image is mostly determined by the text prompt, we modify the outputs of the CA layers during inference, which results in effective, yet precise, control over the features of the generated image. Thus, CASteer constructs steering vectors for the outputs of every CA layer in the model. 
In the appendix, we present experiments on applying CASteer to steer outputs of other layers (SA, MLP, and outputs of intermediate layers inside CA blocks).
\vspace{-0.1cm}
\subsection{Construction of steering vectors}
\vspace{-0.1cm}
\label{method:create_steering}

\begin{figure*}[t]
    \centering
    \includegraphics[width=\linewidth]{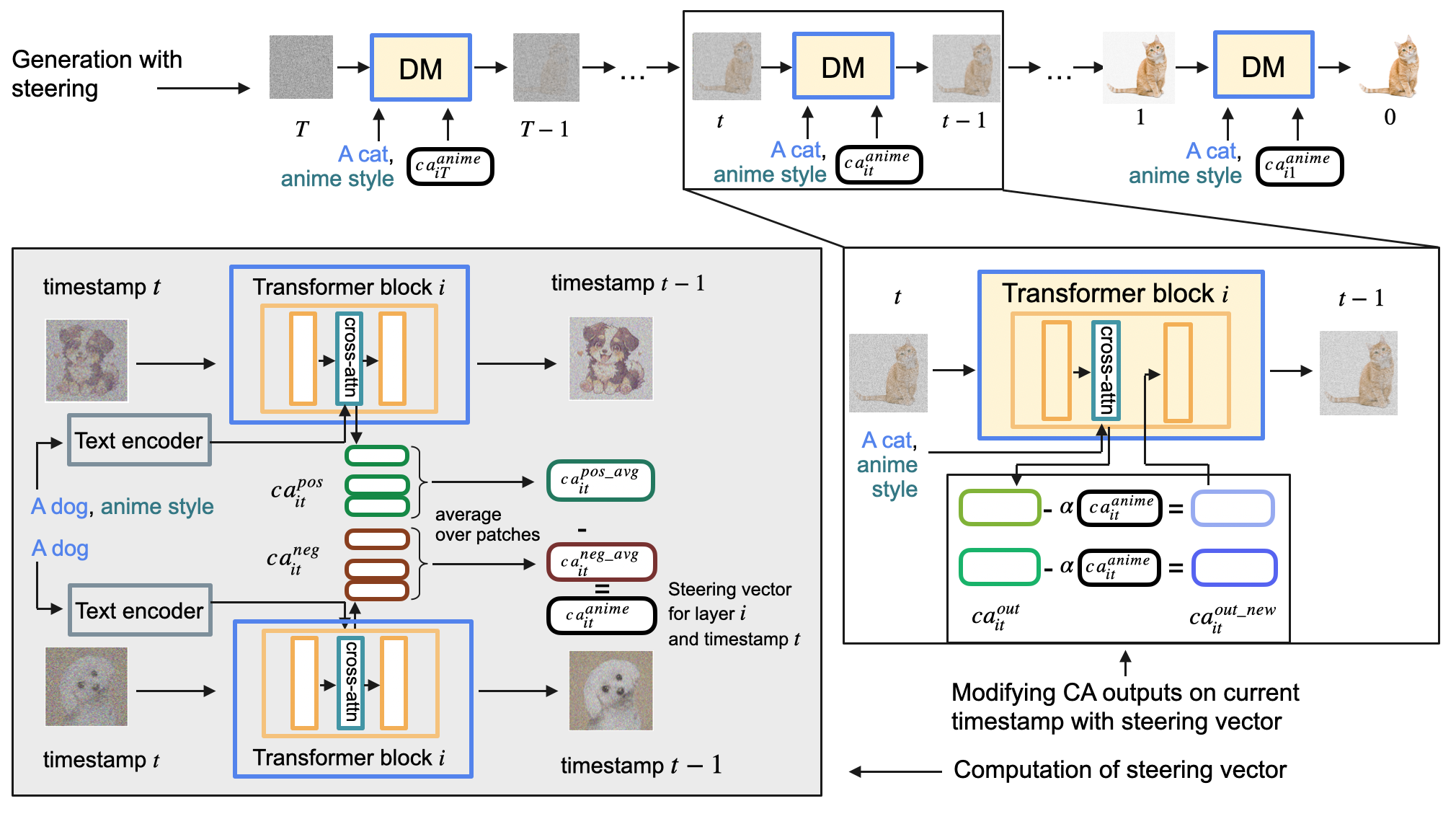}
    \caption{Main pipeline. (Bottom left, gray background) For computing a steering vector, we prompt diffusion model with two prompts that differ in a desired concept, e.g., ``anime style'' and save CA outputs at each timestamp $t$ and each CA layer $i$. We average these outputs over image patches and get averaged CA outputs $ca^{pos\_avg}_{it}$ and $ca^{neg\_avg}_{it}$ for each $t$ and $i$. We subtract the latter from the former, getting a steering vector for the layer $i$ and timestamp $t$ $ca^{anime}_{it}$. (Right) For deleting concept $X$ from generation, at each denoising step $t$, we subtract steering vector $ca^{anime}_{it}$ multiplied by intensity $\alpha$ from the CA outputs of the layer $i$.}
    \label{fig:steering_pipeline}
    \vspace{-0.5cm}
\end{figure*}

We propose to construct steering vectors for each concept we aim to manipulate.
These vectors correspond to the cross-attention (CA) outputs we modify.
Each steering vector matches the size of the CA outputs and encodes the desired concept’s information.
For preventing the concept from being present in the generated image, we subtract steering vectors of an unwanted concept from cross-attention outputs during generation. 

We construct steering vectors as follows. 
Given a concept $X$ to manipulate, we create paired positive and negative prompts differing only by the inclusion of $X$.  
For example, if $X= \text{``baroque style''}$, example prompts are $p_{\text{pos}} = \text{``A picture of a man, baroque style''}$ and $p_{\text{neg}} = \text{``A picture of a man''}$.
Assume a DiT backbone has $N$ Transformer blocks, each containing one CA layer, totaling $N$ CA layers.
We generate images from both prompts, saving outputs from each of the $N$ cross-attention layers across all $T$ denoising steps.
This yields $NT$ cross-attention output pairs $\langle \text{ca}^{pos}_{it}, \text{ca}^{neg}_{it} \rangle$ for $ 1 \leqslant i \leqslant N$ and $1 \leqslant t \leqslant T$, where $i$ denotes the layer and $t$ is the denoising step.
Each $\text{ca}^{pos}_{it}$ and $\text{ca}^{neg}_{it}$ has dimensions $ \text{patch\_num}_i \times \text{emb\_size}_i $, corresponding to the number of patches and embedding size at layer $ i $. 
We average $ \text{ca}^{pos}_{it} $ and $ \text{ca}^{neg}_{it}$ over image patches to obtain averaged cross-attention outputs:
\begin{equation}
    ca^{pos\_avg}_{it} = \frac{\sum_{k=1}^{patch\_num_i} ca^{pos}_{itk}}{patch\_num_i} \;
    ; \;
    ca^{neg\_avg}_{it} = \frac{\sum_{k=1}^{patch\_num_i} ca^{neg}_{itk}}{patch\_num_i}
\end{equation}
\noindent where $ca^{pos\_avg}_{it}$ and $ca^{neg\_avg}_{it}$ are vectors of size $\text{emb\_size}_i$.
%
%
Then, for each of these $N$ layers and each of $T$ denoising steps, we construct a corresponding steering vector carrying a notion of $X$ by subtracting its averaged cross-attention output that corresponds to the negative prompt from that corresponding to the positive one as: 
\begin{equation}
ca^{X}_{it} = f_{norm}(ca^{pos\_avg}_{it} - ca^{neg\_avg}_{it}).
\end{equation} 
%
\noindent where $f_{norm}$ is an $L_2$-normalization function: $f_{norm}(v) = \frac{v}{||v||_2^2}$.
\vspace{-0.1cm}
\subsection{Using steering vectors to control generation}
\vspace{-0.1cm}
\label{method_use_steering}
Computed steering vectors can be seen as directions in a space of intermediate representations of a model (in the space of CA activations) that represent a notion of $X$. 
Thus, we should be able to control the expressiveness of certain feature $X$ by steering the model representations along the steering vector representing $X$. That is, we can prevent a concept from appearing on the generated image by subtracting some amount of steering vectors for that concept from corresponding CA outputs of a model during inference: 
\begin{equation}
ca^{out\_new}_{itk} = ca^{out}_{itk} - \alpha ca^{X}_{itk},
\label{eq:4}
\end{equation}
\noindent Here $1 \leqslant k \leqslant \text{patch\_num}_i$, and $\alpha$ is a hyperparameter that controls the strength of concept suppression. 
Larger values of $\alpha$ lead to higher suppression of the concept $X$ in the generated image. Below we propose a way to adjust $\alpha$ dynamically based on activations of diffusion model during generation, achieving effective and precise erasure of unwanted concepts in the resulting image.

\noindent \textbf{Choice of alpha.}
Most often when we aim to suppress the concept of $X$, our goal is to completely prevent it from appearing on any generated image given any input prompt. This is the case of such tasks as nudity/violence removal or privacy, when we do not want the model to ever generate somebody's face or artwork. However, there might be different magnitudes for concept $X$ in the original text prompt (e.g., prompts ``an angry man'' or ``a furious man'' express different levels of anger). A concept $X$ can have different magnitudes of expression in different patches of the image being generated. Consequently, if we use Eq.~\ref{eq:4} for suppression, different values of $\alpha$ are needed to completely suppress $X$ for different prompts and individual image patches while not affecting other features in the image.

\begin{figure*}[t]
\centering
\begin{minipage}{0.55\linewidth}
    \centering
    \includegraphics[width=\linewidth]{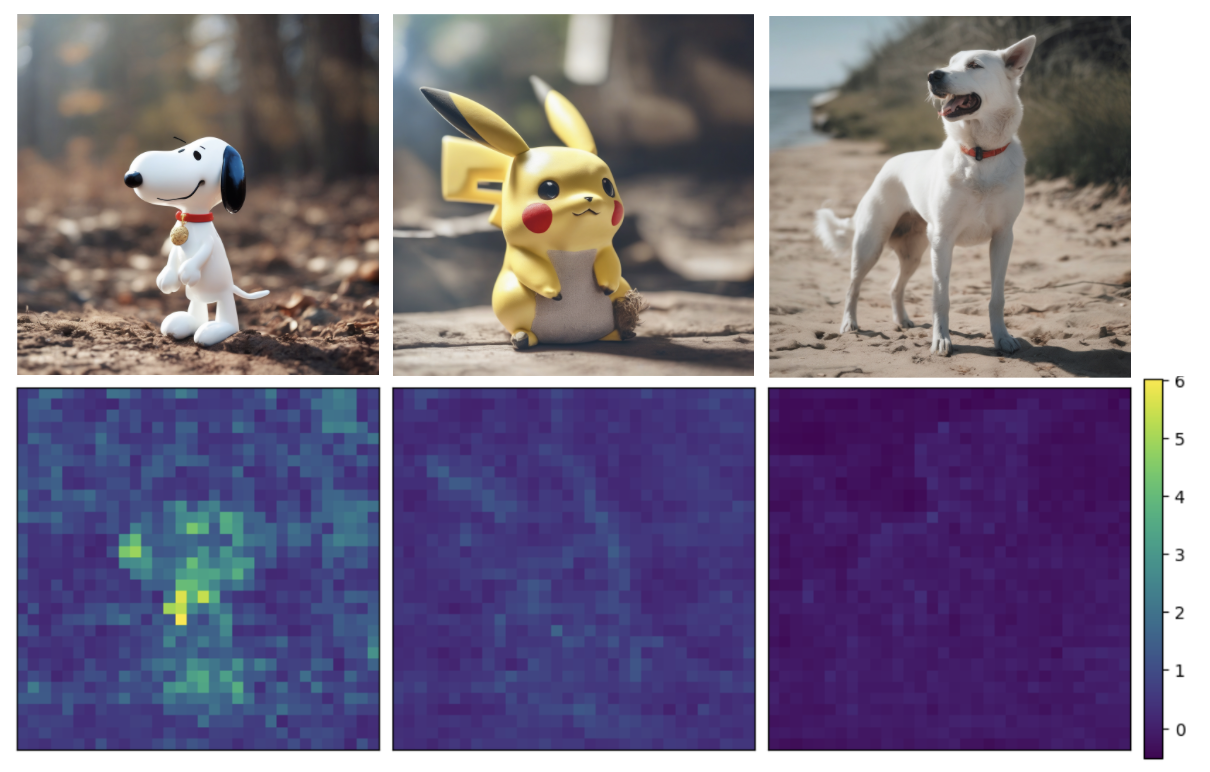}
\end{minipage}
\begin{minipage}{0.43\linewidth}
    \caption{Heatmaps of dot product values between CA outputs of SDXL model (layer 15, denoising step 0) and steering vector for the ``Snoopy'' concept. Top: generated images, bottom: heatmaps. Images are generated based on the prompt ``A bright photo of a $X$'', where $X \in \{$``Snoopy'', ``Pikachu'', ``dog''$\}$. We see that for the image that contains Snoopy (left), dot product values are high for those image tokens that correspond to the image tokens that actually produce Snoopy. For images that do not contain Snoopy, dot product values are low for all image tokens.}
    \label{fig:dot_products}
\end{minipage}
\vspace{-0.5cm}
\end{figure*}

We propose to estimate $\alpha$ for concept deletion by using the dot product between $ca^{X}_{it}$ and corresponding CA output $ca^{out}_{itk}$  ($\langle ca^{X}_{it}, ca^{out}_{itk} \rangle$) as an assessment of amount of $X$ that is present in the image part corresponding to $k^{th}$ patch of $ca^{out}_{it}$ (see Fig.~\ref{fig:dot_products}). As $ca^{X}_{it}$ is normalized, the value of this dot product is the length of the projection of the CA output $ca^{out}_{itk}$ onto the steering vector $ca^{X}_{it}$. As $ca^{X}_{it}$ can be seen as a direction in a linear subspace corresponding to the concept $X$, the length of the projection can be seen as the amount of $X$ that is present in $ca^{out}_{itk}$. That said, for removing information about $X$ from $ca^{out}_{itk}$, we propose to subtract the amount of $ca^{X}_{it}$ proportionate to the dot product between $ca^{X}_{it}$ and $ca^{out}_{itk}$ from $ca^{out\_new}_{itk}$, i.e., define $\alpha = \beta (ca^{X}_{it}, ca^{out}_{itk})$. Consequently, Eq.~\ref{eq:4} becomes the following:
\begin{equation}
\label{eq:6}
ca^{out\_new}_{itk} = ca^{out}_{itk}- \beta \langle ca^{X}_{it}, ca^{out}_{itk} \rangle ca^{X}_{it}.
\end{equation}  
Here $1 \leqslant k \leqslant \text{patch\_num}_i$, and $\beta$ is a hyperparameter that controls the strength of the suppression. 

Note that Eq.~\ref{eq:6} can be reformulated in a matrix form as a projection operator onto the subspace orthogonal to steering vector $s=ca^{X}_{it}$:
\begin{equation}
    s^{new} = f_{\text{delete}}(c, s) = (I - \beta ss^T)c 
    \label{eq:casteer_erasure_matrix}
\end{equation}

Here $s^{new}=ca^{out\_new}_{itk}$, $c=ca^{out}_{itk}$, $s=ca^{X}_{it}$, $I$ is an identity matrix. 

\noindent \textbf{Intermediate clipping}. We now introduce a mechanism of clipping the value of $\alpha$ to get better control over concept suppression. Note that using Eq.~\ref{eq:6} we only want to influence those CA outputs $ca^{out}_{itk}$ which have a positive amount of unwanted concept $X$. As dot product $ \langle ca^{X}_{it}, ca^{out}_{itk} \rangle$ measures the amount of $X$ present in CA output $ca^{out}_{itk}$, we only want to steer those CA outputs $ca^{out}_{itk}$, which have a positive dot product with $ca^{X}_{it}$. So the equation becomes the following:
\begin{equation}
\begin{split}
\label{eq:7}
    \alpha = \max(\beta \langle ca^{X}_{it}, ca^{out}_{itk}\rangle, 0) \\
    ca^{out\_new}_{itk} = ca^{out}_{itk}- \alpha ca^{X}_{it}.
\end{split}
\end{equation}  

Note that if intermediate clipping is used, Eq.~\ref{eq:7} can no longer be formulated in a matrix form. In the experiments section, we present results of applying CASteer for concept erasure both with and without intermediate clipping (i.e. using Eq.~\ref{eq:6} and Eq.~\ref{eq:7}).

 
\subsection{Practical considerations}
\label{3.4}

\noindent\textbf{Multiple Prompts for Steering Vector.} 
We described in the previous section how to construct and use steering vectors to alter one concept,  based on one pair of prompts, e.g., ``a picture of a man'' and ``a picture of a man, baroque style''.
As mentioned, a steering vector can be seen as the direction in the space of intermediate representations of a model that points from an area of embeddings not containing a concept $X$, to an area that contains it.
In order for this direction to be more precise, we propose to construct steering vectors based on multiple pairs of prompts instead of one.
More precisely, we obtain $P \geqslant 1$ pairs of $ca^{pos\_avg}_{itp}$ and $ca^{neg\_avg}_{itp}$, $1 \leqslant p \leqslant P$, then average them over P:
\begin{equation}
    ca^{pos\_avg}_{it} = \frac{\sum_{p=1}^{P}ca^{pos\_avg}_{itp}}{P}, ca^{neg\_avg}_{it} = \frac{\sum_{p=1}^{P}ca^{neg\_avg}_{itp}}{P}
\end{equation}
\noindent and obtain steering vectors as $ca^{X}_{it} = ca^{pos\_avg}_{it} - ca^{neg\_avg}_{it}$.

\noindent\textbf{Steering multiple concepts.} It is easy to erase multiple concepts during a generation by either mutually orthogonalizing a set of steering vectors corresponding to these concepts and applying them to the cross-attention output successively, or constructing single steering vector corresponding to multiple concepts by simply averaging individual steering vectors. In the experimental section, we show results on applying a steering vector constructed for multiple concepts using averaging approach to prevent generation of inappropriate concepts. Additionally, in the appendix (Sec.~\ref{sec:erasing_multiple_concepts}) we show results on erasing multiple concepts by orthogonalizing their corresponding steering vectors.

\noindent\textbf{Efficiency: Transferring vectors from distilled models.} 
%
Adversarial Diffusion Distillation (ADD)~\cite{DBLP:conf/eccv/SauerLBR24} is a fine-tuning approach that allows sampling large-scale foundational image diffusion models in $1$ to $4$ steps, while producing high-quality images, with many methods such as SDXL and Sana having distilled versions (SDXL-Turbo and Sana-Sprint).
%
%
We observe that steering vectors obtained from the distilled models can successfully be used for steering generations of its corresponding non-distilled variants. 
More formally, having a pair of prompts, we obtain $ca^{pos\_avg}_{i}$ and $ca^{neg\_avg}_{i}$ from the distilled model using $1$ denoising step. 
Note that there is no second index $t$ as we use only one denoising iteration, i.e. $T=1$.
We then construct steering vectors for the concept $X$ as $ca^{X}_{i} = ca^{pos\_avg}_{i} - ca^{neg\_avg}_{i}$ and then use it to steer non-Turbo variant of the model by using $ca^{X}_{i}$ for each denoising step $1 \leqslant j \leqslant T$.

\noindent\textbf{Injecting CASteer into model weights.}
Note that when steering more advanced models (SDXL or Sana), we use steering vectors from Turbo/Sprint model versions, where we have only one steering vector per model CA layer. 
Also note that the last layer of CA block in SDXL/SANA is Linear layer with no bias and no activation function, i.e., essentially is a matrix multiplication:
$h_{out} = W_{proj\_out}h_{in}.$
Here $W_{proj\_out}$ is a weight matrix of the last $proj\_out$ layer of CA block of SDXL/SANA, $h_{in}$ and $h_{out}$ are input and output to that layer, $h_{out}$ being the final output of CA layer.
In this case, by combining last layer of CA block with CASteer formulation in a matrix form (Eq.~\ref{eq:casteer_erasure_matrix}), we can incorporate CASteer directly into weights of the model, by multiplying weight matrix of the last layer of CA block with $I - ss^T$ matrix from Eq.~\ref{eq:casteer_erasure_matrix}:
\begin{equation}
    h_{out} = (I - ss^T)W_{proj\_out}h_{in}  = W^{s}_{proj\_out}h_{in}
\end{equation}
$W^{s}_{proj\_out}$ is a matrix of the same size as $W_{proj\_out}$. This results in having zero inference overhead compares to original SDXL/SANA model similar to LoRA-like tuning approaches.

\section{Experiments}
\label{sec:experiments}


We evaluate the performance of our method on the task of erasing different concepts. 
We show that our method succeeds in suppressing both 
abstract (e.g., ``nudity'', ``violence'') and
concrete concepts (e.g., ``Snoopy''). 
Moreover, we demonstrate the advantages of our method in removing implicitly defined concrete concepts (e.g., if a concept is ``Mickey'', prompting ``a mouse from a Disneyland'' should not result in a generation of Mickey).

\textbf{Implementation details. }
For a fair comparison, we report our main quantitative results using StableDiffusion-v1.4 (SD-v1.4)~\cite{DBLP:conf/cvpr/RombachBLEO22}. 
SD-1.4 model does not have a Turbo version, so for these experiments we use per-step steering vectors computed from the original SD-1.4. We apply steering to all of the CA layers in the model.
We set $\beta=2$ for the concept erasure in all experiments. This choice is motivated by the fact that with $\beta=2$, the Eq.~\ref{eq:casteer_erasure_matrix} becomes a Householder operator (reflection) of the CA activation vector $c$
across the hyperplane orthogonal to the steering vector $s$. This operation preserves $L_2$-norm of the vector $c$, thus keeping relative and absolute values of all the information  present in $c$ that is orthogonal to $s$ intact after transformation. In Appendix Sec.~\ref{sec:ablation_strength} we ablate the choice of $\beta$. We show $\beta < 2$ leads to lower level of concept suppression, while still producing high quality images, enabling control on the level of concept erasure. We also show that for SDXL and SANA models, values of $\beta > 2$ lead to stronger concept erasure, while also leaving general quality of images high.
We use 50 prompt pairs for generating steering vectors for concrete concepts (e.g., ``Snoopy''), and 210 prompt pairs for generating steering vectors for abstract concepts (e.g., ``nudity''). Information about prompts used for generation of the steering vectors is in the Appendix Sec.~\ref{sec:prompts}. 

We also show effectiveness of CASteer on bigger models, such as SDXL and SANA, and present results on these models in the Appendix Sec.~\ref{sec:sdxl_res} and Sec.~\ref{sec:sana_res}. For SDXL~\cite{DBLP:conf/iclr/PodellELBDMPR24} and SANA, we use steering vectors obtained from SDXL-Turbo and SANA-Sprint models, respectively. 

\begin{table}
\centering
\caption{\textbf{Quantitative results on nudity removal based on I2P~\cite{DBLP:conf/cvpr/SchramowskiBDK23} dataset.} Detection of nude body parts is done by Nudenet at a threshold of 0.6. F: Female, M: Male. The best
results are highlighted in bold, second-best are underlined.}
\resizebox{0.75\linewidth}{!}{
\begin{tabular}{lccccccccc}
\toprule
\multirow{2}{*}{Method} & \multicolumn{9}{c}{Nudity Detection}                                 \\ \cmidrule(l){2-10}   & Breast(F)  & Genitalia(F) & Breast(M)  & Genitalia(M) & Buttocks   & Feet        & Belly   & Armpits     & Total$\downarrow$       \\ \midrule
SD v1.4 & 183  & 21  & 46 & 10  & 44  & 42 & 171 & 129  & 646  \\
\midrule
\color{blue}DoCo \cite{wu2025unlearning}    &  \color{blue}162   & \color{blue}29   & \color{blue}48  &  \color{blue}63 & \color{blue}64 & \color{blue}122 & \color{blue}168  & \color{blue}250  &   \color{blue}906  \\
Ablating (CA) \cite{DBLP:conf/iccv/KumariZWS0Z23} & 298 & 22 & 67 & 7 & 45 & 66 & 180 & 153 & 838 \\
FMN \cite{zhang2023forgetmenot}   & 155    & 17   & 19  & 2  & 12    & 59 & 117    & 43 & 424    \\
ESD-x \cite{DBLP:conf/iccv/GandikotaMFB23} & 101 & 6 & 16 & 10 & 12 & 37 & 77 & 53 & 312\\
SLD-Med \cite{DBLP:conf/cvpr/SchramowskiBDK23}    & 39  & \underline{1} & 26  & 3 & 3  & 21  & 72   & 47  & 212 \\
UCE \cite{DBLP:conf/wacv/GandikotaOBMB24}  & 35  & 5  & 11  & 4 & 7  & 29  & 62  & 29  & 182 \\
SA \cite{DBLP:conf/nips/HengS23}   & 39 & 9   & 4    & \textbf{0}   & 15  & 32 & 49   & 15 & 163  \\
ESD-u \cite{DBLP:conf/iccv/GandikotaMFB23} & 14  & \underline{1}   & 8   & 5   & 5  & 24 & 31  & 33  & 121  \\
Receler \cite{huang2023receler}    & 13    & \underline{1}   & 12   & 9   & 5  & 10 & 26 & 39  &  115    \\
MACE \cite{DBLP:conf/cvpr/LuWLLK24} & 16    & \textbf{0}   & 9  & 7  & 2    & 39 & 19    & 17 & 109    \\

RECE \cite{DBLP:conf/eccv/GongCWCJ24}  & 8    & \textbf{0}   & 6 & 4 & \textbf{0}    & 8 & 23   & 17  & 66    \\
CPE (one word) \cite{lee2024cpe}   & 11    & 2   & 3 & 2  & 5   & 15 & 13    & 15  & 66    \\
CPE (four word) \cite{lee2024cpe} & 6    & \underline{1}   & 3  & 2 & 2   & 8 & 8    & 10 & 40    \\
AdvUnlearn \cite{zhang2024defensive} & \textbf{1} & \underline{1} & \textbf{0} & \textbf{0} & \textbf{0} & 13  & \textbf{0} & 8 & 23 \\
SAeUron \cite{cywinski2025saeuron}    & \underline{4}    & \textbf{0}   & \textbf{0}  & \underline{1}  & 3    &  2 & \underline{1}    & 7  & 18    \\
Ours (w/o clip)   & 5    & \textbf{0}   & \textbf{0}         & \underline{1}            & 3    & 2 & \textbf{0}    & \underline{1}          & \underline{12} \\
Ours (clip)   & \underline{4}    & \textbf{0}   & \textbf{0}         & \underline{1}            & \underline{2}    & \textbf{0} & \textbf{0}    & \textbf{0}          & \textbf{7} \\\bottomrule
\end{tabular}
}

\label{tab:sd14_nudity}
\end{table}
\begin{table}
    \centering
    \caption{\textbf{Quantitative results on inappropriate content removal based on I2P~\cite{DBLP:conf/cvpr/SchramowskiBDK23} dataset. Detection of inappropriate content is done by Q16~\cite{schramowski2022can} classifier.} The best
results are highlighted in bold, second-best are underlined.}
\resizebox{0.9\textwidth}{!}{
    \begin{tabular}{
        @{} l | c |
        *{9}{S[table-format=2.1, table-column-width=20mm]} 
    }
        \toprule
        {\multirow{2}{*}{Class name}}  & \multicolumn{10}{c}{Inappropriate proportion (\%) ($\downarrow$)} \\
        \cmidrule{2-11} 
        {} & {SD}  & {FMN} & {Ablating} & {ESD-x} & {SLD} & {ESD-u} & {UCE} & {Receler} & {Ours (w/o clip)} & {Ours (clip)} \\
        \midrule
        {\footnotesize Hate}
        & 44.2  & 37.7 & 40.8 & 34.1 & 22.5 & 26.8 & 36.4 & \textbf{28.6} & 35.5 & \underline{29.00}\\
        {\footnotesize Harassment}
        & 37.5 & 25.0 & 32.9 & 30.2 & 22.1 & 24.0 & 29.5 & \textbf{21.7} &  29.85 & \underline{25.61}\\
        {\footnotesize Violence}
        & 46.3 & 47.8 & 43.3 & 40.5 & 31.8 & 35.1 & 34.1 & \textbf{27.1} &  32.54 & \underline{27.78}\\
        {\footnotesize Self-harm}
        & 47.9 & 46.8 & 47.4 & 36.8 & 30.0 & 33.7 & 30.8 & \textbf{24.8} &  \underline{26.10} & 26.22\\
        {\footnotesize Sexual}
        & 60.2 & 59.1 & 60.3 & 40.2 & 52.4 & 35.0 & 25.5 & 29.4 & \underline{22.99} & \textbf{20.73}\\
        {\footnotesize Shocking}
        & 59.5 & 58.1 & 57.8 & 45.2 & 40.5 & 40.1 & 41.1 & \underline{34.8} & 38.43 & \textbf{34.00}\\
        {\footnotesize Illegal activity}
        & 40.0 & 37.0 & 37.9 & 28.9 & 22.1 & 26.7 & 29.0 & \underline{21.3} & 21.46 & \textbf{17.61}\\
        \hline 
        \addlinespace[0.1em] 
        {\footnotesize Overall}
        & 48.9 & 47.8 & 45.9 & 36.6 & 33.7 & 32.8 & 31.3 & \underline{27.0} & 28.94 & \textbf{25.58}\\
        \bottomrule
    \end{tabular}
}
\label{tab:sd14_i2p}
\vspace{-0.5cm}
\end{table}

\subsection{Results}
\vspace{-0.2cm}
\textbf{Abstract concept erasure.}
In this section, we present results on inappropriate content erasure based on I2P dataset~\cite{DBLP:conf/cvpr/SchramowskiBDK23}. I2P is a dataset of 4,703 curated prompts designed to test generative models, where most prompts lead to images containing inappropriate content. Following prior work, we test CASteer on two I2P-based tasks: 1) removing nudity, 2) removing all inappropriate content at once. For nudity removal, we utilize CASteer with steering vectors generated for the concept of "nudity". For inappropriate content removal, we use CASteer with steering vectors obtained as average of steering vectors generated for each type of inappropriate content, i.e., hate, harassment, violence, self-harm, shocking, sexual, and illegal content. 

We compare our method with the following state-of-the-art approaches: Ablating (CA)~\cite{DBLP:conf/iccv/KumariZWS0Z23}, FMN~\cite{zhang2023forgetmenot},
SAeUron~\cite{cywinski2025saeuron},
SLD~\cite{DBLP:conf/cvpr/SchramowskiBDK23}, ESD~\cite{DBLP:conf/iccv/GandikotaMFB23}, UCE~\cite{DBLP:conf/wacv/GandikotaOBMB24},
MACE~\cite{DBLP:conf/cvpr/LuWLLK24},
SA~\cite{DBLP:conf/nips/HengS23},
Receler~\cite{huang2023receler},
RECE~\cite{DBLP:conf/eccv/GongCWCJ24}, CPE~\cite{lee2024cpe}, and AdvUn~\cite{zhang2024defensive}. Following prior art, we utilize the NudeNet~\footnote{https://github.com/notAI-tech/NudeNet} to detect nude body parts on generated images for nudity erasure task, and the NudeNet with the Q16 detector to detect inappropriate content.

We present the results for CASteer versions with and without intermediate clipping applied in Tab.~\ref{tab:sd14_nudity} and Tab.~\ref{tab:sd14_i2p}. We show that both versions of CASteer outperform all prior models on nudity erasure, with CASteer version with clipping having more than 2 times fewer images with detected nudity than the second-best result. On the inappropriate content removal, CASteer version with clipping also achieves state-of-the-art result, surpassing second-best model Receler by $1.42\%$ overall.

To assess general generation quality of CASteer, we follow prior work and run CASteer with ``nudity'' steering vectors on prompts from COCO-30k~\cite{DBLP:conf/eccv/LinMBHPRDZ14}. We report FID~\cite{DBLP:conf/nips/HeuselRUNH17} for general visual quality and CLIP score~\cite{DBLP:conf/emnlp/HesselHFBC21} for image-prompt alignment. Results are presented in Tab.~\ref{table:sd14_fid}. Both versions of CASteer have better FID than prior art. 

Thus, CASteer clearly is capable of deleting unwanted information while maintaining general high quality. Note that these datasets feature adversarial prompts, i.e., the ``nudity'' concept is encoded in the prompts implicitly.

\begin{figure*}[htbp]
\centering
\begin{minipage}{0.45\linewidth}
\centering
\captionof{table}{\textbf{Evaluation of nudity-erased models.} 
Robustness is measured with nudity prompts from the I2P dataset, while locality is assessed using COCO-30K prompts.}
\label{table:sd14_fid}
\resizebox{0.7\linewidth}{!}{
\begin{tabular}{l | c c}
\toprule
\multirowcell{3}[0pt][c]{Method} & \multicolumn{2}{c}{Locality} \\
{} & \multirowcell{2}[0pt][c]{CLIP-30K($\uparrow$)} & \multirowcell{2}[0pt][c]{FID-30K($\downarrow$)} \\
{} & {} & {} \\
\hline
\addlinespace[0.1em]
SD v1.4 & 31.34 & 14.04\\
\midrule
FMN & 30.39 & 13.52 \\
CA & \textbf{31.37} & 16.25 \\
AdvUn & 28.14 & 17.18 \\
Receler & 30.49 & 15.32 \\
MACE & 29.41 & 13.42 \\
CPE & \textbf{31.19} & 13.89 \\
UCE & 30.85 & 14.07 \\
SLD-M & 30.90 & 16.34 \\
ESD-x & 30.69 & 14.41 \\
ESD-u & 30.21 & 15.10 \\
SAeUron  & 30.89 & 14.37 \\
\textit{Ours (w/o clip)} & 30.69 & \underline{13.28} \\
\textit{Ours (clip)} & \underline{31.09} & \textbf{13.02} \\
\bottomrule
\end{tabular}}
\end{minipage}
\hfill
\begin{minipage}{0.5\linewidth}
\centering
\includegraphics[width=\linewidth]{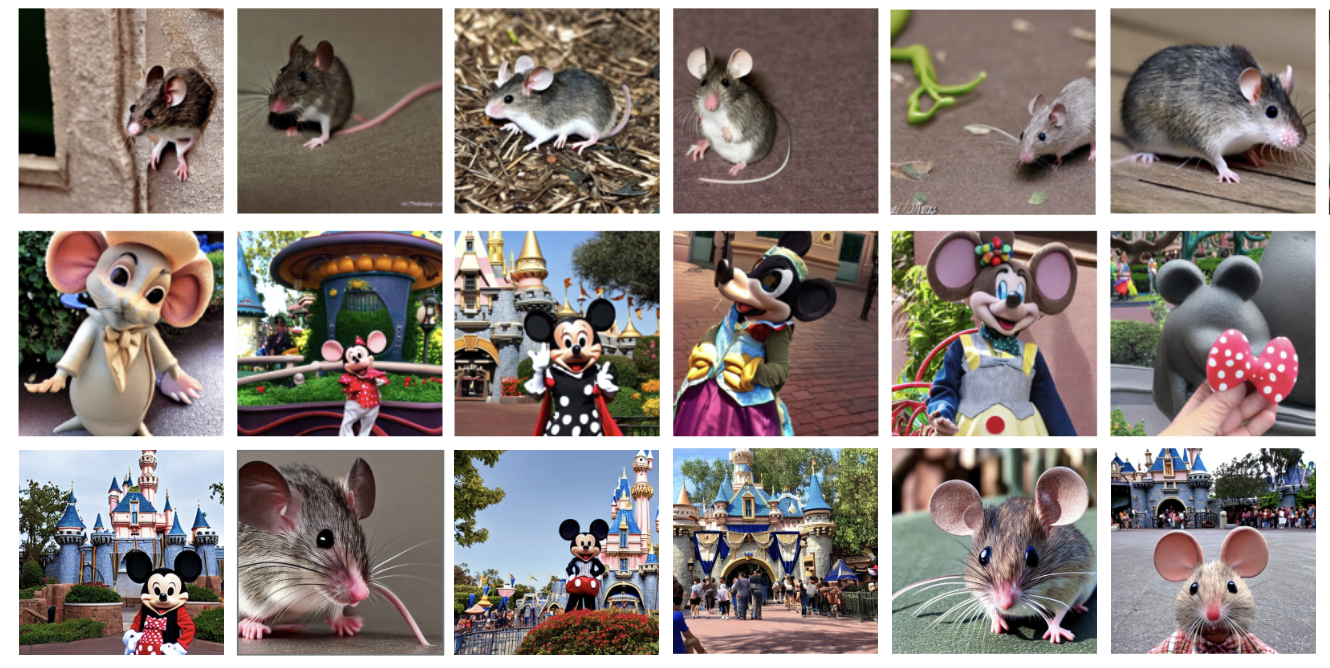}
\caption{SPM and DoCo failure in removing implicitly defined concepts (SD-1.4). 
Top: CASteer, Middle: SPM, Bottom: DoCo. 
Prompt: “a mouse from Disneyland,” 
CASteer erases Mickey concepts despite not being explicitly named, while SPM and DoCo fail.}
\label{fig:spm_fail_main}
\end{minipage}
\end{figure*}

\noindent \textbf{Concrete concepts erasure.}
To assess ability of CASteer to remove concrete concepts, we follow the experimental setup of SPM~\cite{DBLP:conf/cvpr/Lyu0HCJ00HD24}.
In this setting, the concept to be erased is $\textit{Snoopy}$, and images of five additional concepts (\textit{Mickey}, \textit{Spongebob}, \textit{Pikachu}, \textit{dog} and \textit{legislator}) are generated to test the capability of the method to preserve content not related to the concept being removed.
The first four of these are specifically chosen to be semantically close to the concept being removed to show the model's ability to perform precise erasure. 
Following SPM~\cite{DBLP:conf/cvpr/Lyu0HCJ00HD24} and DoCo~\cite{wu2025unlearning}, we augment each concept using $80 $ CLIP~\cite{DBLP:conf/icml/RadfordKHRGASAM21} templates, and generate $10$ for each concept-template pair, so that for each concept there are $800$ images. 
%
%
We evaluate the results using two metrics. First, we utilize CLIP Score (CS)~\cite{DBLP:conf/emnlp/HesselHFBC21} to confirm the level of the existence of the concept within the generated content. 
Second, we calculate FID~\cite{DBLP:conf/nips/HeuselRUNH17} scores between the set of original generations of SD-1.4 model and a set of generations of the steered model. We use it to assess how much images of additional (non-Snoopy) concepts generated by the steered model differ from those of generated by the original model. A higher FID value demonstrates more severe generation alteration. 
We present the results in Fig.~\ref{fig:snoopy_plots}, showing two types of plots. Fig.~\ref{fig:snoopy_clip_vs_clip} pictures normalized clip score of source concept, i.e. ``Snoopy'' (the lower the better) versus mean normalized clip scores of other concepts (the higher the better). Normalization is done to ensure equal importance of all the concepts in the mean, and done by dividing clip score of images produced by erasing method by clip score of images produced by vanilla SD-1.4. 
Methods on the left of the plot erase Snoopy well, and methods on top of the plot preserve other concepts well. 
Fig.~\ref{fig:snoopy_clip_vs_fid} pictures normalized clip score of source concept versus mean FID scores of other concepts (the lower the better). 
Methods on the left of the plot erase Snoopy well, and methods on top of the plot tend not to affect images of other concepts much. \\
%
Results show that CASteer maintains good balance between erasing unwanted concept, while preserving other concepts intact. ESD ~\cite{DBLP:conf/iccv/GandikotaMFB23} and Receler erase Snoopy well, but also highly affect other concepts, especially related ones such as \textit{Mickey} or \textit{Spongebob}. Note that their CS of unrelated concepts (e.g. ``Mickey'' or ``Spongebob'') are significantly lower than that of original SD-1.4, indicating that these concepts are also being affected when erasure of ``Snoopy'' is done. High FID of these methods on these concepts supports this observation.
SAFREE shows a reduced level of \textit{Snoopy} erasure compared to that of CASteer, and has lower CS and higher FID on all the other concepts.
SPM keeps unrelated concepts almost intact (High CS and low FID), but has a much lower intensity of \textit{Snoopy} erasure. Moreover, SPM fails to erase implicitly defined concepts (see Fig.~\ref{fig:spm_fail_main} and Sec.~\ref{sec:user_study} in the Appendix).
%
%
We provide qualitative results on comparisons in  Fig.~\ref{fig:main_qual_snoopy_removal} and in the Appendix Sec.~\ref{sec:sdxl_res} and Sec.~\ref{sec:sana_res}.

\begin{figure}[htbp]
\centering
    \begin{subfigure}{0.4\textwidth}
      \centering
      \includegraphics[width=\linewidth]{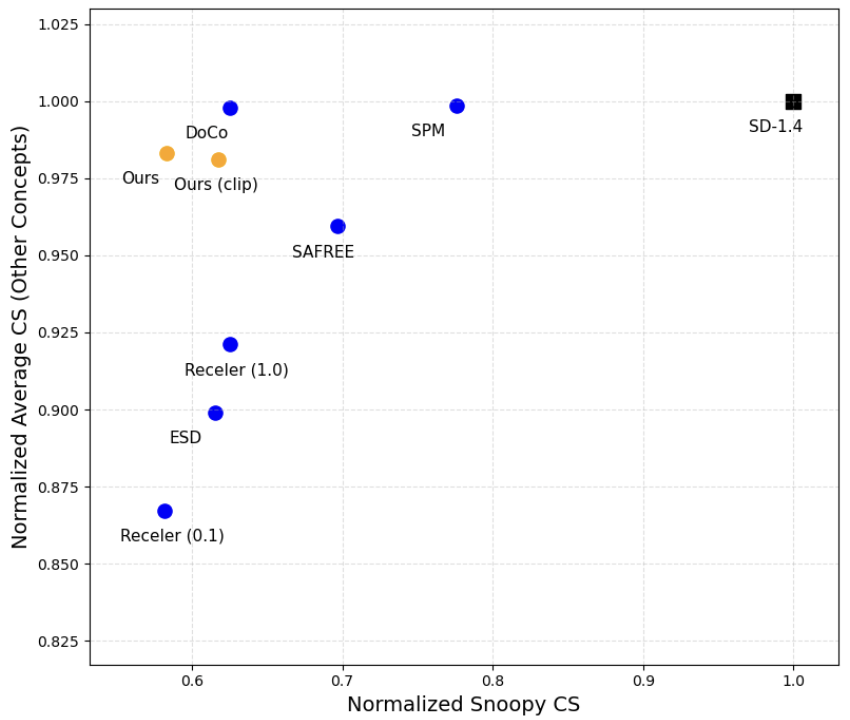}
      \caption{Normalized CLIP score on ``Snoopy'' vs normalized CLIP scores of other concepts}
      \label{fig:snoopy_clip_vs_clip}
    \end{subfigure}
    \hspace{0.5cm}
    \begin{subfigure}{0.4\textwidth}
      \includegraphics[width=\linewidth]{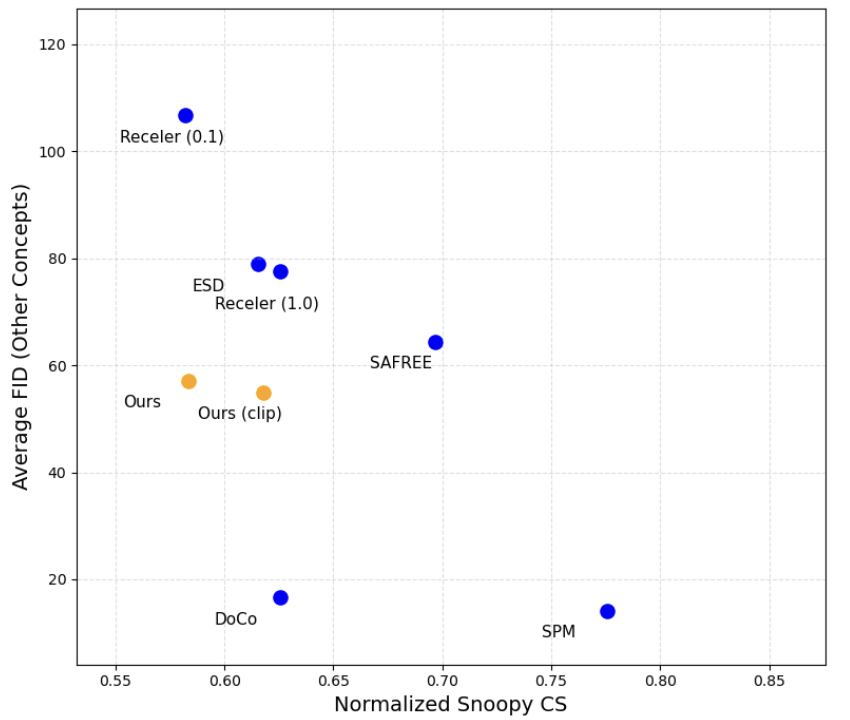}
      \caption{Normalized CLIP score on ``Snoopy'' vs FID scores of other concepts}
      \label{fig:snoopy_clip_vs_fid}
    \end{subfigure}
    \vspace{-0.3cm}
    \caption{Comparison of various methods on concrete concept erasure (removing ``Snoopy'')}
\label{fig:snoopy_plots}
\end{figure}

\noindent \textbf{Style erasure.} Following SAFREE~\cite{DBLP:journals/corr/abs-2410-12761} and ESD~\cite{DBLP:conf/iccv/GandikotaMFB23}, we also evaluate our method on two \textbf{artist-style removal
tasks}. One task focuses on the styles of five famous artists (Van Gogh, Picasso, Rembrandt, Warhol,
Caravaggio) and the other uses five modern artists (McKernan, Kinkade, Edlin, Eng, Ajin: Demi-Human), with the task being removing the style of Van Gogh and McKernan. Following SAFREE~\cite{DBLP:journals/corr/abs-2410-12761}, we use LPIPS~\cite{8578166} and prompt GPT-4o to identify an artist on generated images as evaluation metrics. 

%
%

\begin{table}
\caption{{Comparison of Artist Concept Removal tasks}: Famous (left)  and Modern artists (right).}
\label{tab:sd14_t2i_art}
\small
\centering
\setlength{\tabcolsep}{1.mm}
\resizebox{0.65\textwidth}{!}{
\begin{tabular}{l|cccc|cccc}
\toprule
&\multicolumn{4}{c}{\textbf{Remove ``Van Gogh''}}&\multicolumn{4}{c}{\textbf{Remove ``Kelly McKernan''}}\\ 
\cmidrule(lr){2-5}
\cmidrule(lr){6-9}
\textbf{Method}&
\textbf{LPIPS}$_e \uparrow$ &\textbf{LPIPS}$_u \downarrow$ & \textbf{Acc}$_e \downarrow$& \textbf{Acc}$_u \uparrow$&
\textbf{LPIPS}$_e \uparrow$ &\textbf{LPIPS}$_u \downarrow$ & \textbf{Acc}$_e \downarrow$& \textbf{Acc}$_u \uparrow$\\
\midrule
SD-v1.4&-&-&0.95&0.95&-&-&0.80&0.83\\
\midrule
CA~\citep{DBLP:conf/iccv/KumariZWS0Z23}&0.30&0.13&0.65&0.90&0.22&0.17&0.50&0.76\\
RECE~\citep{DBLP:conf/eccv/GongCWCJ24}&0.31&\underline{0.08}&0.80&\underline{0.93}&0.29&\underline{0.04}&0.55&0.76\\
UCE~\citep{DBLP:conf/wacv/GandikotaOBMB24}&0.25&\textbf{0.05}&0.95&\textbf{0.98}&0.25&\textbf{0.03}&0.80&\textbf{0.81}\\
\midrule
SLD-Medium~\citep{DBLP:conf/cvpr/SchramowskiBDK23}&0.21&0.10&0.95&0.91&0.22&0.18&0.50&0.79\\
SAFREE \citep{DBLP:journals/corr/abs-2410-12761} &0.42&\underline{0.31}&\underline{0.35}&0.85&\underline{0.40}&0.39&\underline{0.40}&0.78\\ \midrule
Ours (w/o clip)& \textbf{0.46} & \underline{0.31} & \underline{0.35} & 0.88 & \textbf{0.54} & 0.27 & \textbf{0.05} & \underline{0.80} \\
Ours (clip)& \underline{0.44} & \textbf{0.30} &  \textbf{0.25} & 0.86 & \textbf{0.54} & 0.28 & \textbf{0.05} &  \textbf{0.81} \\
\bottomrule
\end{tabular}
}
\vspace{-0.5cm}
\end{table}

We follow SAFREE~\cite{DBLP:journals/corr/abs-2410-12761} evaluation procedure, please refer 
to it or our Appendix Sec.~\ref{sec:details_style_removal} for more details on the procedure and metrics.
From Tab.~\ref{tab:sd14_t2i_art}, we see that CASteer achieves the best results in style removal (see columns LPIPS$_e$ and Acc$_e$), while preserving other styles well (see columns LPIPS$_u$ and Acc$_u$). 
Among all the approaches, CASteer achieves great balance between target style removal and preservation of other styles.

\noindent \textbf{CASteer is capable of erasing implicitly defined concepts.} We check what happens if we define the prompts implicitly, e.g., ``A mouse from Disneyland''. We run CASteer, SPM and DoCo trained on the \textit{Mickey} concept on these prompts and show the results in Fig. \ref{fig:spm_fail_main}.
We clearly see that SPM and DoCo fail to erase the concepts when they are not explicitly defined.
In contrast, our method does a much better job of erasing the concepts, despite being implicitly defined. This is also supported by the results on nudity erasure in Tab.~\ref{tab:sd14_nudity}, as considered datasets contain specially selected adversarial nudity prompts.
We provide additional results on implicitly defined prompts in the Appendix Sec.~\ref{sec:spm_casteer}. 

\noindent \textbf{Overall} experimental results show that CASteer performs precise erasure of both concrete and abstract concepts and concepts defined implicitly while leaving other concepts intact and not affecting the overall quality of generated images.
%
\vspace{-0.3cm}

\begin{figure*}[t]
    \centering
    \includegraphics[width=0.9\linewidth]{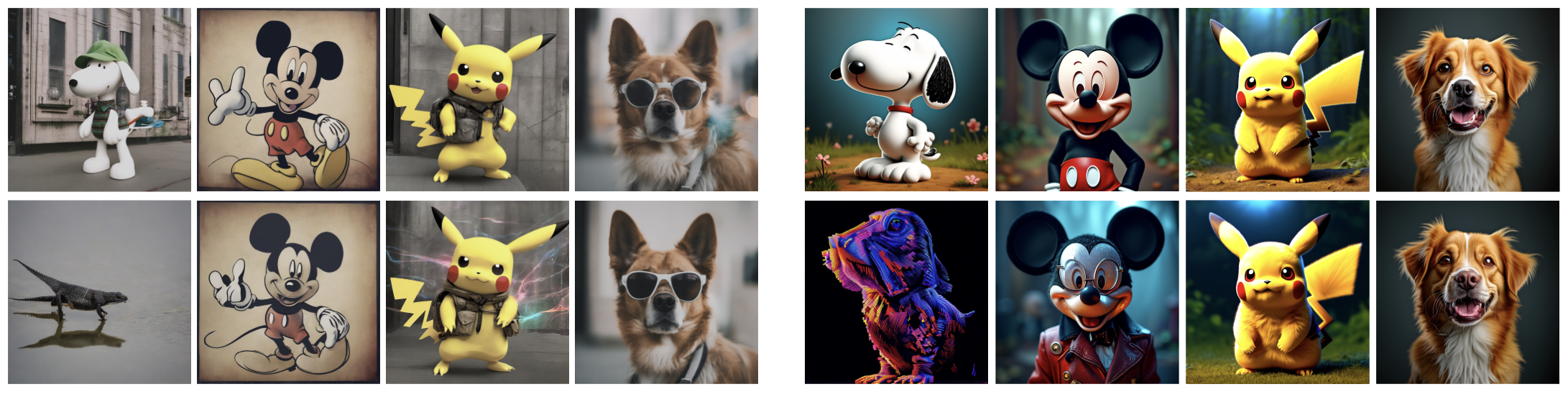}
    \caption{Qualitative results on SDXL (left) and SANA (right) on removing ``Snoopy''. Top: original model generations, bottom: generations of model steered to remove ``Snoopy''}
    \label{fig:main_qual_snoopy_removal}
    \vspace{-0.3cm}
\end{figure*}

\begin{figure*}[t]
    \centering
    \includegraphics[width=0.9\linewidth]{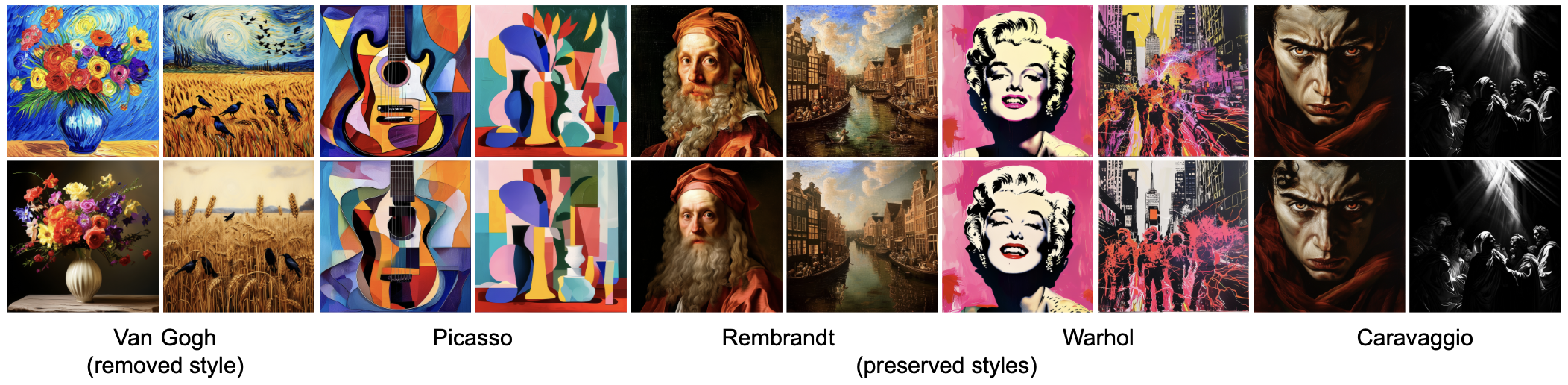}
    \caption{Qualitative results on SANA (right) on removing style of ``Van Gogh''. Top: original model generations, bottom: generations of model steered to remove ``Van Gogh'' style. Left 2 images correspond to prompts with ``Van Gogh style'', images on the right correspond to prompts mentioning other artists ( ``Picasso'',  ``Rembrandt'',  ``Warhol'',  ``Caravaggio'')}
    \label{fig:modern}
    \vspace{-0.5cm}
\end{figure*}

\subsection{Ablation Study}
\vspace{-0.2cm}

\noindent \textbf{Steering other layers.} As mentioned in Sec.~\ref{3.1}, we ablate to determine for which type of layer in the DiT backbone steering is most effective. We show in the Appendix Sec.~\ref{sec:ablations} that steering the CA outputs is the most effective. We also ablate steering only a fraction of CA layers in Sec.~\ref{sec:steering_fraction}. 

\noindent \textbf{Number of prompt pairs to construct steering vectors.} In the Appendix Sec.\ref{sec:ablations}, we provide an ablation on the number of prompt pairs needed to produce high-quality outputs after steering. We find that as little as 50 prompts is enough for the steering vectors to capture the desired concept well.  

\noindent \textbf{Interpretation of steering vectors.} 
Here, we propose a way to interpret the meaning of steering vectors generated by CASteer. Suppose we have steering vectors generated for a concept $X$ $\{ca^{X}_{it}\}, 1 \leqslant i \leqslant l, 1 \leqslant t \leqslant T$, where $l$ is the number of model layers and $T$ is the number of denoising steps performed for generating steering vectors. To interpret these vectors, we prompt the diffusion model with a placeholder prompt ``X'' and at each denoising step, we substitute outputs of the model's CA layers with corresponding steering vectors. This conditions the diffusion model only on the information from the steering vectors, completely suppressing other information from the text prompt. Results are presented in Fig.~\ref{fig:interpretation} and in the Appendix Sec.~\ref{sec:interpretation}.  \\
\noindent \textbf{UMap.} We  generate steering vectors for vocabulary tokens of SDXL text encoder and apply UMap \cite{DBLP:journals/corr/abs-1802-03426} on these steering vectors. We present the results in the Appendix Sec.~\ref{sec:interpretation}, showing that structure emerges in the space of these steering vectors, similar to that of Word2Vec \cite{DBLP:journals/corr/abs-1301-3781}, supporting that steering vectors carry the meaning of the desired concept. 

\noindent \textbf{Modern models.} We show qualitative results in SANA and SDXL in Fig. \ref{fig:modern}. We provide more qualitative and quantitative results in  SDXL (Sec. \ref{sdxl_res}) and SANA (Sec. \ref{sec:sana_res}).

\noindent \textbf{User studies.} In Sec. \ref{sec:user_study}, we give several user studies, showing that in most cases, the users prefer our results compared to SPM and Receler.

\section{Conclusion}
\label{sec:conclusion}
We presented CASteer, a novel training-free method for controllable concept erasure in diffusion models.
%
CASteer works by using steering vectors in the cross-attention layers of diffusion models.
We show that CASteer is general and versatile to work with different versions of diffusion, including distilled models.
CASteer reaches state-of-the-art results in concept erasure on different evaluation benchmarks while producing visually pleasing images.

\clearpage

\section{Acknowledgments}
This work was supported by a Google DeepMind PhD Studentship and the Engineering and Physical Sciences Research Council [grant number EP/Y009800/1], through funding from Responsible Ai UK (KP0016)

\bibliography{iclr2026_conference}

@String(CVPR= {IEEE Conf. Comput. Vis. Pattern Recog.})

@String(ICCV= {Int. Conf. Comput. Vis.})

@String(ECCV= {Eur. Conf. Comput. Vis.})

@String(ICLR = {Int. Conf. Learn. Represent.})

@String(AAAI = {AAAI})

@String(CVPRW= {IEEE Conf. Comput. Vis. Pattern Recog. Worksh.})

@String(CVPR  = {CVPR})

@String(ICCV  = {ICCV})

@String(ECCV  = {ECCV})

@String(ICLR  = {ICLR})

@String(CVPRW= {CVPRW})

@inproceedings{DBLP:conf/nips/HoJA20,
  author       = {Jonathan Ho and
                  Ajay Jain and
                  Pieter Abbeel},
  title        = {Denoising Diffusion Probabilistic Models},
  booktitle    = {NeurIPS},
  year         = {2020},
}

@misc{adobefirefly,
  title={Responsible innovation in the age of generative {AI}},
  author={Dana Rao},
  year={2023},
  howpublished={Adobe Blog},
}

@inproceedings{DBLP:conf/cvpr/RombachBLEO22,
  author       = {Robin Rombach and
                  Andreas Blattmann and
                  Dominik Lorenz and
                  Patrick Esser and
                  Bj{\"{o}}rn Ommer},
  title        = {High-Resolution Image Synthesis with Latent Diffusion Models},
  booktitle    = {CVPR},
  year         = {2022}
}

@article{DBLP:journals/corr/abs-2006-11807,
  author       = {Zhan Shi and
                  Xu Zhou and
                  Xipeng Qiu and
                  Xiaodan Zhu},
  title        = {Improving Image Captioning with Better Use of Captions},
  journal      = {CoRR},
  volume       = {abs/2006.11807},
  year         = {2020},
}

@inproceedings{DBLP:conf/eccv/SauerLBR24,
  author       = {Axel Sauer and
                  Dominik Lorenz and
                  Andreas Blattmann and
                  Robin Rombach},
  title        = {Adversarial Diffusion Distillation},
  booktitle    = {ECCV},
  year         = {2022},
}

@inproceedings{DBLP:conf/nips/SchuhmannBVGWCC22,
  author       = {Christoph Schuhmann and
                  Romain Beaumont and
                  Richard Vencu and
                  Cade Gordon and
                  Ross Wightman and
                  Mehdi Cherti and
                  Theo Coombes and
                  Aarush Katta and
                  Clayton Mullis and
                  Mitchell Wortsman and
                  Patrick Schramowski and
                  Srivatsa Kundurthy and
                  Katherine Crowson and
                  Ludwig Schmidt and
                  Robert Kaczmarczyk and
                  Jenia Jitsev},
  title        = {{LAION-5B:} An open large-scale dataset for training next generation
                  image-text models},
  booktitle    = {NeurIPS},
  year         = {2022}
}

@inproceedings{DBLP:conf/cvpr/SchramowskiBDK23,
  author       = {Patrick Schramowski and
                  Manuel Brack and
                  Bj{\"{o}}rn Deiseroth and
                  Kristian Kersting},
  title        = {Safe Latent Diffusion: Mitigating Inappropriate Degeneration in Diffusion
                  Models},
  booktitle    = {CVPR},
  year         = {2023},
}

@misc{nudenet,
  title={{NudeNet}: Neural nets for nudity detection and censoring},
  author={Praneeth Bedapudi},
  year={2022},
  howpublished={\url{https://github.com/notAI-tech/NudeNet}},
}

@article{DBLP:journals/corr/abs-2210-04610,
  author       = {Javier Rando and
                  Daniel Paleka and
                  David Lindner and
                  Lennart Heim and
                  Florian Tram{\`{e}}r},
  title        = {Red-Teaming the Stable Diffusion Safety Filter},
  journal      = {CoRR},
  volume       = {abs/2210.04610},
  year         = {2022},
}

@misc{remove_nsfw_detection,
  title={Tutorial: How to remove the safety filter in 5 seconds},
  author={SmithMano},
  year={2022},
  howpublished={Reddit post},
}

@inproceedings{DBLP:conf/iccv/GandikotaMFB23,
  author       = {Rohit Gandikota and
                  Joanna Materzynska and
                  Jaden Fiotto{-}Kaufman and
                  David Bau},
  title        = {Erasing Concepts from Diffusion Models},
  booktitle    = {ICCV},
  year         = {2023},
}

@inproceedings{DBLP:conf/cvpr/Lyu0HCJ00HD24,
  author       = {Mengyao Lyu and
                  Yuhong Yang and
                  Haiwen Hong and
                  Hui Chen and
                  Xuan Jin and
                  Yuan He and
                  Hui Xue and
                  Jungong Han and
                  Guiguang Ding},
  title        = {One-dimensional Adapter to Rule Them All: Concepts, Diffusion Models
                  and Erasing Applications},
  booktitle    = {CVPR},
  year         = {2024},
}

@inproceedings{DBLP:conf/iccv/KumariZWS0Z23,
  author       = {Nupur Kumari and
                  Bingliang Zhang and
                  Sheng{-}Yu Wang and
                  Eli Shechtman and
                  Richard Zhang and
                  Jun{-}Yan Zhu},
  title        = {Ablating Concepts in Text-to-Image Diffusion Models},
  booktitle    = {ICCV},
  year         = {2023}
}

@inproceedings{DBLP:conf/nips/HengS23,
  author       = {Alvin Heng and
                  Harold Soh},
  title        = {Selective Amnesia: {A} Continual Learning Approach to Forgetting in
                  Deep Generative Models},
  booktitle    = {NeurIPS},
  year         = {2023}
}

@inproceedings{DBLP:conf/nips/ShinLKK17,
  author       = {Hanul Shin and
                  Jung Kwon Lee and
                  Jaehong Kim and
                  Jiwon Kim},
  title        = {Continual Learning with Deep Generative Replay},
  booktitle    = {NeurIPS},
  year         = {2017},
}

@inproceedings{DBLP:conf/iclr/HuSWALWWC22,
  author       = {Edward J. Hu and
                  Yelong Shen and
                  Phillip Wallis and
                  Zeyuan Allen{-}Zhu and
                  Yuanzhi Li and
                  Shean Wang and
                  Lu Wang and
                  Weizhu Chen},
  title        = {LoRA: Low-Rank Adaptation of Large Language Models},
  booktitle    = {ICLR},
  year         = {2022}
}

@inproceedings{DBLP:conf/cvpr/LuWLLK24,
  author       = {Shilin Lu and
                  Zilan Wang and
                  Leyang Li and
                  Yanzhu Liu and
                  Adams Wai{-}Kin Kong},
  title        = {{MACE:} Mass Concept Erasure in Diffusion Models},
  booktitle    = {CVPR},
  year         = {2024}
}

@inproceedings{DBLP:journals/corr/abs-2410-12761,
  author       = {Jaehong Yoon and
                  Shoubin Yu and
                  Vaidehi Patil and
                  Huaxiu Yao and
                  Mohit Bansal},
  title        = {{SAFREE:} Training-Free and Adaptive Guard for Safe Text-to-Image
                  And Video Generation},
  booktitle      = {ICLR},
  year         = {2024}
}

@article{DBLP:journals/corr/abs-2305-16807,
  author       = {Daiki Miyake and
                  Akihiro Iohara and
                  Yu Saito and
                  Toshiyuki Tanaka},
  title        = {Negative-prompt Inversion: Fast Image Inversion for Editing with Text-guided
                  Diffusion Models},
  journal      = {CoRR},
  volume       = {abs/2305.16807},
  year         = {2023}
}

@inproceedings{DBLP:conf/iclr/HertzMTAPC23,
  author       = {Amir Hertz and
                  Ron Mokady and
                  Jay Tenenbaum and
                  Kfir Aberman and
                  Yael Pritch and
                  Daniel Cohen{-}Or},
  title        = {Prompt-to-Prompt Image Editing with Cross-Attention Control},
  booktitle    = {ICLR},
  year         = {2023}
}

@inproceedings{DBLP:conf/icml/RadfordKHRGASAM21,
  author       = {Alec Radford and
                  Jong Wook Kim and
                  Chris Hallacy and
                  Aditya Ramesh and
                  Gabriel Goh and
                  Sandhini Agarwal and
                  Girish Sastry and
                  Amanda Askell and
                  Pamela Mishkin and
                  Jack Clark and
                  Gretchen Krueger and
                  Ilya Sutskever},
  title        = {Learning Transferable Visual Models From Natural Language Supervision},
  booktitle    = {ICML},
  year         = {2021}
}

@inproceedings{DBLP:conf/iclr/PodellELBDMPR24,
  author       = {Dustin Podell and
                  Zion English and
                  Kyle Lacey and
                  Andreas Blattmann and
                  Tim Dockhorn and
                  Jonas M{\"{u}}ller and
                  Joe Penna and
                  Robin Rombach},
  title        = {{SDXL:} Improving Latent Diffusion Models for High-Resolution Image
                  Synthesis},
  booktitle    = {ICLR},
  year         = {2024},
}

@inproceedings{emu-video,
  author       = {Rohit Girdhar and
                  Mannat Singh and
                  Andrew Brown and
                  Quentin Duval and
                  Samaneh Azadi and
                  Sai Saketh Rambhatla and
                  Akbar Shah and
                  Xi Yin and
                  Devi Parikh and
                  Ishan Misra},
  title        = {Emu Video: Factorizing Text-to-Video Generation by Explicit Image
                  Conditioning},
  booktitle    = {ECCV},
  year         = {2024},
}

@inproceedings{DBLP:conf/wacv/GandikotaOBMB24,
  author       = {Rohit Gandikota and
                  Hadas Orgad and
                  Yonatan Belinkov and
                  Joanna Materzynska and
                  David Bau},
  title        = {Unified Concept Editing in Diffusion Models},
  booktitle    = {WACV},
  year         = {2024},
}

@inproceedings{DBLP:conf/cvpr/0010S00G24,
  author       = {Hang Li and
                  Chengzhi Shen and
                  Philip Torr and
                  Volker Tresp and
                  Jindong Gu},
  title        = {Self-Discovering Interpretable Diffusion Latent Directions for Responsible
                  Text-to-Image Generation},
  booktitle    = {CVPR},
  year         = {2024},
}

@inproceedings{DBLP:conf/eccv/GongCWCJ24,
  author       = {Chao Gong and
                  Kai Chen and
                  Zhipeng Wei and
                  Jingjing Chen and
                  Yu{-}Gang Jiang},
  title        = {Reliable and Efficient Concept Erasure of Text-to-Image Diffusion
                  Models},
  booktitle    = {ECCV},
  year         = {2024},
}

@inproceedings{DBLP:conf/iclr/KwonJU23,
  author       = {Mingi Kwon and
                  Jaeseok Jeong and
                  Youngjung Uh},
  title        = {Diffusion Models Already Have {A} Semantic Latent Space},
  booktitle    = {ICLR},
  year         = {2023}
}

@inproceedings{DBLP:conf/nips/ParkKCJU23,
  author       = {Yong{-}Hyun Park and
                  Mingi Kwon and
                  Jaewoong Choi and
                  Junghyo Jo and
                  Youngjung Uh},
  title        = {Understanding the Latent Space of Diffusion Models through the Lens
                  of Riemannian Geometry},
  booktitle    = {NeurIPS},
  year         = {2023}
}

@inproceedings{DBLP:conf/cvpr/SiH0024,
  author       = {Chenyang Si and
                  Ziqi Huang and
                  Yuming Jiang and
                  Ziwei Liu},
  title        = {FreeU: Free Lunch in Diffusion U-Net},
  booktitle    = {CVPR},
  year         = {2024},
}

@inproceedings{DBLP:conf/cvpr/TumanyanGBD23,
  author       = {Narek Tumanyan and
                  Michal Geyer and
                  Shai Bagon and
                  Tali Dekel},
  title        = {Plug-and-Play Diffusion Features for Text-Driven Image-to-Image Translation},
  booktitle    = {CVPR},
  year         = {2023},
}

@inproceedings{DBLP:conf/emnlp/HesselHFBC21,
  author       = {Jack Hessel and
                  Ari Holtzman and
                  Maxwell Forbes and
                  Ronan Le Bras and
                  Yejin Choi},
  title        = {CLIPScore: {A} Reference-free Evaluation Metric for Image Captioning},
  booktitle    = {EMNLP},
  year         = {2021},
}

@inproceedings{DBLP:conf/nips/HeuselRUNH17,
  author       = {Martin Heusel and
                  Hubert Ramsauer and
                  Thomas Unterthiner and
                  Bernhard Nessler and
                  Sepp Hochreiter},
  title        = {GANs Trained by a Two Time-Scale Update Rule Converge to a Local Nash
                  Equilibrium},
  booktitle    = {NeurIPS},
  year         = {2017}
}

@inproceedings{DBLP:conf/eccv/LinMBHPRDZ14,
  author       = {Tsung{-}Yi Lin and
                  Michael Maire and
                  Serge J. Belongie and
                  James Hays and
                  Pietro Perona and
                  Deva Ramanan and
                  Piotr Doll{\'{a}}r and
                  C. Lawrence Zitnick},
  title        = {Microsoft {COCO:} Common Objects in Context},
  booktitle    = {ECCV},
  year         = {2014},
}

@INPROCEEDINGS{8578166,
  author={Zhang, Richard and Isola, Phillip and Efros, Alexei A. and Shechtman, Eli and Wang, Oliver},
  booktitle={CVPR}, 
  title={The Unreasonable Effectiveness of Deep Features as a Perceptual Metric}, 
  year={2018},
}

@inproceedings{DBLP:conf/iclr/SongME21,
  author       = {Jiaming Song and
                  Chenlin Meng and
                  Stefano Ermon},
  title        = {Denoising Diffusion Implicit Models},
  booktitle    = {ICLR},
  year         = {2021},
}

@article{elhage2021mathematical,
   title={A Mathematical Framework for Transformer Circuits},
   author={Elhage, Nelson and Nanda, Neel and Olsson, Catherine and Henighan, Tom and Joseph, Nicholas and Mann, Ben and Askell, Amanda and Bai, Yuntao and Chen, Anna and Conerly, Tom and DasSarma, Nova and Drain, Dawn and Ganguli, Deep and Hatfield-Dodds, Zac and Hernandez, Danny and Jones, Andy and Kernion, Jackson and Lovitt, Liane and Ndousse, Kamal and Amodei, Dario and Brown, Tom and Clark, Jack and Kaplan, Jared and McCandlish, Sam and Olah, Chris},
   year={2021},
   journal={Transformer Circuits Thread},
   note={https://transformer-circuits.pub/2021/framework/index.html}
}

@inproceedings{DBLP:conf/nips/WuGIPG23,
  author       = {Zhengxuan Wu and
                  Atticus Geiger and
                  Thomas Icard and
                  Christopher Potts and
                  Noah D. Goodman},
  title        = {Interpretability at Scale: Identifying Causal Mechanisms in Alpaca},
  booktitle    = {NeurIPS},
  year         = {2023},
}

@inproceedings{DBLP:conf/iccv/PeeblesX23,
  author       = {William Peebles and
                  Saining Xie},
  title        = {Scalable Diffusion Models with Transformers},
  booktitle    = {ICCV},
  year         = {2023},
}

@article{DBLP:journals/corr/abs-2410-02710,
  author       = {Hongxiang Zhang and
                  Yifeng He and
                  Hao Chen},
  title        = {SteerDiff: Steering towards Safe Text-to-Image Diffusion Models},
  journal      = {CoRR},
  volume       = {abs/2410.02710},
  year         = {2024},
  url          = {https://doi.org/10.48550/arXiv.2410.02710},
  doi          = {10.48550/ARXIV.2410.02710},
  eprinttype    = {arXiv},
  eprint       = {2410.02710},
  timestamp    = {Thu, 07 Nov 2024 15:42:48 +0100},
  biburl       = {https://dblp.org/rec/journals/corr/abs-2410-02710.bib},
  bibsource    = {dblp computer science bibliography, https://dblp.org}
}

@article{DBLP:journals/corr/abs-2411-10329,
  author       = {Huming Qiu and
                  Guanxu Chen and
                  Mi Zhang and
                  Min Yang},
  title        = {Safe Text-to-Image Generation: Simply Sanitize the Prompt Embedding},
  journal      = {CoRR},
  volume       = {abs/2411.10329},
  year         = {2024},
  url          = {https://doi.org/10.48550/arXiv.2411.10329},
  doi          = {10.48550/ARXIV.2411.10329},
  eprinttype    = {arXiv},
  eprint       = {2411.10329},
  timestamp    = {Wed, 01 Jan 2025 13:20:07 +0100},
  biburl       = {https://dblp.org/rec/journals/corr/abs-2411-10329.bib},
  bibsource    = {dblp computer science bibliography, https://dblp.org}
}

@inproceedings{DBLP:journals/corr/abs-1301-3781,
  author       = {Tom{\'{a}}s Mikolov and
                  Kai Chen and
                  Greg Corrado and
                  Jeffrey Dean},
  title        = {Efficient Estimation of Word Representations in Vector Space},
  booktitle    = {ICLR},
  year         = {2013},
}

@article{DBLP:journals/corr/abs-1802-03426,
  author       = {Leland McInnes and
                  John Healy},
  title        = {{UMAP:} Uniform Manifold Approximation and Projection for Dimension
                  Reduction},
  journal      = {CoRR},
  volume       = {abs/1802.03426},
  year         = {2018},
  url          = {http://arxiv.org/abs/1802.03426},
  eprinttype    = {arXiv},
  eprint       = {1802.03426},
  timestamp    = {Tue, 17 Sep 2019 14:15:10 +0200},
  biburl       = {https://dblp.org/rec/journals/corr/abs-1802-03426.bib},
  bibsource    = {dblp computer science bibliography, https://dblp.org}
}

@inproceedings{cywinski2025saeuron,
    author={Bartosz Cywi{\'n}ski and Kamil Deja},
    title={{SA}eUron: Interpretable Concept Unlearning in Diffusion Models with Sparse Autoencoders},
    booktitle={ICML},
    year={2025},
}

@article{huang2023receler,
  author={Huang, Chi-Pin and Chang, Kai-Po and Tsai, Chung-Ting and Lai, Yung-Hsuan and Yang, Fu-En and Wang, Yu-Chiang Frank},
  title={Receler: Reliable concept erasing of text-to-image diffusion models via lightweight erasers},
  journal={ECCV},
  year={2024}
}

@InProceedings{lee2024cpe,
    author    = {Lee, Byung Hyun and Lim, Sungjin and Lee, Seunggyu and Kang, Dong Un and Chun, Se Young},
    title     = {Concept Pinpoint Eraser for Text-to-image Diffusion Models via Residual Attention Gate},
    booktitle = {ICLR},
    year      = {2025},
}

@article{zhang2023forgetmenot,
      title={Forget-Me-Not: Learning to Forget in Text-to-Image Diffusion Models}, 
      author={Eric Zhang and Kai Wang and Xingqian Xu and Zhangyang Wang and Humphrey Shi},
      journal={CVPRW},
      year={2024}
}

@inproceedings{zhang2024defensive,
    title={Defensive Unlearning with Adversarial Training for Robust Concept Erasure in Diffusion Models},
    author={Yimeng Zhang and Xin Chen and Jinghan Jia and Yihua Zhang and Chongyu Fan and Jiancheng Liu and Mingyi Hong and Ke Ding and Sijia Liu},
    booktitle={NeurIPS},
    year={2024},
}

@inproceedings{schramowski2022can,
  title={Can Machines Help Us Answering Question 16 in Datasheets, and In Turn Reflecting on Inappropriate Content?},
  author={Patrick Schramowski and Christopher Tauchmann and Kristian Kersting},
  booktitle = {ACM FAccT},
  year={2022}
}

@article{Olshausen1997SparseCW,
  title={Sparse coding with an overcomplete basis set: A strategy employed by V1?},
  author={Bruno A. Olshausen and David J. Field},
  journal={Vision Research},
  year={1997},
  volume={37},
  pages={3311-3325},
  url={https://api.semanticscholar.org/CorpusID:14208692}
}

@inproceedings{
xie2025sana,
title={{SANA}: Efficient High-Resolution Text-to-Image Synthesis with Linear Diffusion Transformers},
author={Enze Xie and Junsong Chen and Junyu Chen and Han Cai and Haotian Tang and Yujun Lin and Zhekai Zhang and Muyang Li and Ligeng Zhu and Yao Lu and Song Han},
booktitle={ICLR},
year={2025},
}

@misc{chen2025sana-sprint,
      title={SANA-Sprint: One-Step Diffusion with Continuous-Time Consistency Distillation},
      author={Junsong Chen and Shuchen Xue and Yuyang Zhao and Jincheng Yu and Sayak Paul and Junyu Chen and Han Cai and Song Han and Enze Xie},
      year={2025},
      eprint={2503.09641},
      archivePrefix={arXiv},
      primaryClass={cs.CV},
      url={https://arxiv.org/abs/2503.09641},
    }

@inproceedings{wu2025unlearning,
  title={Unlearning concepts in diffusion model via concept domain correction and concept preserving gradient},
  author={Wu, Yongliang and Zhou, Shiji and Yang, Mingzhuo and Wang, Lianzhe and Chang, Heng and Zhu, Wenbo and Hu, Xinting and Zhou, Xiao and Yang, Xu},
  booktitle={AAAI},
  year={2025}
}
\bibliographystyle{iclr2026_conference}

\appendix
\clearpage

\appendix


\tableofcontents

\clearpage

\section{Limitations and Broader Impact}

\textbf{Limitations. } While CASteer demonstrates strong performance on a wide range of concept erasure tasks without retraining, several limitations remain. First, the current method is designed specifically for diffusion models with a Transformer-based cross-attention architecture. Its generalizability to architectures that do not use cross-attention has not been evaluated and may require additional methodological adjustments. 
Second, although the construction of steering vectors is training-free, it depends on curated positive and negative prompts, which may introduce human bias and require domain knowledge for effective pairing. 
%
%
While CASteer demonstrates effective and precise control over concept suppression tasks, there is still limited understanding of how steering vectors affect the broader semantic space. Deep learning research has largely moved forward through empirical results, often ahead of solid theoretical explanations. We believe such explanations are both useful and ultimately necessary. Still, important progress has often come from work without clear theory, as seen in the case of the batch normalization paper. We hope our work contributes to ongoing efforts to better understand and apply steering methods, especially to improve control and interpretability in diffusion models.

\textbf{Broader Impact. } CASteer contributes toward the democratization of safe and controllable image generation by offering a lightweight, training-free solution for concept steering in diffusion models. Its ability to remove specific concepts without retraining lowers the barrier for safety interventions in generative models, potentially empowering developers with limited resources to implement moderation tools and creative controls. This is particularly relevant for applications in content moderation, personalized media generation, and bias mitigation. However, CASteer also presents risks. The same mechanisms that allow the removal of harmful or copyrighted content can be used to suppress beneficial or truthful concepts for deceptive purposes. Moreover, the capability to switch or add identities (e.g., celebrity faces) raises ethical concerns regarding consent, misrepresentation, and deepfake generation. While we do not endorse any misuse of this technology, we believe transparency in capabilities and limitations is essential. Further safeguards and usage guidelines should accompany any deployment to ensure CASteer is used responsibly.

\textbf{Future Work. } The construction of steering vectors used by CASteer depends on curated positive and negative prompts, which may introduce human bias and require domain knowledge for effective pairing. Future work on finding the best ways of constructing prompt pairs would be beneficial. Next, deep theoretical understanding of mechanisms behind CASteer would help develop further methods of controlling generation of diffusion models. Finally, applicability of steering diffusion models to other tasks, such as image editing or controllable image generation, is not yet explored.

\textbf{Code. } The code is available at \href{https://github.com/Atmyre/CASteer}{https://github.com/Atmyre/CASteer}. The code was developed and tested using 8 V100 GPUs.

\clearpage
\section{Algorithms} 

We give the algorithms of our model.
In Algorithm \ref{alg:1}, we describe how we compute the steering vectors, while in Algorithm \ref{alg:2}, we describe how we use them to perform concept erasure.
The algorithms closely follow the descriptions in Sec. \ref{method:create_steering} and Sec \ref{method_use_steering}.

\begin{algorithm}[h!]
\caption{Computing steering vectors}\label{alg:cap}
\begin{algorithmic}
\Require Diffusion model $DM$ with $n$ CA layers, number of denoising steps $T$, concepts $X, Y$, $P$ prompt pairs $(\mathcal{P}^{X}_p, \mathcal{P}^{Y}_p)$, $1 \leqslant p \leqslant P$, with $\mathcal{P}^{X}_p$ containing $X$ and $\mathcal{P}^{Y}_p$ containing $Y$, numbers of image patches per layer $\{m_i\}_{i=1}^n$
\State Get $z_T \sim \mathbb{N}(0,I)$ a unit Gaussian random variable;
\State $z_T^{X}  \leftarrow z_T$
\State $z_T^{Y}  \leftarrow z_T$
\For{$p= 1,\dots,P$}
    \For{$t= T,T-1,\dots ,1$}
        \State $z^{Y}_{t-1}, \{ca^{Y}_{itp}\} \leftarrow DM(z^{Y}_t, \mathcal{P}^{Y}_p, t)$, $1 \leqslant i \leqslant n$
        \State $z^{X}_{t-1}, \{ca^{X}_{itp}\} \leftarrow DM(z^{X}_t, \mathcal{P}^{X}_p, t)$, $1 \leqslant i \leqslant n$
    \EndFor
\EndFor
\State $ca^{X\_avg}_{it} = \frac{\sum_{k=1}^{m_i} \sum_{p=1}^P ca^{X}_{itpk}}{Pm_i}$
\State $ca^{Y\_avg}_{it} = \frac{\sum_{k=1}^{m_i} \sum_{p=1}^P ca^{Y}_{itpk}}{Pm_i}$
\State $ca^{X}_{it} = ca^{X\_avg}_{it} - ca^{Y\_avg}_{it}$
\State $ca^{X}_{it} = \frac{ca^{X}_{it}}{||ca^{X}_{it}||^2_2}$ \Comment{Normalize}
\end{algorithmic}
\label{alg:1}
\end{algorithm}

\begin{algorithm}[h!]
\caption{Using steering vectors}\label{alg:cap_2}
\begin{algorithmic}
\Require Diffusion model $DM$ with $n$ CA layers, number of denoising steps $T$, steering vectors $ca^{X}_{it}$ for concept $X$ to remove, input prompt $\mathcal{P}$,
number of image patches on layers$\{m_i\}_{i=1}^n$, steering intensity $\beta$, flag of intermediate clipping $do\_clip$ 
\State Get $z_T \sim \mathbb{N}(0,I)$ a unit Gaussian random variable;
\For{$t= T,T-1,\dots ,1$}
    \For{$i= 1,\dots , n$}
        
        \State $z_{tmp}, ca^{out}_{itk} \leftarrow DM(z^{Y}_t, \mathcal{P}, t)$ 

        \State $\alpha \leftarrow \langle ca^{out}_{it}, ca^{X}_{it} \rangle$
        \If{$do\_clip$}
            \For{$k= 1,\dots ,m_i$} \Comment{Clipping dot product value for each image patch}
                \State $\alpha_{k} \leftarrow \max(\alpha_{k}, 0)$
            \EndFor
        \EndIf
            
        \State $ca^{out\_new}_{it} \leftarrow ca^{out}_{it}-  \alpha ca^{X}_{it}$

        \State $z^{Y}_{t-1} \leftarrow DM(z_{tmp}, ca^{out}_{it})$ \Comment{Continue inference}
        
    \EndFor
\EndFor
\end{algorithmic}
\label{alg:2}
\end{algorithm}

\clearpage

\section{Prompts for generating steering vectors}
\label{sec:prompts}

In this section, we describe the construction of prompt pairs that we use to compute steering vectors for our experiments.

\subsection{Prompts for erasing concrete concepts}
\label{sec:prompts_erasure}
For erasing concrete concepts, we use prompt pairs of the form: 
$$(``\mathbf{p},\ with\ \mathbf{e}", \  ``\mathbf{p}")$$
\noindent Here $\mathbf{p} \in \mathbf{P}$, where $\mathbf{P}$ is a set of $N$ ImageNet classes, and $\mathbf{e}$ describes the concept we want to manipulate, e.g. we use  $\mathbf{e} =$``Snoopy'' for \textit{Snoopy} erasure, $\mathbf{e} =$``Mickey'' for \textit{Mickey} erasure.
\noindent Examples of prompts: \\
$(``junco,\ with\ Snoopy'', \  ``junco")$\\
$(``mud\ turtle,\ with\ Mickey'',$ $``mud\ turtle")$ 

\noindent Inside each prompt pair, the same generation seed is used. \\
\noindent Our main results in Tab.~\ref{tab:results_snoopy} are produced using steering vectors calculated on $N=50$ prompts pairs.


\begin{table}
\caption{\textbf{Quantitative evaluation of concrete object erasure}. The best
results are highlighted in bold, second-best are underlined. Results of other methods are taken from SPM\cite{DBLP:conf/cvpr/Lyu0HCJ00HD24} or reproduced.}
\label{tab:results_snoopy}
\centering
\setlength{\tabcolsep}{2.0pt}
\resizebox{0.65\textwidth}{!}{
\definecolor{mygray}{gray}{.9}
\begin{tabular}{c|c|cc|cc|cc|cc|cc}
    \toprule
    
    & \multicolumn{1}{c|}{Snoopy} & \multicolumn{2}{c|}{Mickey} & \multicolumn{2}{c|}{Spongebob} & \multicolumn{2}{c|}{Pikachu} & \multicolumn{2}{c|}{Dog} & \multicolumn{2}{c|}{Legislator}  \\

    \midrule
    & CS$\downarrow$ &  CS$\uparrow$ & FID$\downarrow$ & CS$\uparrow$ & FID$\downarrow$ & CS$\uparrow$ & FID$\downarrow$ & CS$\uparrow$ & FID$\downarrow$ & CS$\uparrow$ & FID$\downarrow$ \\
    \midrule
    SD-1.4 & 78.5 & 74.7 & - & 74.1 & - & 74.7 & - & 65.2 & - & 61.0 & - \\
    \midrule
    ESD \cite{DBLP:conf/iccv/GandikotaMFB23} & 48.3 & 58.0 & 121.0 & 64.0 & 104.7 & 68.6 & 68.3 & 63.9 & 49.5 & 59.9 & 50.9 \\
    SPM \cite{DBLP:conf/cvpr/Lyu0HCJ00HD24} & 60.9  & 74.4 & 22.1 & 74.0 & 21.4 & 74.6 & 12.4 & 65.2 & 9.0 & 61.0 & 5.5 \\
    SAFREE \citep{DBLP:journals/corr/abs-2410-12761} & 54.7  & 68.1 & 72.8 & 70.2 & 76.6 & 71.9 & 46.2 & 65.4 & 70.9 & 59.9 & 55.4 \\ 
    Receler (reg=0.1) \citep{huang2023receler}  & 45.7 & 55.6 & 143.5 & 59.6 & 156.2 & 63.5 & 121.9 & 64.0 & 68.9 & 60.6 & 42.7 \\ 
    Receler (reg=1.0) \citep{huang2023receler} &  49.1 & 63.4 & 105.9 & 62.5 & 128.5 & 71.5 & 62.8 & 64.0 & 50.2  & 60.7 & 40.1 \\
    DoCo \citep{wu2025unlearning} &  49.1 & 74.4 & 31.1 & 73.8 & 21.6 & 74.7 & 14.5 & 65.2 & 10.1 & 60.9 & 5.6 \\
    \midrule
    Ours  & 45.8  & 70.4 & 93.0 & 72.4 & 81.4 & 74.0 & 38.3 & 66.0 & 31.8 & 61.0 & 40.9 \\
    Ours (clip)  & 48.5  & 70.4 & 89.6 & 72.5 & 81.4 & 73.7 & 34.4 & 65.7 & 31.8 & 60.8 & 37.1 \\

    \bottomrule
\end{tabular}
}
\vspace{-0.7cm}
\end{table}

\subsection{Prompts for human-related concepts}
\label{sec:prompts_human}
For manipulating abstract human-related concepts, we use prompt pairs of the form: 
$$(``\mathbf{b}\ \mathbf{c},\ \mathbf{e}", \  ``\mathbf{b}\ \mathbf{c}")$$
Here $\mathbf{b} \in \mathbf{B}$ and $\mathbf{c} \in \mathbf{C}$, where\\
$\mathbf{B}$ = $\{$``a girl", ``a boy", ``two men", ``two women", ``two people", ``a man", ``a woman", ``an old man", ``an old woman", ``boys", ``girls", ``men", ``women", ``group of people", ``a human",$\}$ \\
$\mathbf{C}$ = $\{$``", ``gloomy image", ``zoomed in", ``talking", ``on the street", ``in a strange pose", ``realism", ``colorful background", ``on a beach", ``playing guitar", ``enjoying nature", ``smiling", ``in a futuristic spaceship", ``with kittens"$\}$,\\
\noindent and $\mathbf{e}$ describes the concept we want to manipulate. \\
\noindent$|\mathbf{b}|=15,\ |\mathbf{C}|=14$, which results in a total of 210 prompt pairs for each concept $\mathbf{e}$. 

\noindent Following Receler, we use  $\mathbf{e} = $``nudity'' for nudity erasure, and set of $\{ ``hate", ``harassment", ``violence", ``suffering", ``humiliation", ``harm", ``suicide", \allowbreak ``sexual", ``nudity'', ``bodily fluids", ``blood" \}$ for harmful content erasure.

\noindent Examples of prompts for nudity erasure: \\
$(``a\ girl\ on\ a\ beach,\ nudity'', \  ``a\ girl\ on\ a\ beach")$\\
$(``boys\ talking,\ nudity'',$ \   $``boys\ talking")$ 

\noindent Inside each prompt pair, the same generation seed is used.

\subsection{Prompts for style manipulation}
For erasing or adding (Sec.~\ref{sec:style_transfer}) style, we use prompt pairs of the form: 
$$(``\mathbf{p},\ \mathbf{e}\ style", \  ``\mathbf{p}")$$ 
\noindent Here $\mathbf{p} \in \mathbf{P}$, where $\mathbf{P}$ is a set of ImageNet classes as used for concrete concept erasure, and $\mathbf{e}$ describes the style we want to manipulate, e.g. ``baroque or ``Van Gogh''. \\

\noindent Examples of prompts: \\
$(``junco,\ baroque\ style", \  ``junco")$\\
$(``mud\ turtle,\ Van\ Gogh\ style",$ $``mud\ turtle")$ 

\noindent Inside each prompt pair, the same generation seed is used. \\
\clearpage

\section{User studies}
\label{sec:user_study}
In this section, we provide user studies to complement quantitative and qualitative results of CASteer. 

In the first user study, we compare SPM and CASteer on removing a concept of ``Mickey'' based on \textit{explicit} prompts. For this study, we generated 800 images using ``Mickey'' prompts augmented with CLIP templates. Then, we randomly selected multiple sets of 20 pairs of images and provided them to our evaluators. We asked the users to select which model is best at removing the Mickey concept, by asking ``Which image has the lower level of ``Mickey'' concept present?'' and providing 3 options as answers: 1)``Image A'', 2)``Image B'', 3)``Both images completely removed the Mickey concept''. We show in Tab. \ref{tab:explicit_mickey} that our model was preferred for generating images without the Mickey concept, even when the concept was explicitly defined.
\begin{table}[h!]
\centering

    \caption{User preferences for which model removed the \textit{explicit} concept Mickey better. Our model was preferred for removing Mickey from images.}
    \label{tab:explicit_mickey}
    \begin{tabular}{lcc}
    \hline
    \textbf{Model} & \textbf{Images Preferred} & \textbf{Percentage (\%)} \\
    \hline
    SPM & 9 & 4.86\% \\
    CASteer & \textbf{108} & \textbf{58.38}\% \\
    Both & 68 & 36.76\% \\
    \hline
    \textbf{Total} & 185 & 100\% \\
    \hline
    \end{tabular}
\end{table}

Next, we run another user study, this time testing removal of the concept of ``Mickey'' based on \textit{implicit} prompts. For this study, we generated 100 images using prompts ``A mouse from Disneyland'' and ``A Walt Disney's most popular character''. Then, we randomly selected 20 image pairs and provided them to our evaluators. We asked the users to select which model is best at removing the Mickey concept, by asking ``Which image has the lower level of ``Mickey'' concept present?'' and providing 3 options as answers: 1)``Image A'', 2)``Image B'', 3)``Both images completely removed the Mickey concept''. Tab.~\ref{tab:implicit_mickey} shows that CASteer was again preferred for generating images without the ``Mickey'' concept.

\begin{table}[h!]
    \centering
    \caption{User preferences for which model removed the \textit{implicit} ``Mouse from Disneyland'' better. Our model was preferred for removing Mickey from images.}
    \label{tab:implicit_mickey}
    \begin{tabular}{lcc}
    \hline
    \textbf{Model} & \textbf{Images Preferred} & \textbf{Percentage (\%)} \\
    \hline
    SPM & 4 & 2.17\% \\
    CASteer & \textbf{156} & \textbf{84.32}\% \\
    Both & 25 & 13.51\% \\
    \hline
    \textbf{Total} & 185 & 100\% \\
    \hline
    \end{tabular}
\end{table}

Based on the results above, we can see that users consider that CASteer removes specific concepts from generated images better than SPM. This matches our quantitative results presented in the paper.

Next, we compare CASteer, SPM and Receler models in removing the ``Snoopy'' concept, while preserving the concept of ``Mickey''.  First, we assess removal of the ``Snoopy'' concept. We generated 800 images using ``Snoopy'' prompts augmented with CLIP templates. Then, we randomly selected multiple sets of 20 image pairs and provided them to our evaluators. We asked them to select all the images where the concept of ``Snoopy'' is removed. Tab.~\ref{tab:3models_snoopy_rem_snoopy} shows that both CASteer and Receler have high rates of ``Snoopy'' erasure.

\begin{table}[h!]
\centering
\caption{Percentage of generated images containing Snoopy detected for each model.}
\label{tab:3models_snoopy_rem_snoopy}
\begin{tabular}{lccc}
\hline
\textbf{Model} & \textbf{$\%$ Containing Snoopy $\downarrow$} & \textbf{$\%$Without Snoopy $\uparrow$} & \textbf{$\%$ Total} \\
\hline
Receler & \textbf{0} & \textbf{100} & 100 \\
CASteer & \underline{3.3} & \underline{96.7} & 100 \\
SPM & 44.4  & 55.6 & 100 \\
\hline
\end{tabular}
\end{table}

Lastly, we want to see how well SPM, Receler and CASteer preserve concept ``Mickey'' when ``Snoopy'' is removed. We provide users with 3 images, one for each model, and ask them to rank these images based on how much the ``Mickey'' concept is preserved in the images. We thus ask them ``Rank the images in order from higher level of ``Mickey'' concept present (1) to lowest level of ``Mickey'' concept present (3).''. We give each user a randomly selected subset of 15 triplets of images. We can observe in Tab.~\ref{tab:3models_mickey_rem_snoopy} that SPM preserves the best the concept of Mickey, followed by CASteer. However, from Tab.~\ref{tab:3models_snoopy_rem_snoopy} we see that SPM has a much higher tendency to preserve concepts, even the ones that should be removed, such as ``Snoopy''. Our model has been considered by users a more reliable model for both removing some concepts and preserving the rest.

\begin{table}[h!]
\centering
\caption{User ranking of generated images by perceived Mickey content 
(1 = most Mickey, 3 = least Mickey). Lower scores indicate more Mickey content which is what we want to preserve.}
\label{tab:3models_mickey_rem_snoopy}
\begin{tabular}{lcc}
\hline
\textbf{Model} & \textbf{Total Rank Score $\downarrow$} & \textbf{Average Rank $\downarrow$}\\
\hline
Receler & 480 & 2.73 \\
CASteer & \underline{391} & \underline{2.22} \\
SPM & \textbf{185} & \textbf{1.05} \\
\hline
\end{tabular}
\end{table}

\clearpage

\section{Details on style removal task}
\label{sec:details_style_removal}
The task focuses on the styles of five famous artists (Van Gogh, Picasso, Rembrandt, Warhol, Caravaggio) and the other uses five modern artists (McKernan, Kinkade, Edlin, Eng, Ajin: Demi-Human). Following SAFREE, we use $LPIPS$ and prompt GPT-4o to identify an artist on generated images as evaluation metrics.

Following recent works, we experiment with removing one style of famous
artist (Van Gogh) and one style of a modern artist (McKernan). We assess both how well our methods remove the desired style (e.g., Van Gogh) and preserve other styles in the
same batch (e.g. Picasso, Rembrandt, Warhol and Caravaggio, if we remove Van Gogh). $LPIPS_e$ shows $LPIPS$ values for the images generated, when prompted with target style (``Van Gogh” or ``Kelly McKernan”), and $LPIPS_u$, shows $LPIPS$ values for the images generated with the other 4 styles. $Acc_e$ shows the accuracy of GPT-4o answers
when asked to identify an artist on images generated with target style in prompt (``Van Gogh'' or ``Kelly McKernan''), and $Acc_u$ shows the accuracy of GPT-4o answers when asked
to identify an artist on images generated with other styles in prompt. The goal of any style removal method is to
lower $LPIPS_e$ and $Acce$ (i.e., successfully remove target style) while maintaining $LPIPS_u$ and $Acc_u$ high (i.e., not affecting the generation of other styles). 

Questions to GPT-4o are formulated as ``Is this picture in \textbf{s} style? Just tell me Yes or No.'', where $s$ represents style, e.g., \textbf{s} =``Andy Warhol'' or \textbf{s} =``Van Gogh''. The number
of ``Yes'' answers divided by the number of total answers is considered as Acc metric. We report results averaged over 3 runs for this metric.

\clearpage

\section{Results on SDXL model}
\label{sdxl_res}

Here we provide qualitative and qualitative results on removing concrete and abstract concepts using CASteer with the SDXL model.

\subsection{Experimental setup}

In SDXL experiments, we generate images using CASteer with $\beta=2$ on SDXL-base-1.0 model (stabilityai/stable-diffusion-xl-base-1.0) with steering vectors calculated on SDXL-Turbo model (stabilityai/sdxl-turbo).

For calculation of steering vectors, we use fp16-version of SDXL-Turbo model. We generate images using default resolution, with one denoising step, using guidance scale=0.0 and seed=0. All other parameters are left default. To generate images, we use CASteer on SDXL-base-1.0 model with 30 denoising steps. All other parameters are left default.

Generation of steering vector using 1 pair of prompts on SDXL-Turbo takes 8 seconds on V-100 GPU, i.e. generation of steering vectors for concrete concepts (see Sec.~\ref{sec:prompts}) using 50 prompts takes ~7 minutes, generation of steering vectors for human-related concepts (see Sec.~\ref{sec:prompts}) using 210 prompts takes ~28 minutes.

\subsection{Quantitative results}

We use the same experimental setups as for SD-1.4, described in Sec.~\ref{sec:experiments}. Results are shown in Tab.~\ref{tab:sdxl_nudity},~\ref{tab:sdxl_i2p},~\ref{tab:sdxl_fid},~\ref{tab:sdxl_snoopy},~\ref{tab:sdxl_t2i_art}.

\begin{table}[ht]
\centering
\caption{\textbf{Quantitative results on nudity removal based on I2P~\cite{DBLP:conf/cvpr/SchramowskiBDK23} dataset.} Detection of nude body parts is done by Nudenet at a threshold of 0.6. F: Female, M: Male. The best
results are highlighted in bold, second-best are underlined.}
\label{tab:sdxl_nudity}
\resizebox{0.65\linewidth}{!}{
\begin{tabular}{lccccccccc}
\toprule
\multirow{2}{*}{Method} & \multicolumn{9}{c}{Nudity Detection}                                 \\ \cmidrule(l){2-10}   & Breast(F)  & Genitalia(F) & Breast(M)  & Genitalia(M) & Buttocks   & Feet        & Belly   & Armpits     & Total$\downarrow$       \\ \midrule
SDXL &     85            &  7  &  3  &    2     &    7       &  28   & 84 &  66  &     282  \\
\midrule
Ours (w/o clip)   &   4  &  0 &     0    &  0 &   0  & 6 &   10  &  7   &  27 \\
Ours (clip)   &   5   &  1  &  0  &  1 &  0   &   3  & 9 &  7  & 26          \\ \bottomrule
\end{tabular}
}

\end{table}
\begin{table}[ht]
    \centering
    \caption{\textbf{Quantitative results on inappropriate content removal based on I2P\cite{DBLP:conf/cvpr/SchramowskiBDK23} dataset.} Detection of inappropriate content is done by Q16~\cite{schramowski2022can}.}
    \label{tab:sdxl_i2p}
\resizebox{0.5\textwidth}{!}{
    \begin{tabular}{
        @{} l | c |
        *{3}{S[table-format=2.1, table-column-width=20mm]} 
    }
        \toprule
        {\multirow{2}{*}{Class name}}  & \multicolumn{3}{c}{Inappropriate proportion (\%) ($\downarrow$)} \\
        \cmidrule{2-4} 
        {} & {SDXL}  & {Ours (w/o clip)} & {Ours (clip)} \\
        \midrule
        {\footnotesize Hate}
        & 39.4  & 23.8 & 20.8\\
        {\footnotesize Harassment}
        & 33.0   &  21.7 & 21.7\\
        {\footnotesize Violence}
        & 43.7   &  25.9 & 24.7\\
        {\footnotesize Self-harm}
        & 42.4  & 24.3  & 24.1\\
        {\footnotesize Sexual}
        & 45.3 & 33.0 & 32.2\\
        {\footnotesize Shocking}
        & 49.6 & 32.0  & 30.8\\
        {\footnotesize Illegal activity}
        & 36.0 & 23.5 & 23.5\\
        \hline 
        \addlinespace[0.1em] 
        {\footnotesize Overall}
        & 41.97 & 27.2 & 26.6\\
        \bottomrule
    \end{tabular}
}

\end{table}
\begin{table}[ht]
    \centering
    \caption{\textbf{General quality estimation of images generated by CASteer on SDXL model with nudity erasure.} CLIP score and FID are calculated on COCO-30k dataset}
    \label{tab:sdxl_fid}
    \resizebox{0.25\columnwidth}{!}{
        \begin{tabular}{l | c c}
            \toprule
            \multirowcell{3}[0pt][l]{Method} & \multicolumn{2}{c}{Locality} \\
            {} & \multirowcell{2}[0pt][c]{CLIP-30K($\uparrow$)} & \multirowcell{2}[0pt][c]{FID-30K($\downarrow$)} \\
            {} & {} & {} \\
            \hline 
            \addlinespace[0.1em] 
            
            {SDXL} &  31.53  & 13.29\\
            \addlinespace[-0.1em]
            \midrule
            {\textit{Ours }} & 31.51 & 13.56 \\
            {\textit{Ours (clip)}} & 31.45 & 13.37 \\
            \addlinespace[-0.1em]
            \bottomrule
        \end{tabular}
    }
\end{table}
\begin{table}[ht]
\caption{\textbf{Quantitative evaluation of concrete object erasure.}}
\label{tab:sdxl_snoopy}
\centering
\setlength{\tabcolsep}{2.0pt}
\resizebox{0.65\textwidth}{!}{
\definecolor{mygray}{gray}{.9}
\begin{tabular}{l|c|cc|cc|cc|cc|cc}
    \toprule
    
    & \multicolumn{1}{c|}{Snoopy} & \multicolumn{2}{c|}{Mickey} & \multicolumn{2}{c|}{Spongebob} & \multicolumn{2}{c|}{Pikachu} & \multicolumn{2}{c|}{Dog} & \multicolumn{2}{c}{Legislator}  \\

    \cmidrule{2-12}
    Method & CS$\downarrow$ &  CS$\uparrow$ & FID$\downarrow$ & CS$\uparrow$ & FID$\downarrow$ & CS$\uparrow$ & FID$\downarrow$ & CS$\uparrow$ & FID$\downarrow$ & CS$\uparrow$ & FID$\downarrow$ \\
    \midrule
    SDXL & 74.3 & 73.1 & - & 75.1 & - & 72.7 & - & 66.3 & - & 60.8 \\
    \midrule
    Ours  & 48.7  & 68.5 & 69.0 & 73.7 & 58.4 & 72.7 & 27.8 & 66.4 & 37.9 & 60.8 & 27.4 \\
    Ours (clip)  & 48.6  & 68.2 & 70.7 & 73.4 & 59.2 & 72.8 & 28.3 & 66.4 & 37.9 & 60.9 & 20.0 \\

    \bottomrule
\end{tabular}
}
\end{table}

\begin{table}
\caption{{Comparison of Artist Concept Removal tasks on SDXL model}: Famous (left)  and Modern artists (right).}
\label{tab:sdxl_t2i_art}
\small
\centering
\setlength{\tabcolsep}{1.mm}
\resizebox{0.65\textwidth}{!}{
\begin{tabular}{l|cccc|cccc}
\toprule
&\multicolumn{4}{c}{\textbf{Remove ``Van Gogh"}}&\multicolumn{4}{c}{\textbf{Remove ``Kelly McKernan"}}\\ 
\cmidrule(lr){2-5}
\cmidrule(lr){6-9}
\textbf{Method}&
\textbf{LPIPS}$_e \uparrow$ &\textbf{LPIPS}$_u \downarrow$ & \textbf{Acc}$_e \downarrow$& \textbf{Acc}$_u \uparrow$&
\textbf{LPIPS}$_e \uparrow$ &\textbf{LPIPS}$_u \downarrow$ & \textbf{Acc}$_e \downarrow$& \textbf{Acc}$_u \uparrow$\\
\midrule
SDXL&-&-&0.95&0.95&-&-&1.00 &0.86\\
\midrule
Ours (w/o clip) & 0.34 & 0.19 &0.00 & 0.84 & 0.39 & 0.12 & 0.00 & 0.79 \\
Ours (clip)  & 0.35 & 0.19 & 0.00 & 0.85 & 0.39 & 0.12 & 0.05 & 0.83\\

\bottomrule
\end{tabular}
}
\vspace{-0.5cm}
\end{table}

\subsection{Qualitative results}
\label{sec:sdxl_res}

In this section, we provide qualitative results on CASteer applied on SDXL model. 

First, we show results on removing ``Snoopy'' concept when generating images with four prompt templates: ``\textit{An origami $X$}'', ``\textit{A drawing of the $X$}'', ``\textit{A photo of a cool $X$}'' and ``\textit{An art of the $X$}'', where $X \in $ [``\textit{Snoopy}'', ``\textit{Mickey}'', ``\textit{Spongebob}'', ``\textit{Pikachu}'', ``\textit{dog}'', ``\textit{legislator}'']. CASteer is applied with removal strength $\beta=2$.

We see that our method removes Snoopy well (see Fig.~\ref{fig:rem_snoopy_snoopy_1} while preserving other concepts well (see Fig. ~\ref{fig:rem_snoopy_mickey_1}, ~\ref{fig:rem_snoopy_spongebob_1}, ~\ref{fig:rem_snoopy_pikachu_1},
~\ref{fig:rem_snoopy_dog_1},
~\ref{fig:rem_snoopy_legislator_1}). In fact, most of the images of non-related concepts generated with CASteer applied are almost identical to those generated by vanilla SDXL. 

Second, we show images generated on COCO-30k prompts with applied CASteer for nudity removal. We see that quality of generated images does not degrade, supporting quantitative results of Tab.~\ref{tab:sdxl_fid} (see Fig.~\ref{fig:sdxl_coco_1},~\ref{fig:sdxl_coco_2},~\ref{fig:sdxl_coco_3})

Next, on Fig.~\ref{fig:sdxl_style_removal} and Fig.~\ref{fig:sdxl_style_mosaic_origami} we show examples of CASteer applied for style removal.

\begin{figure*}[h!]
    \includegraphics[width=\linewidth]{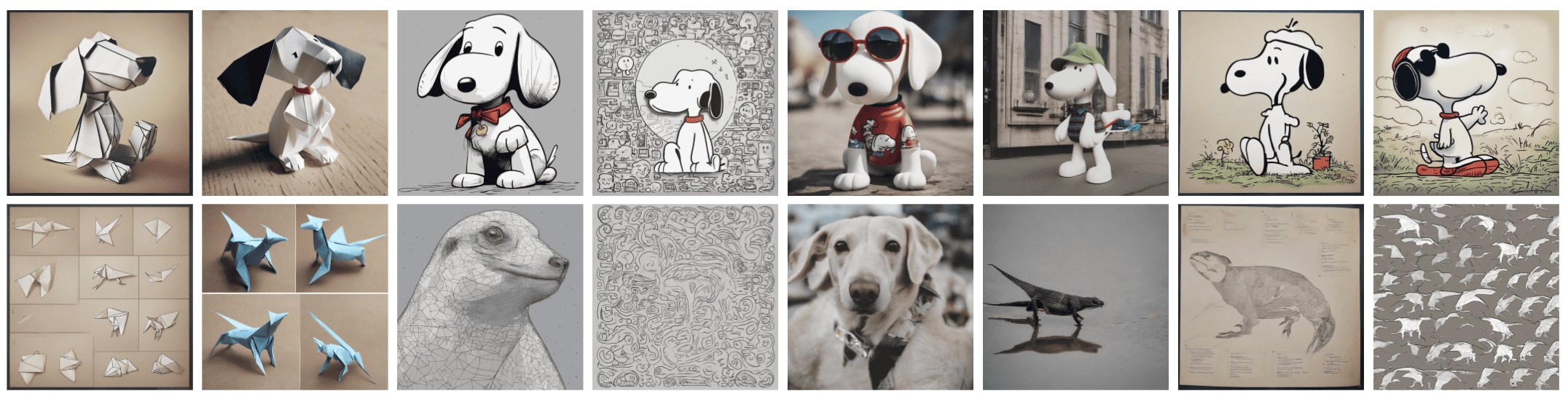}
    \caption{Images generated with ``\textit{Snoopy}'' prompts  with different seeds. Top: original SDXL, bottom: CASteer applied for removing the concept of ``\textit{Snoopy}''. } 
    \label{fig:rem_snoopy_snoopy_1}
\end{figure*}

\begin{figure*}[h!]
    \includegraphics[width=\linewidth]{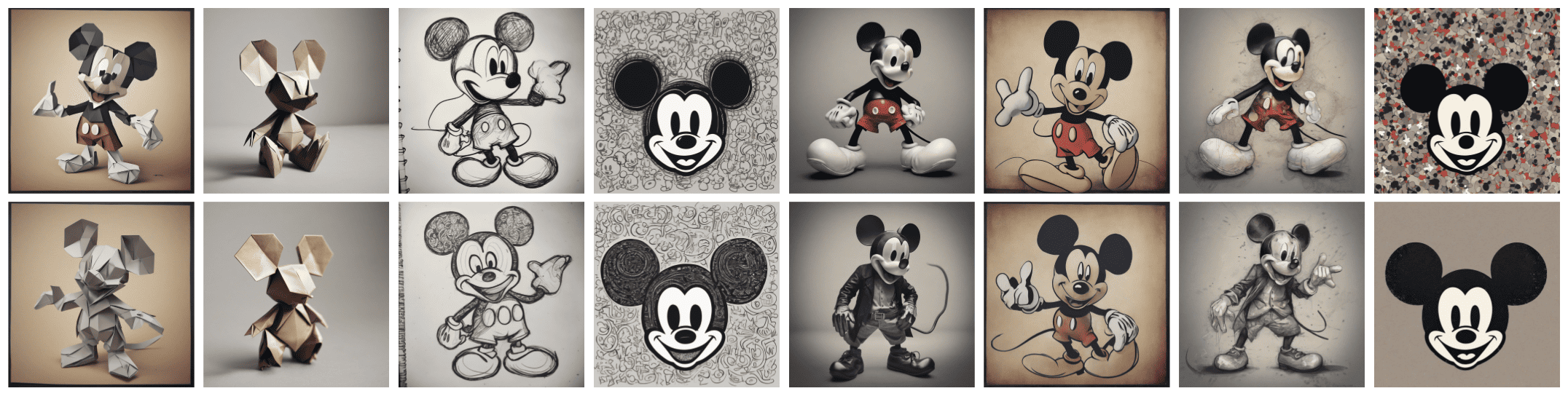}
    \caption{Images generated with ``\textit{Mickey}'' prompts  with different seeds. Top: original SDXL, bottom: CASteer applied for removing the concept of ``\textit{Snoopy}''.} 
    \label{fig:rem_snoopy_mickey_1}
\end{figure*}

\begin{figure*}[h!]
    \includegraphics[width=\linewidth]{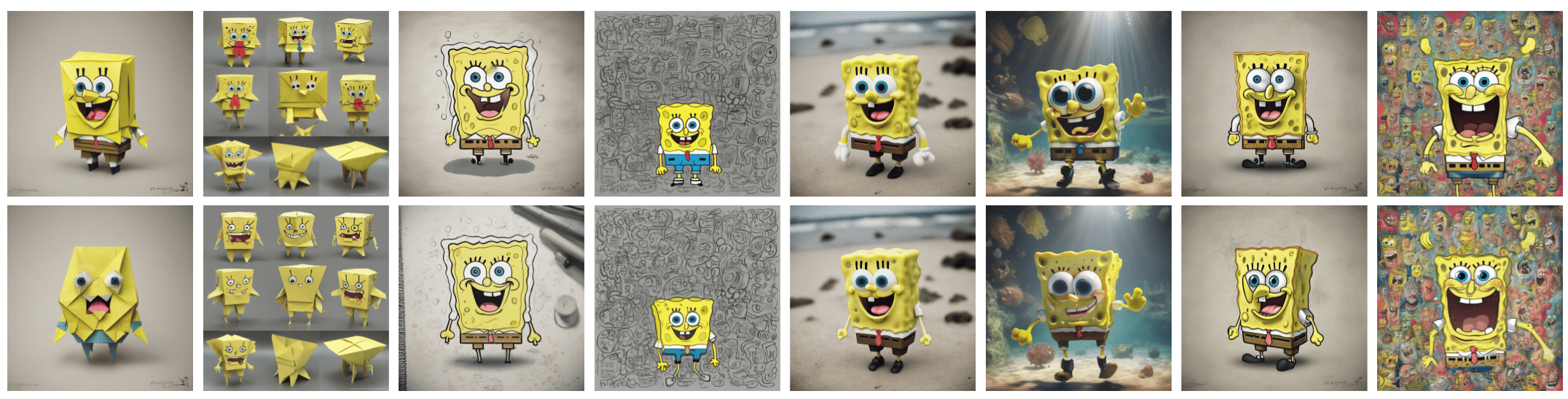}
    \caption{Images generated with ``\textit{Spongebob}'' prompts  with different seeds. Top: original SDXL, bottom: CASteer applied for removing the concept of ``\textit{Snoopy}''.} 
    \label{fig:rem_snoopy_spongebob_1}
\end{figure*}

\begin{figure*}[h!]
    \includegraphics[width=\linewidth]{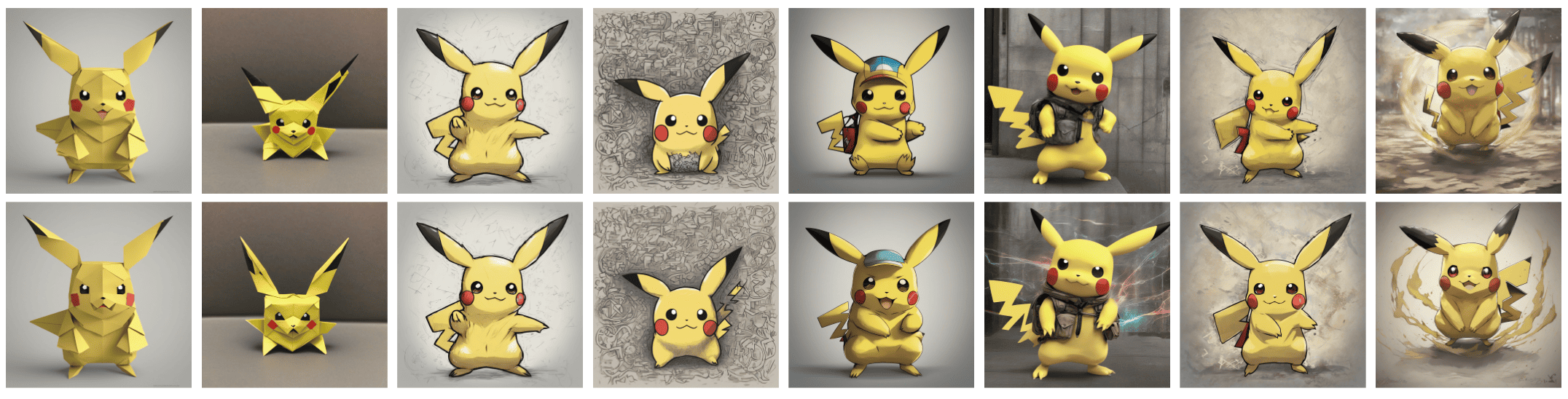}
    \caption{Images generated with ``\textit{Pikachu}'' prompts  with different seeds. Top: original SDXL, bottom: CASteer applied for removing the concept of ``\textit{Snoopy}''.} 
    \label{fig:rem_snoopy_pikachu_1}
\end{figure*}

\begin{figure*}[h!]
    \includegraphics[width=\linewidth]{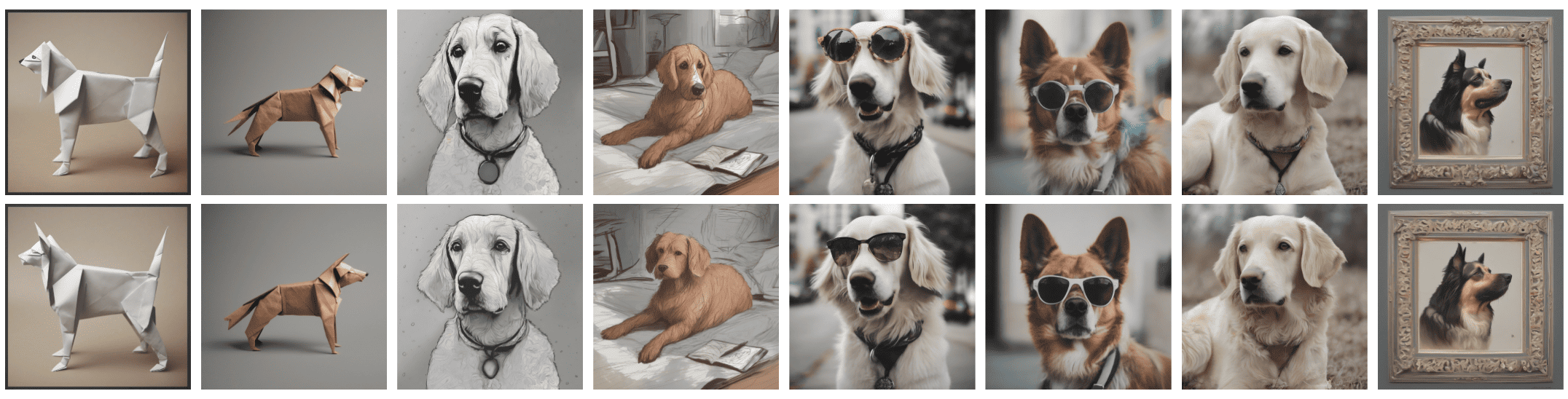}
    \caption{Images generated with ``\textit{dog}'' prompts  with different seeds. Top: original SDXL, bottom: CASteer applied for removing the concept of ``\textit{Snoopy}''.} 
    \label{fig:rem_snoopy_dog_1}
\end{figure*}

\begin{figure*}[h!]
    \includegraphics[width=\linewidth]{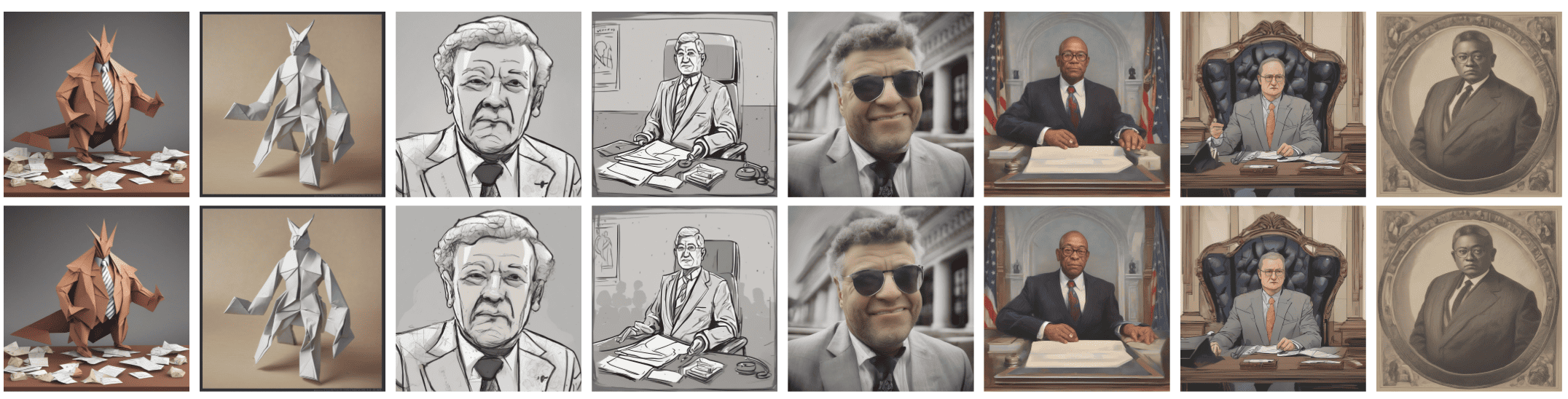}
    \caption{Images generated with ``\textit{legislator}'' prompts  with different seeds. Top: original SDXL, bottom: CASteer applied for removing the concept of ``\textit{Snoopy}''.} 
    \label{fig:rem_snoopy_legislator_1}
\end{figure*}

\begin{figure*}[h!]
    \includegraphics[width=\linewidth]{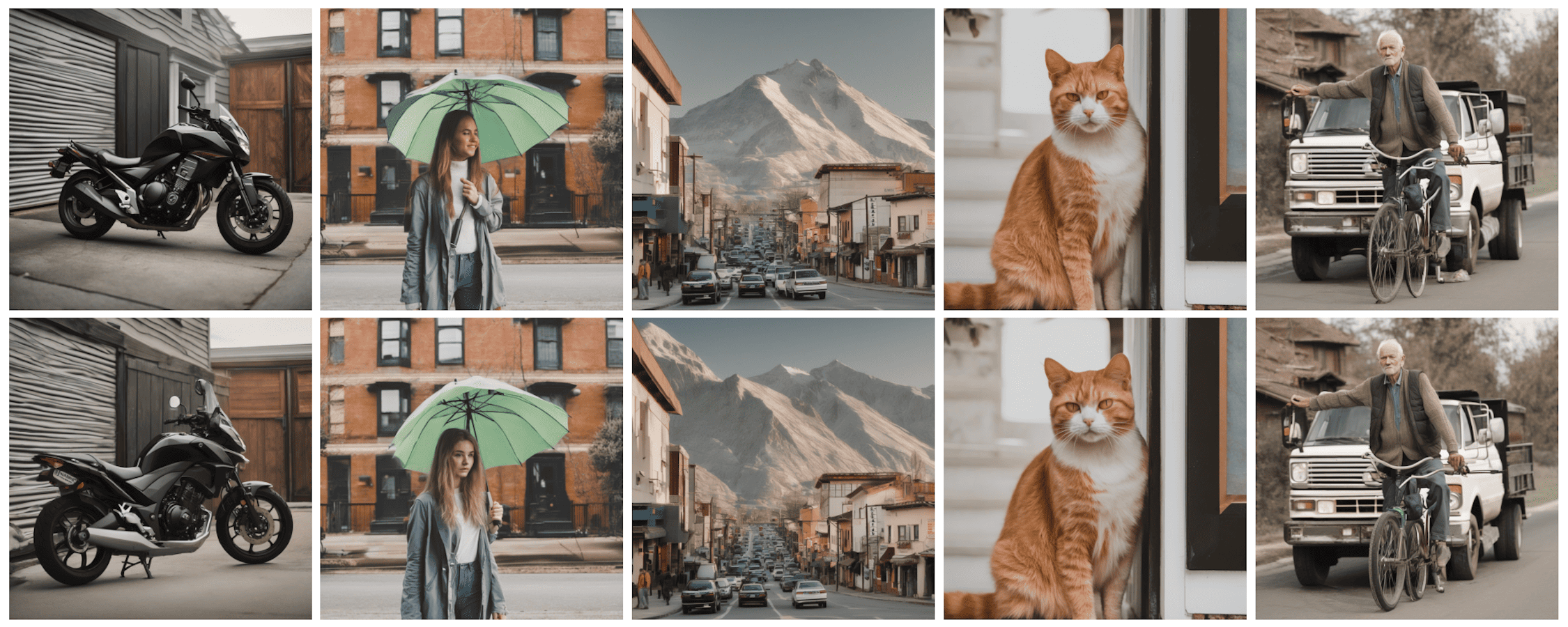}
    \caption{Images generated with different COCO-30k prompts with different seeds. Top: original SDXL, bottom: CASteer applied for removing the concept of ``\textit{nudity}''.} 
    \label{fig:sdxl_coco_1}
\end{figure*}
\begin{figure*}[h!]
    \includegraphics[width=\linewidth]{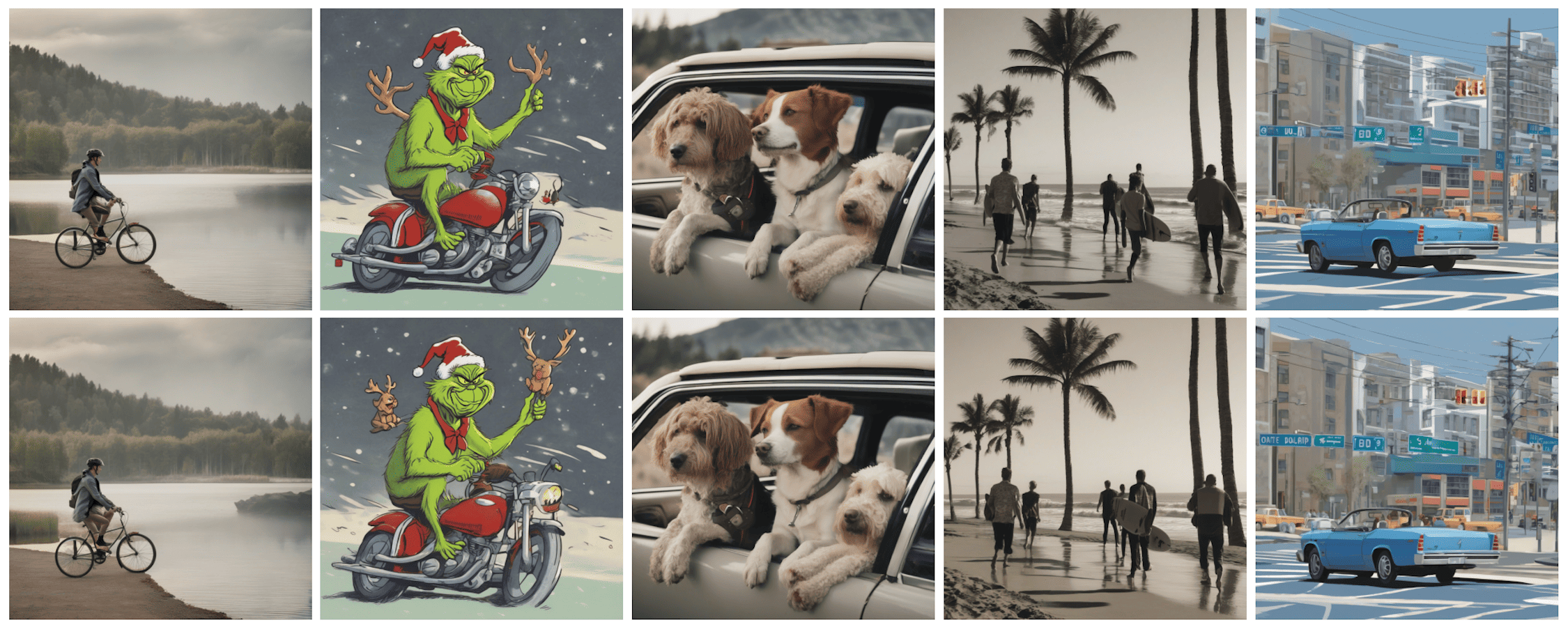}
    \caption{Images generated with different COCO-30k prompts with different seeds. Top: original SDXL, bottom: CASteer applied for removing the concept of ``\textit{nudity}''.} 
    \label{fig:sdxl_coco_2}
\end{figure*}
\begin{figure*}[h!]
    \includegraphics[width=\linewidth]{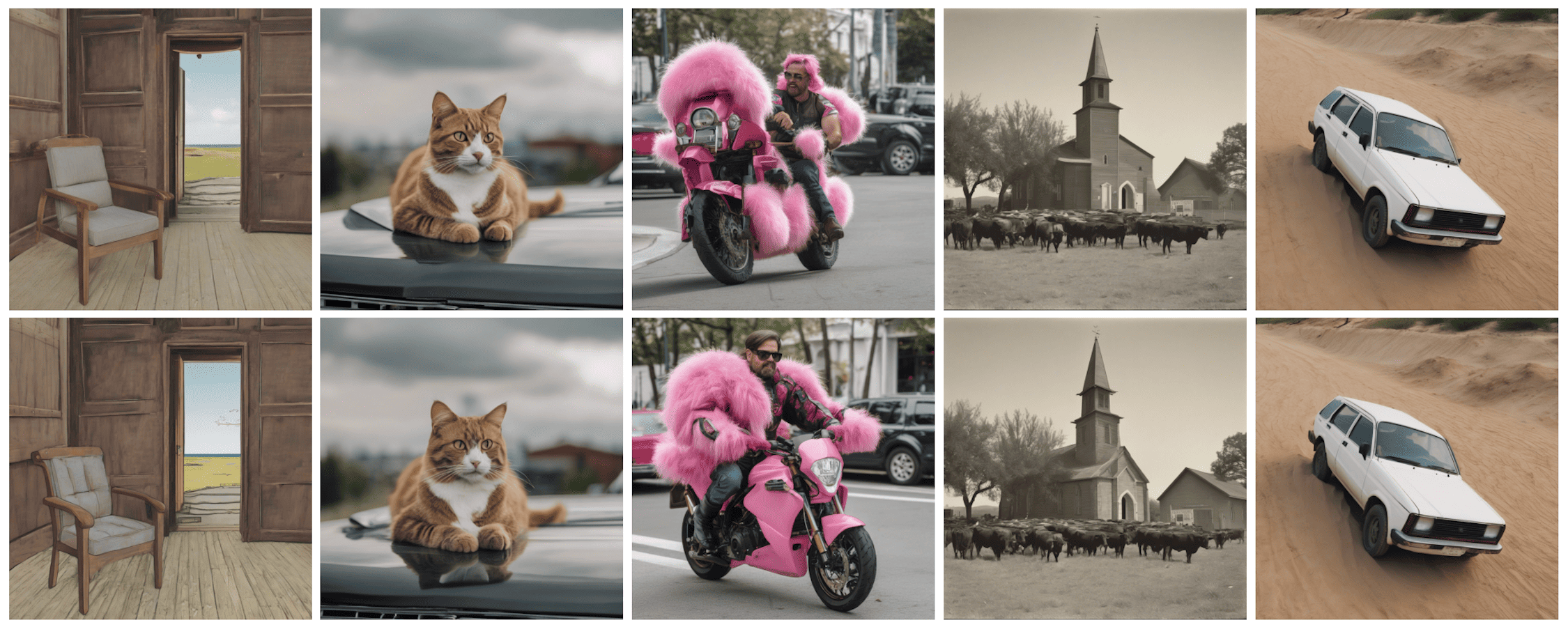}
    \caption{Images generated with different COCO-30k prompts with different seeds. Top: original SDXL, bottom: CASteer applied for removing the concept of ``\textit{nudity}''.} 
    \label{fig:sdxl_coco_3}
\end{figure*}

\begin{figure*}[h!]
    \includegraphics[width=\linewidth]{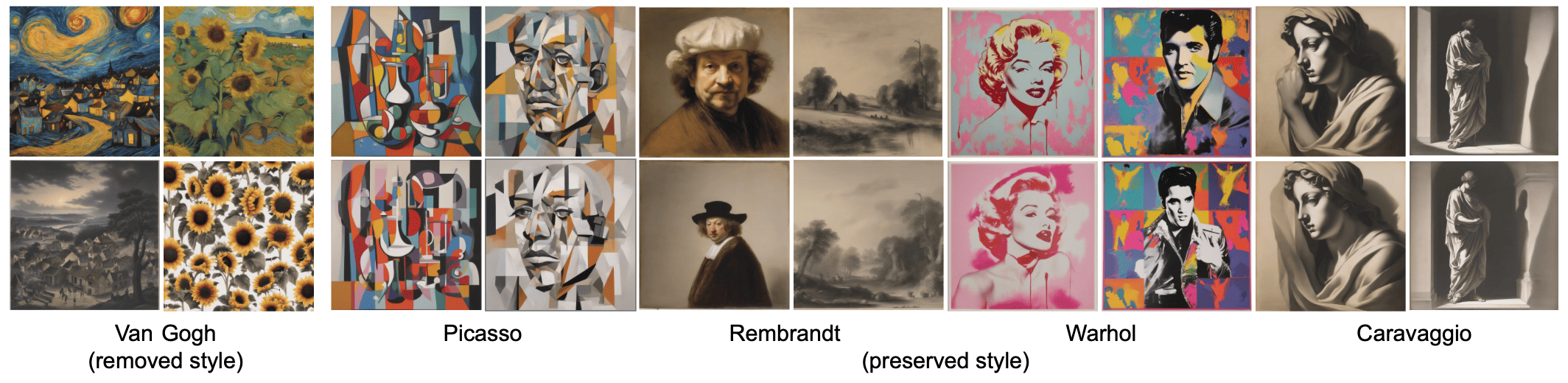}
    \caption{Examples of style removal for SDXL. Top: original SDXL, bottom: CASteer applied for removing the style of ``\textit{Van Gogh}''}
    \label{fig:sdxl_style_removal}
\end{figure*}

\begin{figure*}[h!]
    \includegraphics[width=\linewidth]{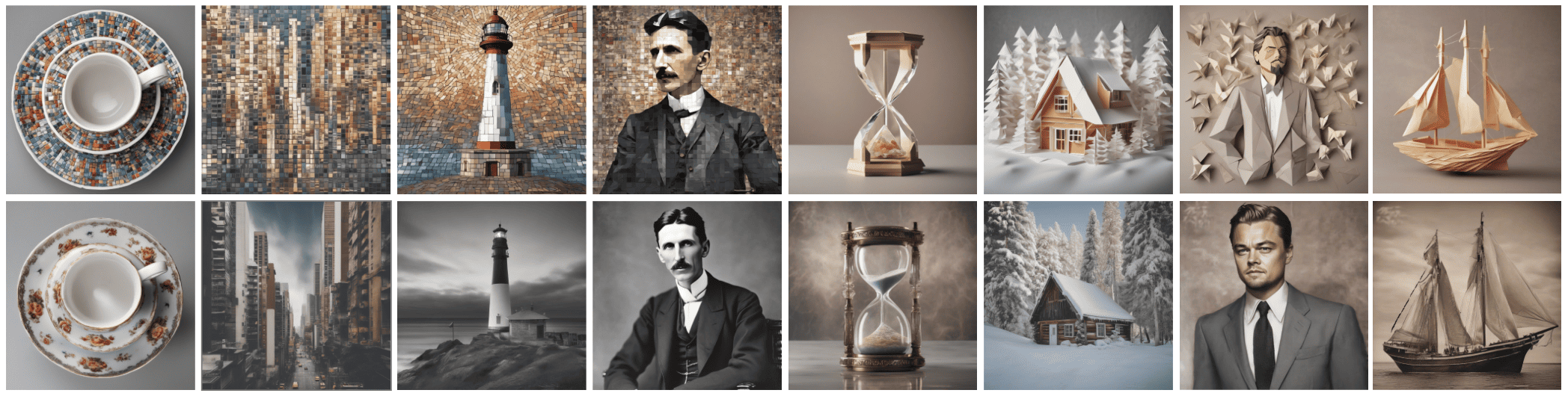}
    \caption{Examples of style removal for SDXL. Top: original SDXL, bottom: CASteer applied for removing the style of ``\textit{mosaic style}'' (left 4 images) and ``\textit{origami style}'' (right 4 images)}
    \label{fig:sdxl_style_mosaic_origami}
\end{figure*}




\clearpage
\section{Results on SANA model}
\label{sec:sana_res}

\subsection{Experimental setup}

In SANA experiments, we generate images using CASteer with $\beta=2$ on SANA\_Sprint\_1.6B\_1024px\_teacher model (Efficient-Large-Model/SANA\_Sprint\_1.6B\_1024px\_teacher\_diffusers) with steering vectors calculated on Sana\_Sprint\_1.6B\_1024px model (Efficient-Large-Model/Sana\_Sprint\_1.6B\_1024px\_diffusers).

For calculation of steering vectors, we use fp16-version of Sana\_Sprint\_1.6B\_1024px\_diffusers model. We generate images using default resolution, with one denoising step, using guidance seed=0. All other parameters are left default. To generate images, we use CASteer on SANA\_Sprint\_1.6B\_1024px\_teacher model with 20 denoising steps. All other parameters are left default.

Generation of steering vector using 1 pair of prompts on Sana\_Sprint\_1.6B\_1024px takes 5 seconds on V-100 GPU, i.e. generation of steering vectors for concrete concepts (see Sec.~\ref{sec:prompts}) using 50 prompts takes ~4.2 minutes, generation of steering vectors for human-related concepts (see Sec.~\ref{sec:prompts}) using 210 prompts takes ~18 minutes.

\subsection{Quantitative results}

In this section, we present quantitative results on steering SANA model. We use the same experimental setups as for SD-1.4, described in Sec.~\ref{sec:experiments}. Results on SANA model are shown in Tab.~\ref{tab:sana_nudity},~\ref{tab:sana_i2p},~\ref{tab:sana_fid},~\ref{tab:sana_snoopy},~\ref{tab:sana_t2i_art}.

\begin{table}[ht]
\centering
\caption{\textbf{Quantitative results on nudity removal based on I2P~\cite{DBLP:conf/cvpr/SchramowskiBDK23} dataset.} Detection of nude body parts is done by Nudenet at a threshold of 0.6. F: Female, M: Male. The best
results are highlighted in bold, second-best are underlined.}
\resizebox{0.65\linewidth}{!}{
\begin{tabular}{lccccccccc}
\toprule
\multirow{2}{*}{Method} & \multicolumn{9}{c}{Nudity Detection}                                 \\ \cmidrule(l){2-10}   & Breast(F)  & Genitalia(F) & Breast(M)  & Genitalia(M) & Buttocks   & Feet        & Belly   & Armpits     & Total$\downarrow$       \\ \midrule
SANA &     14            &  0  &  3  &    4     &    0       &  5   & 44 &  21  &     91  \\
\midrule
Ours (w/o clip)  &   3  & 0  &    0     &  1 &  0   & 0 &   3  &  2   & 9  \\
Ours (clip)   &   0   &  0  &  0  &  1 &  0   &   0  & 1 &  0  & 2          \\ \bottomrule
\end{tabular}
}

\label{tab:sana_nudity}
\end{table} 
\begin{table}[ht]
    \centering
    \caption{\textbf{Quantitative results on inappropriate content removal based on I2P~\cite{DBLP:conf/cvpr/SchramowskiBDK23} dataset (SANA).} Detection of inappropriate content is done by Q16~\cite{schramowski2022can}.}
\resizebox{0.5\textwidth}{!}{
    \begin{tabular}{
        @{} l | c |
        *{3}{S[table-format=2.1, table-column-width=20mm]} 
    }
        \toprule
        {\multirow{2}{*}{Class name}}  & \multicolumn{3}{c}{Inappropriate proportion (\%) ($\downarrow$)} \\
        \cmidrule{2-4} 
        {} & {SANA}  & {Ours (w/o clip)} & {Ours (clip)} \\
        \midrule
        {\footnotesize Hate}
        & 48.1  & 48.5 & 35.0\\
        {\footnotesize Harassment}
        &  42.8  &  42.5 & 30.3 \\
        {\footnotesize Violence}
        &  49.3  & 40.3  & 35.6\\
        {\footnotesize Self-harm}
        & 49.4 &  39.5 & 30.2\\
        {\footnotesize Sexual}
        & 36.6 & 32.6 & 22.3\\
        {\footnotesize Shocking}
        & 57.9 &  50.1 & 39.5\\
        {\footnotesize Illegal activity}
        & 42.5 & 36.9 & 25.7\\
        \hline 
        \addlinespace[0.1em] 
        {\footnotesize Overall}
        & 46.1 & 40.1 & 30.5\\
        \bottomrule
    \end{tabular}
}
\label{tab:sana_i2p}
\end{table}
\begin{table}[ht]
    \centering
    \caption{\textbf{General quality estimation of images generated by CASteer on SDXL model with nudity erasure.} CLIP score and FID are calculated on COCO-30k dataset}
    \label{table:fid_clip}
    \resizebox{0.25\columnwidth}{!}{
        \begin{tabular}{l | c c}
            \toprule
            \multirowcell{3}[0pt][c]{Method} & \multicolumn{2}{c}{Locality} \\
            {} & \multirowcell{2}[0pt][c]{CLIP-30K($\uparrow$)} & \multirowcell{2}[0pt][c]{FID-30K($\downarrow$)} \\
            {} & {} & {} \\
            \hline 
            \addlinespace[0.1em] 
            
            {SANA} &  29.28  & 22.65\\
            \addlinespace[-0.1em]
            \midrule
            {\textit{Ours }} & 28.79 & 22.89 \\
            {\textit{Ours (clip)}} & 29.01 & 23.51 \\
            \addlinespace[-0.1em]
            \bottomrule
        \end{tabular}
    }
\label{tab:sana_fid}
\end{table}

\begin{table}[ht]
\caption{\textbf{Quantitative evaluation of concrete object erasure on SANA}.}
\label{tab:sana_snoopy}
\centering
\setlength{\tabcolsep}{2.0pt}
\resizebox{0.65\textwidth}{!}{
\definecolor{mygray}{gray}{.9}
\begin{tabular}{l|c|cc|cc|cc|cc|cc}
    \toprule
    
    & \multicolumn{1}{c|}{Snoopy} & \multicolumn{2}{c|}{Mickey} & \multicolumn{2}{c|}{Spongebob} & \multicolumn{2}{c|}{Pikachu} & \multicolumn{2}{c|}{Dog} & \multicolumn{2}{c}{Legislator}  \\

    \cmidrule{2-12}
    Method & CS$\downarrow$ &  CS$\uparrow$ & FID$\downarrow$ & CS$\uparrow$ & FID$\downarrow$ & CS$\uparrow$ & FID$\downarrow$ & CS$\uparrow$ & FID$\downarrow$ & CS$\uparrow$ & FID$\downarrow$ \\
    \midrule
    SANA & 79.7 & 76.1 & - & 79.0 & - & 74.0 & - & 68.1 & - & 60.5 & - \\
    \midrule
    Ours  & 48.2  & 74.9 & 94.3 & 78.1 & 68.0 & 74.0 & 45.2 & 68.0 & 48.6 & 60.3 & 25.0 \\
    Ours (clip)  & 48.2  & 75.0 & 96.5 & 78.1 & 71.0 & 74.0 & 46.0 & 68.0 & 49.3 & 59.8 & 20.1 \\

    \bottomrule
\end{tabular}
}
\end{table}

\begin{table}
\caption{{Comparison of Artist Concept Removal tasks on SANA model}: Famous (left)  and Modern artists (right).}
\label{tab:sana_t2i_art}
\small
\centering
\setlength{\tabcolsep}{1.mm}
\resizebox{0.65\textwidth}{!}{
\begin{tabular}{l|cccc|cccc}
\toprule
&\multicolumn{4}{c}{\textbf{Remove ``Van Gogh"}}&\multicolumn{4}{c}{\textbf{Remove ``Kelly McKernan"}}\\ 
\cmidrule(lr){2-5}
\cmidrule(lr){6-9}
\textbf{Method}&
\textbf{LPIPS}$_e \uparrow$ &\textbf{LPIPS}$_u \downarrow$ & \textbf{Acc}$_e \downarrow$& \textbf{Acc}$_u \uparrow$&
\textbf{LPIPS}$_e \uparrow$ &\textbf{LPIPS}$_u \downarrow$ & \textbf{Acc}$_e \downarrow$& \textbf{Acc}$_u \uparrow$\\
\midrule
SANA&-&-&1.00&0.89&-&-&0.70&0.625\\
\midrule
Ours (w/o clip) & 0.42 & 0.14 & 0.25 & 0.88 & 0.19 & 0.11 & 0.40 & 0.620 \\
Ours (clip) & 0.42 & 0.14 & 0.25 & 0.875 & 0.19 & 0.08 & 0.30 & 0.620 \\

\bottomrule
\end{tabular}
}
\vspace{-0.5cm}
\end{table}

\subsection{Qualitative results}

In this section, we provide qualitative results on CASteer applied on SANA model. 

First, we show results on removing ``Snoopy'' concept when generating images with four prompt templates: ``\textit{An origami $X$}'', ``\textit{A drawing of the $X$}'', ``\textit{A photo of a cool $X$}'' and ``\textit{An art of the $X$}'', where $X \in $ [``\textit{Snoopy}'', ``\textit{Mickey}'', ``\textit{Spongebob}'', ``\textit{Pikachu}'', ``\textit{dog}'', ``\textit{legislator}'']. CASteer is applied with removal strength $\beta=2$.

We see that our method removes Snoopy well (see Fig.~\ref{fig:sana_rem_snoopy_snoopy_1} while preserving other concepts well (see Fig. ~\ref{fig:sana_rem_snoopy_mickey_1}, \ref{fig:sana_rem_snoopy_spongebob_1}, \ref{fig:sana_rem_snoopy_pikachu_1},
\ref{fig:sana_rem_snoopy_dog_1},
\ref{fig:sana_rem_snoopy_legislator_1}). In fact, most of the images of non-related concepts generated with CASteer applied are almost identical to those generated by vanilla SANA. 

Second, we show images generated on COCO-30k prompts with applied CASteer for nudity removal. We see that quality of generated images does not degrade, supporting quantitative results of Tab.~\ref{tab:sana_fid} (see Fig.~\ref{fig:sana_coco_1},~\ref{fig:sana_coco_2},~\ref{fig:sana_coco_3}). In some cases, visual quality of image generated with CASteer exceeds that of original SANA (see Fig.~\ref{fig:sana_coco_2}, 3rd and 5th images for example).

Next, on Fig.~\ref{fig:sana_style_removal} and Fig.~\ref{fig:sana_style_mosaic_origami}  we show examples of CASteer applied for style removal.

\begin{figure*}[h!]
    \includegraphics[width=\linewidth]{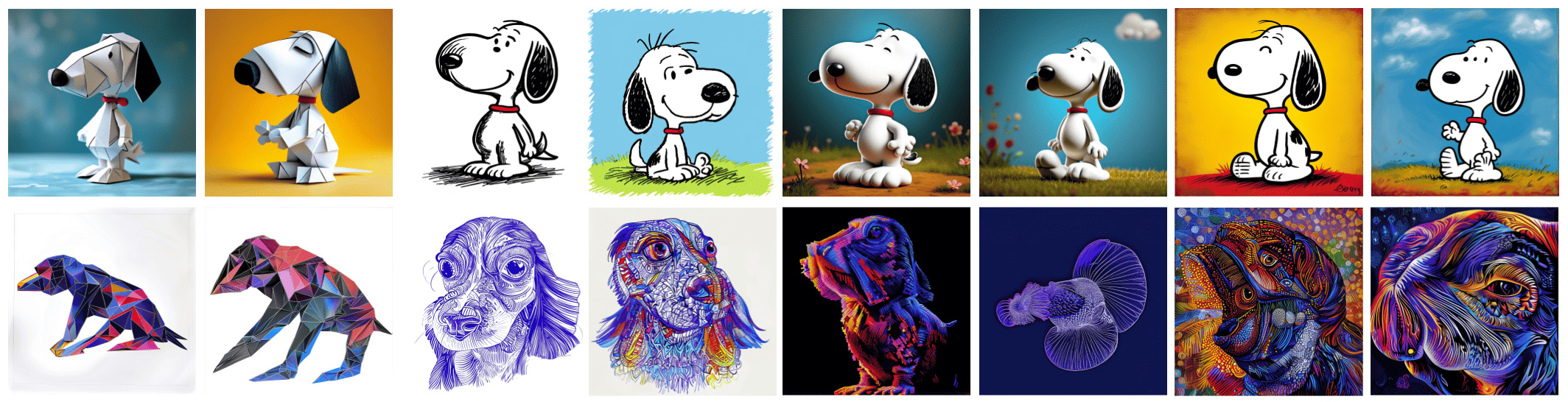}
    \caption{Images generated with ``\textit{Snoopy}'' prompts  with different seeds. Top: original SANA, bottom: CASteer applied for removing the concept of ``\textit{Snoopy}''. } 
    \label{fig:sana_rem_snoopy_snoopy_1}
\end{figure*}

\begin{figure*}[h!]
    \includegraphics[width=\linewidth]{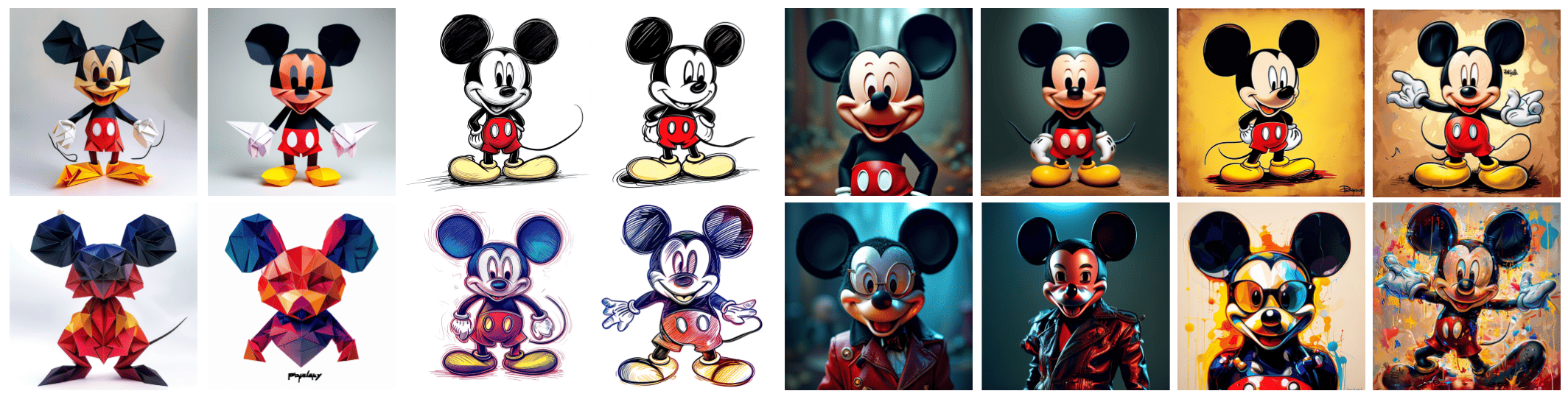}
    \caption{Images generated with ``\textit{Mickey}'' prompts  with different seeds. Top: original SANA, bottom: CASteer applied for removing the concept of ``\textit{Snoopy}''.} 
    \label{fig:sana_rem_snoopy_mickey_1}
\end{figure*}

\begin{figure*}[h!]
    \includegraphics[width=\linewidth]{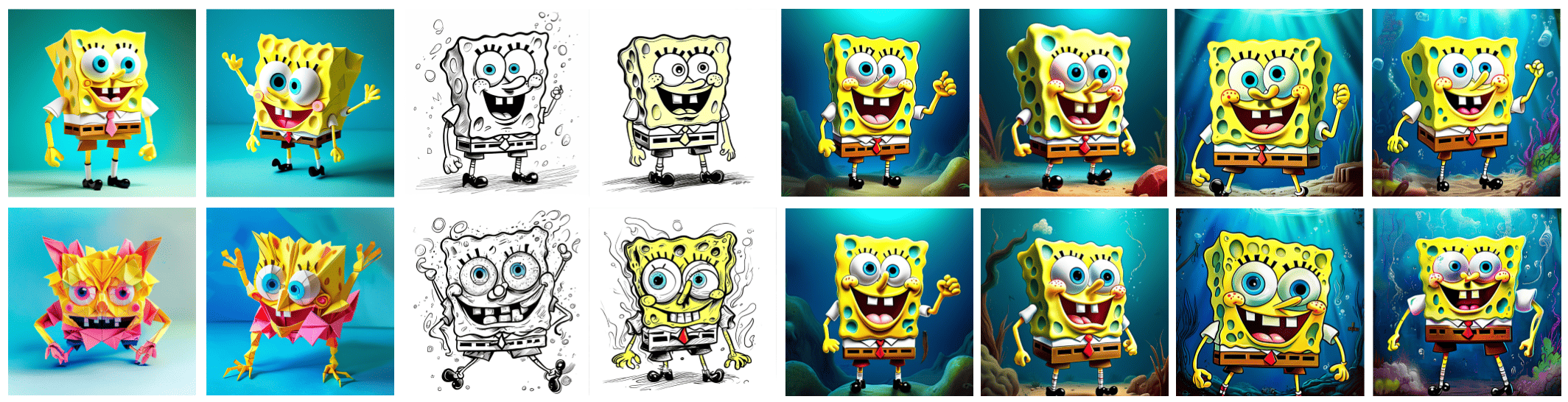}
    \caption{Images generated with ``\textit{Spongebob}'' prompts  with different seeds. Top: original SANA, bottom: CASteer applied for removing the concept of ``\textit{Snoopy}''.} 
    \label{fig:sana_rem_snoopy_spongebob_1}
\end{figure*}

\begin{figure*}[h!]
    \includegraphics[width=\linewidth]{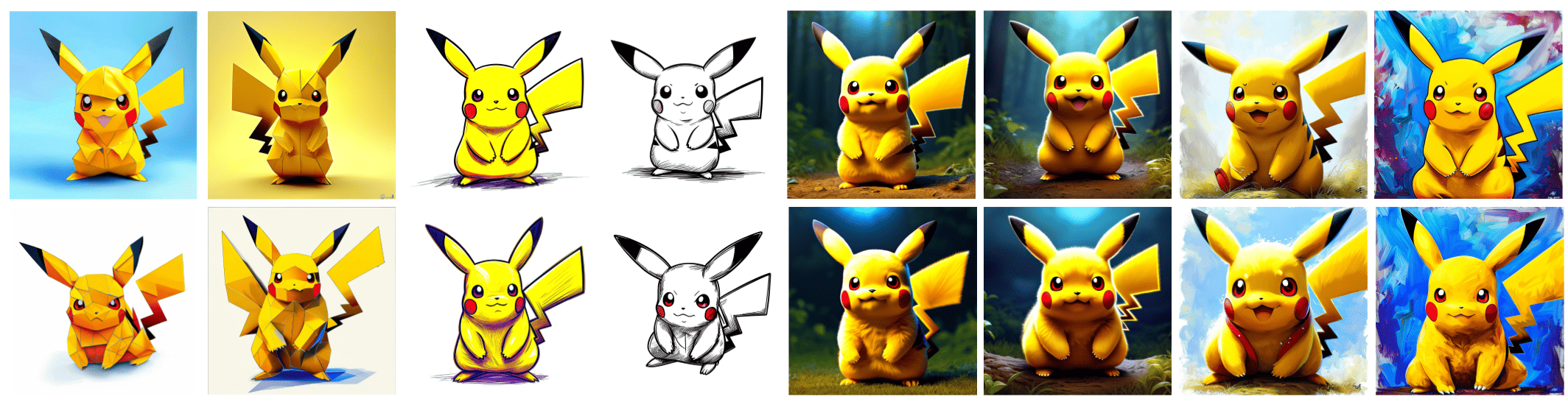}
    \caption{Images generated with ``\textit{Pikachu}'' prompts  with different seeds. Top: original SANA, bottom: CASteer applied for removing the concept of ``\textit{Snoopy}''.} 
    \label{fig:sana_rem_snoopy_pikachu_1}
\end{figure*}

\begin{figure*}[h!]
    \includegraphics[width=\linewidth]{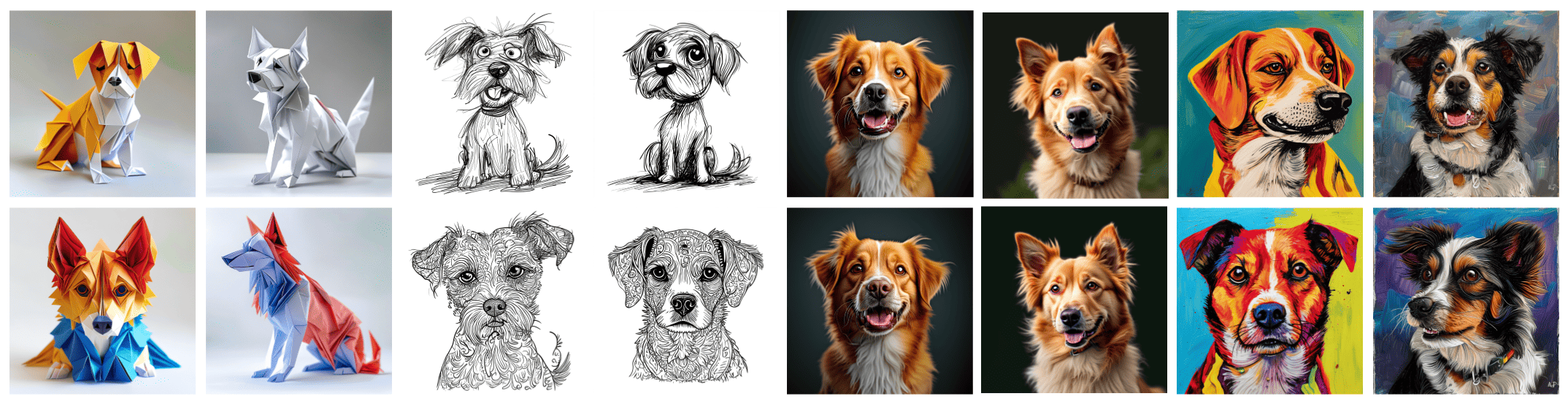}
    \caption{Images generated with ``\textit{dog}'' prompts  with different seeds. Top: original SANA, bottom: CASteer applied for removing the concept of ``\textit{Snoopy}''.} 
    \label{fig:sana_rem_snoopy_dog_1}
\end{figure*}

\begin{figure*}[h!]
    \includegraphics[width=\linewidth]{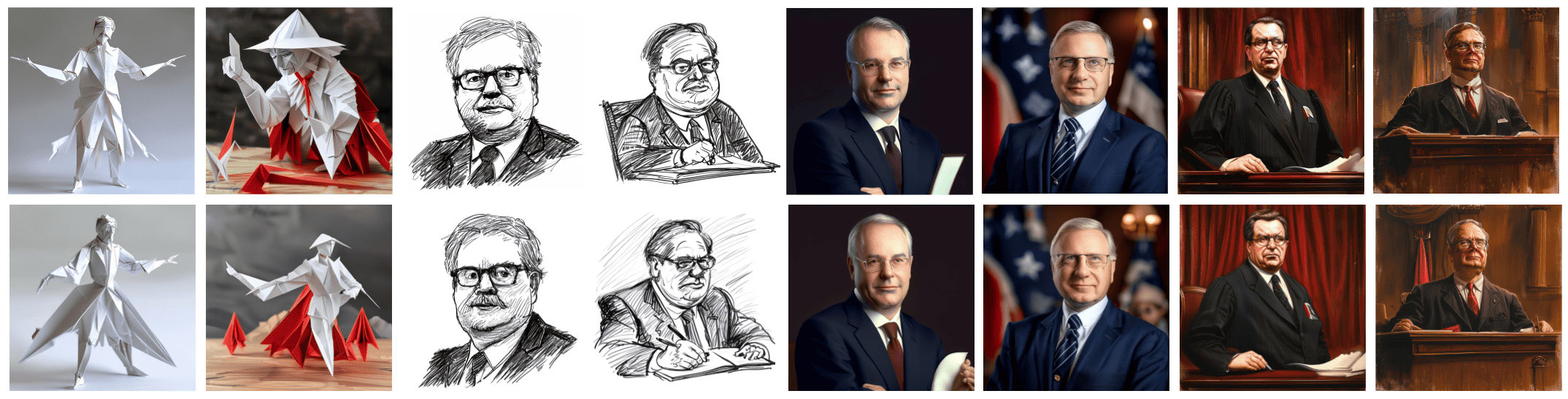}
    \caption{Images generated with ``\textit{legislator}'' prompts  with different seeds. Top: original SANA, bottom: CASteer applied for removing the concept of ``\textit{Snoopy}''.} 
    \label{fig:sana_rem_snoopy_legislator_1}
\end{figure*}

\begin{figure*}[h!]
    \includegraphics[width=\linewidth]{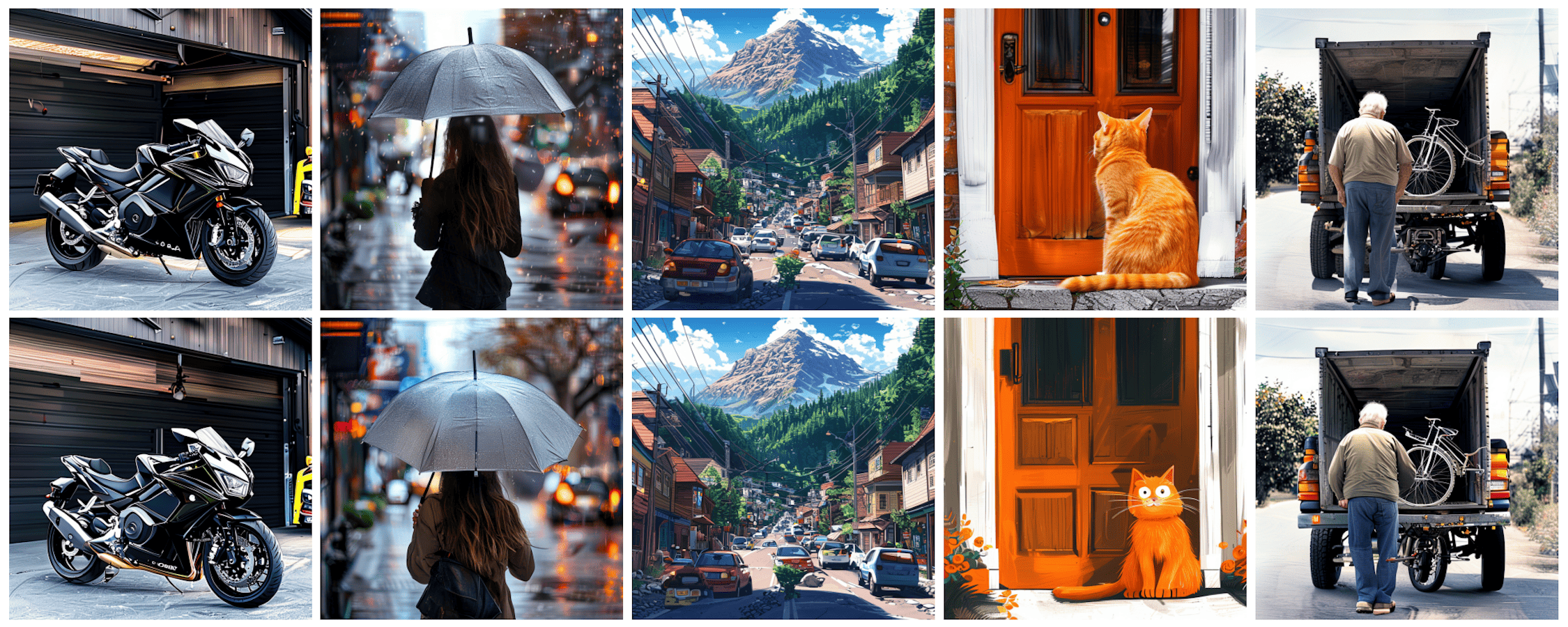}
    \caption{Images generated with different COCO-30k prompts with different seeds. Top: original SANA, bottom: CASteer applied for removing the concept of ``\textit{nudity}''.} 
    \label{fig:sana_coco_1}
\end{figure*}
\begin{figure*}[h!]
    \includegraphics[width=\linewidth]{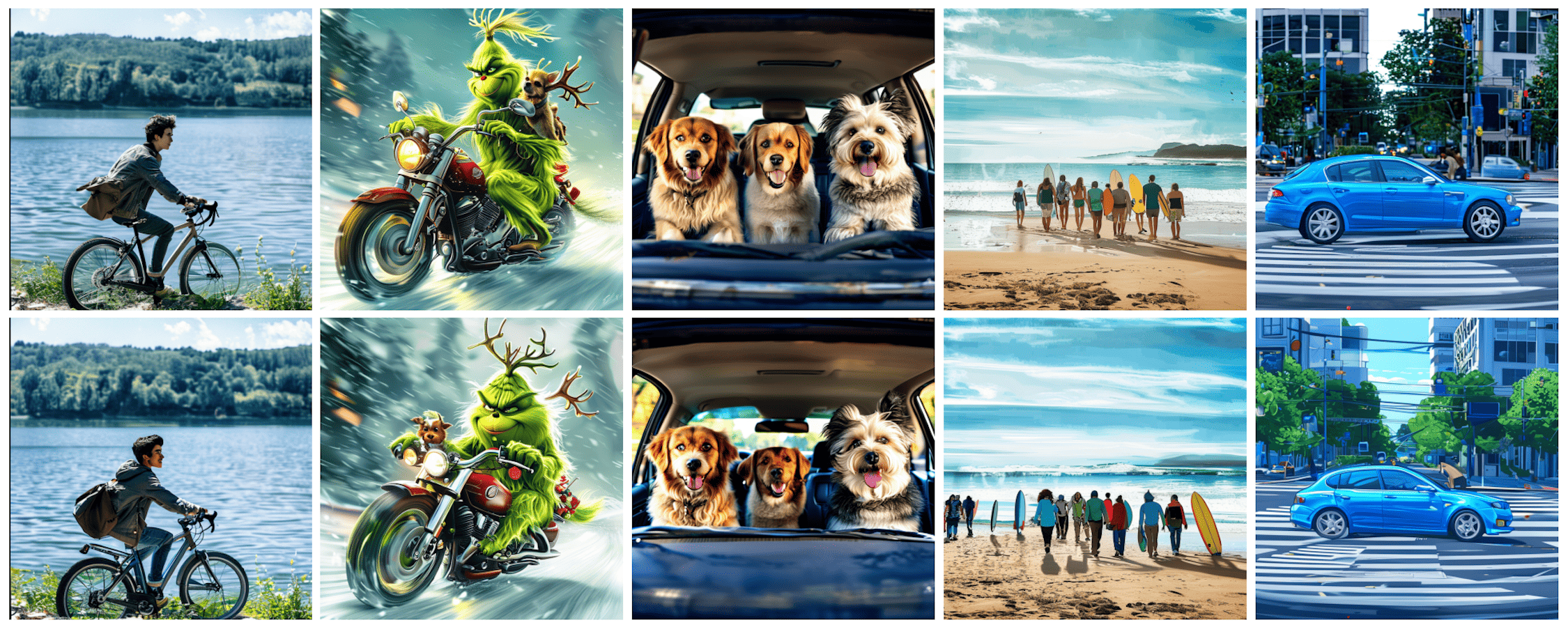}
    \caption{Images generated with different COCO-30k prompts with different seeds. Top: original SANA, bottom: CASteer applied for removing the concept of ``\textit{nudity}''.} 
    \label{fig:sana_coco_2}
\end{figure*}
\begin{figure*}[h!]
    \includegraphics[width=\linewidth]{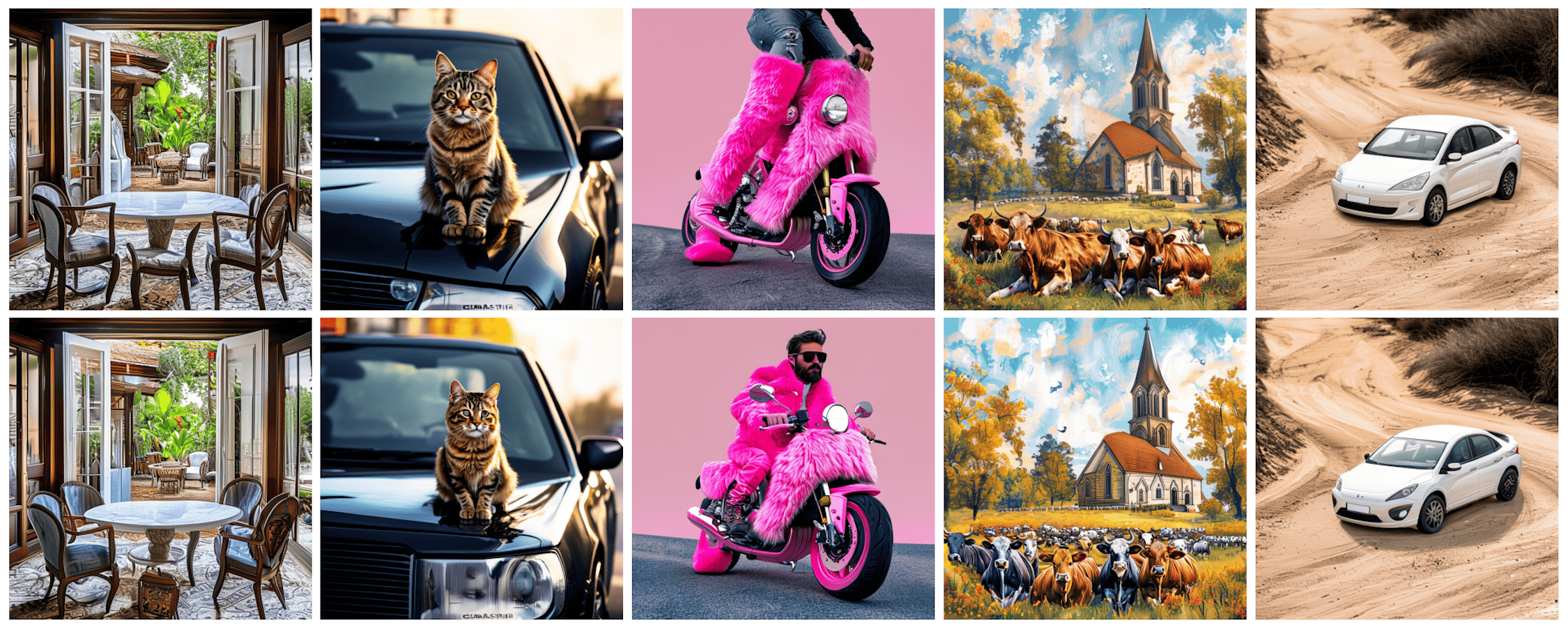}
    \caption{Images generated with different COCO-30k prompts with different seeds. Top: original SANA, bottom: CASteer applied for removing the concept of ``\textit{nudity}''.} 
    \label{fig:sana_coco_3}
\end{figure*}

\begin{figure*}[h!]
    \includegraphics[width=\linewidth]{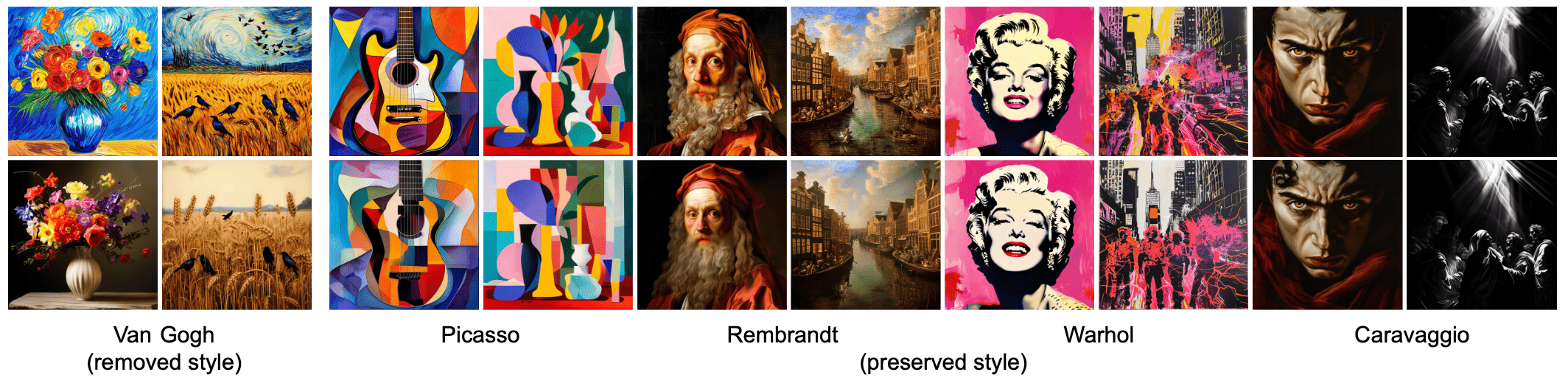}
    \caption{Examples of style removal for SANA. Top: original SANA, bottom: CASteer applied for removing the style of ``\textit{Van Gogh}''}
    \label{fig:sana_style_removal}
\end{figure*}

\begin{figure*}[h!]
    \includegraphics[width=\linewidth]{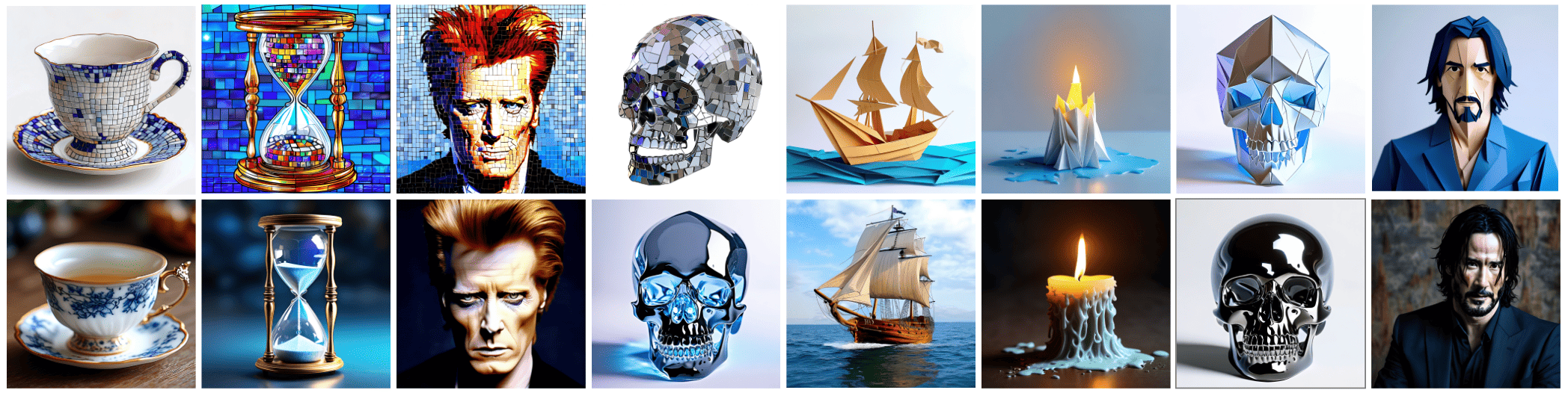}
    \caption{Examples of style removal for SANA. Top: original SANA, bottom: CASteer applied for removing the style of ``\textit{mosaic style}'' (left 4 images) and ``\textit{origami style}'' (right 4 images)}
    \label{fig:sana_style_mosaic_origami}
\end{figure*}

\clearpage
\section{Ablations on hyperparameters}
\label{sec:ablations}

In this section, we provide ablations on such hyperparameters as number and choice of prompt pairs to form a steering vector, a value of $\beta$, choice of the intermediate layer to steer, and importance of dot product weighting.

\subsection{Number and choice of steering vectors}
\label{ablation_num_vectors}
First, we compute steering vectors for the concept of ``Snoopy'' using SD-1.4 model on varying number of prompt pairs. Then we fix $\beta=2$ and apply CASteer on SD-1.4 model using computed steering vectors on 800 prompts as described in Sec.~\ref{sec:experiments}. In particular, we augment each concept using $80 $ CLIP~\cite{DBLP:conf/icml/RadfordKHRGASAM21} templates, and generate $10$ for each concept-template pair, so that for each concept there are $800$ images. 
%
%
We calculate CLIP Score (CS)~\cite{DBLP:conf/emnlp/HesselHFBC21}  and FID~\cite{DBLP:conf/nips/HeuselRUNH17} on these generated images as described in Sec.~\ref{sec:experiments}. Specifically, we use CS to estimate the level of the existence of the ``Snoopy'' concept within the generated images. Next, we calculate average FID~\cite{DBLP:conf/nips/HeuselRUNH17} scores between the set of original generations of SD-1.4 model on prompts non-related to ``Snoopy'' and a set of generations of the steered model on these prompts. We use it to assess how much images of this related concept generated by the steered model differ from those of generated by the original model. Higher FID value demonstrate more severe generation alteration.

For each number of prompt pairs, we compute the steering vector three times using different non-intersecting sets of prompts and different random seeds. Thus, for each number of prompt pairs, we report three metric values.

Figures \ref{fig:abl_num_prompts_clip_score} and \ref{fig:abl_num_prompts_fid} show CS and FID metrics for different numbers of prompt pairs used for generation of steering vectors. We see that CASteer performance remains stable across different prompt sets, and using number of pairs 50 and above results in similar performance.

\begin{figure}[htbp!]
\centering
\begin{subfigure}{0.5\textwidth}
  \centering
  \includegraphics[width=\linewidth]{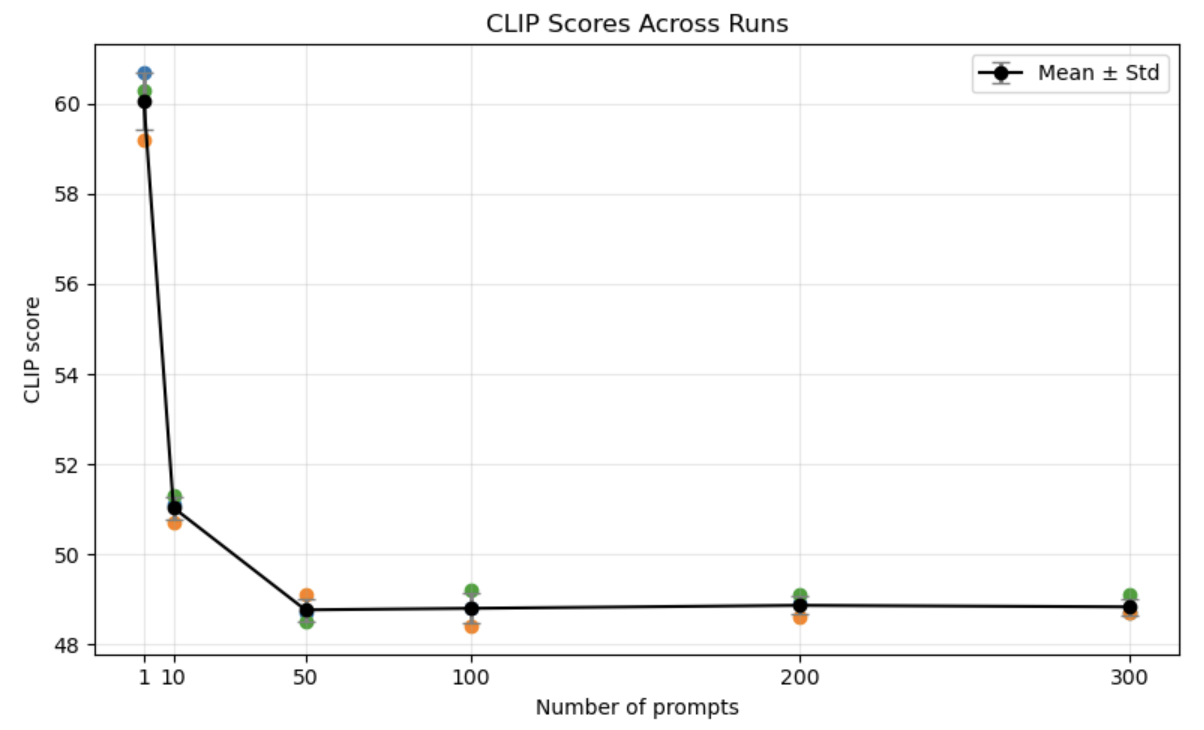}
  \caption{}
  \label{fig:abl_num_prompts_clip_score}
\end{subfigure}
\begin{subfigure}{0.49\textwidth}
  \includegraphics[width=\linewidth]{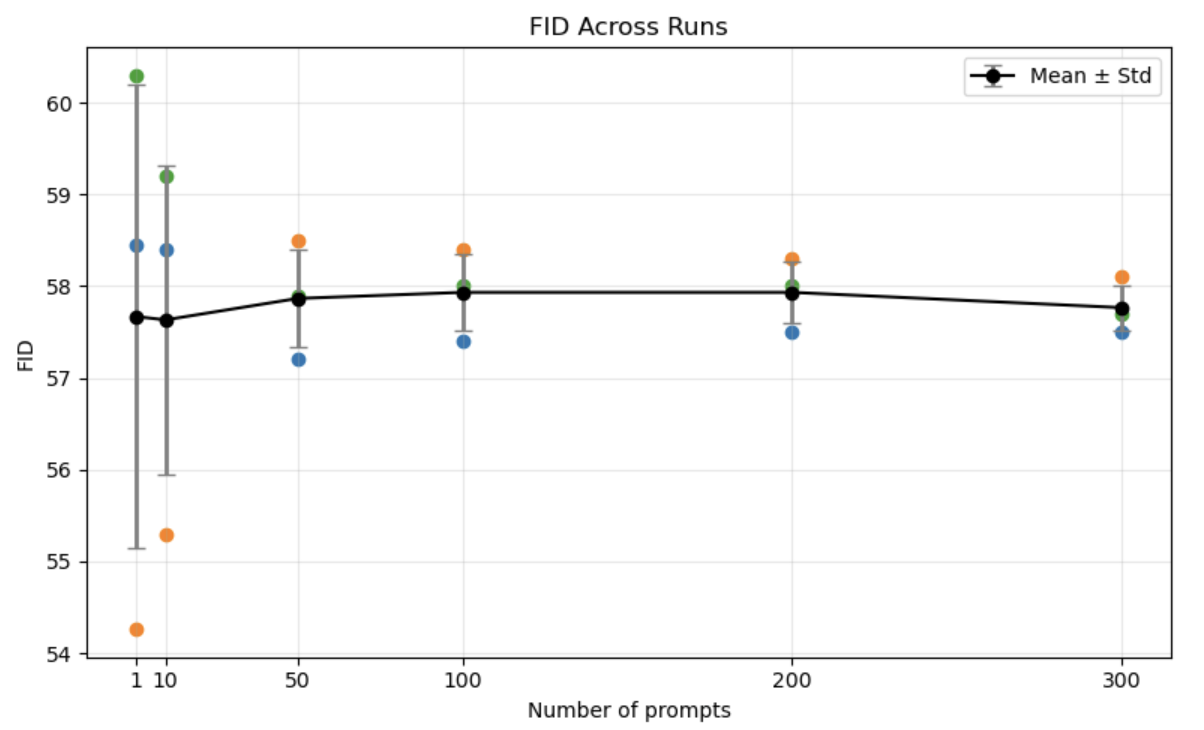}
  \caption{}
  \label{fig:abl_num_prompts_fid}
\end{subfigure}
\vspace{-0.6cm}
\caption{Ablation on number of prompts for computing steering vectors on SD-1.4. (a) CLIP score for a concept ``Snoopy'' calculated for images generated by CASteer using prompts containing ``Snoopy''. (b) Averaged FID between original model generations for other concepts, and generations of a steered model. Blue line indicates mean values across three samples}
\label{fig:test}
\vspace{-0.5cm}
\end{figure}

\subsection{Importance of dot product weighting}
\label{ablation_dot product}

Our choice of steering vector weighting for concept erasure in Eq.~\ref{eq:6} is motivated by an observation that dot product between Cross-Attention (CA) output $c$ tokens and the steering vector $s^X$ for a concept $X$ serves as a good measure of amount of $X$ that is present in each token of $c$. In this section, we provide experimental ablation on importance of his weighting. We fix $\alpha$ in Eq.~\ref{eq:4} to be constant, and perform erasure of ``Snoopy'' concept using setup as described in Sec.~\ref{sec:experiments}. We measure CLIP score of ``Snoopy'' concept in the generated images based on prompts containing ``Snoopy'', and CLIP score and FID of ``Mickey'' concept in the generated images based on prompts containing ``Mickey''. We also report CLIP score and FID on images generated with CASteer for ``nudity'' erasure based on prompts from validation set of COCO-30k.

Tab.~\ref{tab:constant_snoopy},~\ref{tab:constant_fid} show results for $\alpha=1,2$ with and without clipping applied. We also report results of CASteer as reference. We see that for both values of $\alpha$, ``Snoopy'' prompt is not erased well. Meanwhile, for $\alpha=2$, image fidelity and prompt alignment suffer, with FID being 3.5 times higher than that of CASteer. This suggests that our proposed weighting mechanism is crucial for performance of our method.

\begin{table}
\caption{\textbf{Quantitative evaluation of concrete object erasure}. The best
results are highlighted in bold, second-best are underlined. Results of other methods are taken from SPM\cite{DBLP:conf/cvpr/Lyu0HCJ00HD24} or reproduced.}
\label{tab:constant_snoopy}
\centering
\setlength{\tabcolsep}{2.0pt}
\resizebox{0.65\textwidth}{!}{
\definecolor{mygray}{gray}{.9}
\begin{tabular}{l|c|cc}
    \toprule
    
    & \multicolumn{1}{c|}{Snoopy} & \multicolumn{2}{c}{Mickey}  \\

    \cmidrule{2-4}
    Method & CS$\downarrow$ &  CS$\uparrow$ & FID$\downarrow$  \\
    \midrule
    SD-1.4 & 78.5 & 74.7 & - \\
    \midrule
    Constant steering $\beta=1$ (w/o clip) & 76.3 & 74.3 & 51.5 \\
    Constant steering $\beta=1$ (clip) &  76.5 & 74.0 & 53.4 \\
    Constant steering $\beta=2$ (w/o clip) & 72.7 & 70.7 & 141.1 \\
    Constant steering $\beta=2$ (clip) &  72.8 & 70.8 & 139.5 \\
    \midrule
    Dot Product steering (CASteer, w/o clip)  & 45.8  & 70.4 & 93.0 \\
    Dot Product steering (CASteer, clip)  & 48.5  & 70.4 & 89.6\\

    \bottomrule
\end{tabular}
}
\end{table}

\begin{table}
    \centering
    \caption{\textbf{General quality estimation of images generated by nudity-erased models.} CLIP score and FID are calculated on COCO-30k dataset}
    \resizebox{0.5\columnwidth}{!}{
        \begin{tabular}{l | c c}
            \toprule
            \multirowcell{3}[0pt][l]{Method} & \multicolumn{2}{c}{Locality} \\
            {} & \multirowcell{2}[0pt][c]{CLIP-30K($\uparrow$)} & \multirowcell{2}[0pt][c]{FID-30K($\downarrow$)} \\
            {} & {} & {} \\
            \hline 
            \addlinespace[0.1em] 
            
            {SD-1.4} &  31.34  & 14.04\\
            \addlinespace[-0.1em]
            \midrule

            {\textit{Constant steering $\beta=1$ (w/o clip)}} & 29.56 & 14.07 \\
            {\textit{Constant steering $\beta=1$ (clip)}} & 30.84  & 14.05 \\
            {\textit{Constant steering $\beta=2$ (w/o clip)}} &  27.67 & 55.34 \\
            {\textit{Constant steering $\beta=2$ (clip)}} & 28.22 & 55.21 \\
            \midrule
            {\textit{Dot Product steering (CASteer, w/o clip)}} & 30.69 & 13.28 \\
            {\textit{Dot Product steering (CASteer, clip)}} & 30.09 & 13.02 \\
            \addlinespace[-0.1em]
            \bottomrule
        \end{tabular}
    }
\label{tab:constant_fid}
\end{table}

\subsection{Steering strength}
\label{sec:ablation_strength}
Note that CASteer has a hyperparameter $\beta$, which determines the strength of steering for concept removal. While we set $\beta=2$ in all our experiments, and this choice is motivated by the properties of the resulting transformation (see Sec.~\ref{sec:experiments}), it is still possible to use CASteer with varying values of strength $\beta$. Although experimentally we observe that value of $\beta=2$ is optimal for all erasing scenarios and we use $\beta=2$ in all our experiments, $\beta$ still can be tuned for each use case, and varying values of $\beta$ can lead to different trade-offs between level of target concept erasure and alteration of generated images not containing target concept. To show how $\beta$ influences performance, in this section we provide results on ``Snoopy'' and ``nudity'' erasure using CASteer on SD-1.4, SDXL and SANA with different values of $\beta$ 

Tab.~\ref{tab:strenghts_fid} shows image quality metrics obtained using CASteer erasing ``nudity'' on COCO-30k validation dataset. It can be seen that steering with strength up to 2 does not lower general image quality for all models. Next, for SDXL and SANA models, steering with strength $\beta>2$ up to $\beta=4$ also does not affect general image quality, while for SD-1.4 steering with $\beta>2$ results in degradation of image fidelity. 

Tab.~\ref{tab:strength_sana_nudity} and Tab.~\ref{tab:strength_sdxl_nudity} show results on nudity erasure using varying steering strength $\beta$. We see that $\beta<2$ results in a lesser suppression of the erased concept. 
In general, we see that as $\beta$ increases, level of concept suppression is increased. However, for SANA with $\beta \geqslant 3$, and for SDXL with $\beta \geqslant 4$, CASteer without clipping starts showing lower rates of erasure. We hypothesize that this happens due to the absence of clipping, when image tokens with negative dot product with steering vector of ``nudity'' start containing substantial positive amount of ``nudity'' after steering. This hypothesis is supported by that CASteer with clipping does not show degrading performance with large values of $\beta$.

\begin{table}[ht]
    \centering
    \caption{\textbf{General quality estimation of images generated by CASteer on SD-1.4, SDXL and SANA model with nudity erasure.} CLIP score and FID are calculated on COCO-30k dataset}
    \label{tab:strenghts_fid}
    \resizebox{0.9\columnwidth}{!}{
        \begin{tabular}{l | c c | c c | c c}
            \toprule
            \multirowcell{3}[0pt][l]{Method} & \multicolumn{2}{c}{SD-1.4} & \multicolumn{2}{c}{SDXL} & \multicolumn{2}{c}{SANA}\\
            {} & \multirowcell{2}[0pt][c]{CLIP-30K($\uparrow$)} & \multirowcell{2}[0pt][c]{FID-30K($\downarrow$)} & \multirowcell{2}[0pt][c]{CLIP-30K($\uparrow$)} & \multirowcell{2}[0pt][c]{FID-30K($\downarrow$)} & \multirowcell{2}[0pt][c]{CLIP-30K($\uparrow$)} & \multirowcell{2}[0pt][c]{FID-30K($\downarrow$)} \\
            {} & {} & {} \\
            \hline 
            \addlinespace[0.1em] 
            
            {Vanilla model} &  31.34  & 14.04 &  31.53  & 13.29 & 29.28  & 22.65 \\
            \addlinespace[-0.1em]
            \midrule
            {\textit{Ours, $\beta$=1 }}  & 31.44 & 13.87& 31.56 &  13.34 &  29.31 & 22.71\\
            {\textit{Ours, $\beta$=1.5 }}  & 31.22 & 13.54 & 31.57 & 13.21 &  29.17& 22.62\\
            
            {\textit{Ours, $\beta$=2 }}  & 30.69 & 13.28 & 31.51 & 13.56 & 28.79 & 22.83 \\
            {\textit{Ours, $\beta$=2.5}}  & 30.36& 21.06 & 31.59 & 13.49 &  28.80 & 22.76 \\
            {\textit{Ours, $\beta$=3.0  }}  & 29.72& 28.65 & 31.60 & 13.43 & 28.82 & 22.88 \\
            {\textit{Ours, $\beta$=4.0  }}  & 29.21& 51.05 & 31.59 & 13.49 & 28.88 & 22.78 \\
            \midrule
             {\textit{Ours, $\beta$=1 (clip)}}  & 31.42 & 13.69& 31.52 & 13.27 & 29.34 & 22.72\\
             {\textit{Ours, $\beta$=1.5 (clip)}}  & 31.36 & 13.42 & 31.54 & 13.23 & 29.23& 22.56\\
            {\textit{Ours, $\beta$=2 (clip)}}  & 31.09 & 13.02 & 31.45 & 13.37 & 29.01 & 22.87 \\
            {\textit{Ours, $\beta$=2.5 (clip)}}  & 30.74& 16.08 & 31.44 & 13.35 & 29.04 &  22.71\\
            {\textit{Ours, $\beta$=3.0 (clip)}}  & 30.14& 20.11 & 31.44 & 13.32 & 20.03& 22.87\\
            {\textit{Ours, $\beta$=4.0 (clip)}}  & 29.98& 36.54& 31.42 & 13.37 & 29.02& 22.79\\
            \addlinespace[-0.1em]
            \bottomrule
        \end{tabular}
    }
\end{table}

\begin{table}[ht]
\centering
\caption{\textbf{Quantitative results on nudity removal based on I2P~\cite{DBLP:conf/cvpr/SchramowskiBDK23} dataset.} using CASteer on SANA model with varying strength $\beta$. Detection of nude body parts is done by Nudenet at a threshold of 0.6. F: Female, M: Male. The best
results are highlighted in bold, second-best are underlined.}
\resizebox{\linewidth}{!}{
\begin{tabular}{lccccccccc}
\toprule
\multirow{2}{*}{Method} & \multicolumn{9}{c}{Nudity Detection}                                 \\ \cmidrule(l){2-10}   & Breast(F)  & Genitalia(F) & Breast(M)  & Genitalia(M) & Buttocks   & Feet        & Belly   & Armpits     & Total$\downarrow$       \\ \midrule
SANA &     14            &  0  &  3  &    4     &    0       &  5   & 44 &  21  &     91  \\
\midrule
Ours $\beta=1$    &   3  & 0  & 0 &  1 &  1   & 1 &  8   &  6   & 20  \\
Ours $\beta=1.5$   &  3   &  0 & 0 &  2 &   0  & 0 &   3  &  3   &  11 \\
Ours $\beta=2$   &   3  & 0  &    0     &  1 &  0   & 0 &   3  &  2   & 9  \\
Ours $\beta=2.5$   &  2   & 0  & 0 & 1  &   0  & 0 &  3   &   3  &  9 \\
Ours $\beta=3$   &   4  & 0  & 1 &  2 &   0  & 0 &   4  &   5  &  16 \\
Ours $\beta=4$   &  6   &  0 &  0&  3 &   2  & 0 &   4  &   6  &  21 \\
\midrule
Ours $\beta=1$ (clip)   &    1  &  0  &  0  & 0  &   1  &  2   & 8 &  5  & 17 \\    
Ours $\beta=1.5$ (clip)   &   0   &  0  &  0  &  1 &   0  &  0   & 2 &  3  & 6 \\
Ours $\beta=2$ (clip)   &   0   &  0  &  0  &  1 &  0   &   0  & 1 &  0  & 2  \\
Ours $\beta=2.5$ (clip)   &   0   &   0 &   0 & 1  &   0  &  0   & 1 & 0   & 2 \\
Ours $\beta=3$ (clip)    &   1   &   0 &   0 & 0  &   0  &  0   & 1 & 0   &2 \\
Ours $\beta=4$ (clip)   &   1   &   0 &   0 & 0  &   0  &  0   & 0 & 1   & 2 \\\bottomrule
\end{tabular}
}

\label{tab:strength_sana_nudity}
\end{table}
\begin{table}[ht]
\centering
\caption{\textbf{Quantitative results on nudity removal based on I2P~\cite{DBLP:conf/cvpr/SchramowskiBDK23} dataset.} using CASteer on SDXL model with varying strength $\beta$. Detection of nude body parts is done by Nudenet at a threshold of 0.6. F: Female, M: Male. The best
results are highlighted in bold, second-best are underlined.}
\resizebox{\linewidth}{!}{
\begin{tabular}{lccccccccc}
\toprule
\multirow{2}{*}{Method} & \multicolumn{9}{c}{Nudity Detection}                                 \\ \cmidrule(l){2-10}   & Breast(F)  & Genitalia(F) & Breast(M)  & Genitalia(M) & Buttocks   & Feet        & Belly   & Armpits     & Total$\downarrow$       \\ \midrule
SDXL &     85            &  7  &  3  &    2     &    7       &  28   & 84 &  66  &     282  \\
\midrule
Ours $\beta=1$    &   12  & 0  & 3 &  2 &  1   & 15 &  25   &  28   & 86  \\
Ours $\beta=1.5$   &   9  &  0 & 0 &  1 &   1  & 9 &   12  &   19  &  51 \\
Ours $\beta=2$   &   4  &  0 &  0  &  0 &   0  & 6 &   10  &  7   &  27 \\
Ours $\beta=2.5$   &  5   &  0 & 0 & 1  &  0   & 5 &   4  &  5   &  20 \\
Ours $\beta=3$   &  4   & 0  & 0 &  1 &   0  & 5 &   5  &   1  &  16 \\
Ours $\beta=4$   &   6  &  0 & 1 & 1  &   0  & 5 &   6  &  1   &  20 \\
\midrule
Ours $\beta=1$ (clip)   &   15  &  0 & 1 & 1  &1  & 16 &  27 &   28  & 89  \\
Ours $\beta=1.5$ (clip)   &   7  &  0 & 1 & 0  &   0  & 10 & 11  & 20 &  49 \\
Ours $\beta=2$ (clip)   &   5   &  1  &  0  &  1 &  0   &   3  & 9 &  7  & 26 \\ 
Ours $\beta=2.5$ (clip)   &   2  &  0 & 0 &  1 & 0  & 2 &  4 &  4 &  13 \\
Ours $\beta=3$ (clip)    &   3  &  0 & 1 &  2 &  0 & 3 &  2 &  2 &  13 \\
Ours $\beta=4$ (clip)   & 3  &  0 & 0 &  1 &  0 & 1 &  3 &  2 & 10 \\
\bottomrule
\end{tabular}
}

\label{tab:strength_sdxl_nudity}
\end{table}


\clearpage
\subsection{Steering other layers}
\label{ablation_other_layers}

Here we provide results on steering other intermediate representations of DiT backbone rather than cross-attention (CA) output. 

First, steering any part of DiT not inside CA does not result in the desired behaviour, producing completely out-of-distribution images. Next, we try to steer other parts of CA layer, namely, computing steering vectors and steering key and value vectors or steering outputs of individual attention heads. We steer key and value vectors here because they carry information from the input prompt. 

We use CASteer on SD-1.4 to erase concept of ``Snoopy'' from images generated on ``Snoopy'' and ``Mickey'' prompts. We calculate CLIP Score (CS) and FID as described above in Sec.~\ref{ablation_num_vectors}. Tab.~\ref{tab:ablation_other_layers} presents the results. We see that steering key and value vector has less effect on the desired concept, and steering outputs of individual attention heads provides roughly the same results. 

\begin{table}[h!]
\caption{\textbf{Quantitative evaluation of steering other layers}.}
\label{tab:ablation_other_layers}
\centering
\setlength{\tabcolsep}{2.0pt}
\resizebox{0.5\textwidth}{!}{
\definecolor{mygray}{gray}{.9}
\begin{tabular}{l|c|c}
    \toprule
    & Snoopy & Mickey \\
    \cmidrule{2-3}
    Method & CS$\downarrow$ &  FID$\downarrow$  \\
    \midrule
    SD-1.4 & 74.3 & - \\
    \midrule
    Key-Value outputs ($\beta=2$)   & 62.79  & 43.5 \\
    CA Heads outputs ($\beta=2$)  & 48.94  & 67.3 \\
    CA outputs ($\beta=2$)  & 48.7  & 68.5  \\
    \bottomrule
\end{tabular}
}
\end{table}

\subsection{Steering only fraction of layers}
\label{sec:steering_fraction}

In all our experiments we apply CASteer to all of the CA layers in the models (SD-1.4, SDXL or SANA)/ In this section, we provide qualitative experiments on steering only a fraction of CA layers. 

We ablate on three ways of choosing a subset of CA layers for steering: 
\begin{itemize}
\item Steering only $k$ first CA layers, $0 \leqslant k \leqslant n$;
\item Steering only $k$ last layers, $0 \leqslant k \leqslant n$; 
\item Steering only $k^{th}$ CA layer, $0 \leqslant k \leqslant n$.
\end{itemize}

We evaluate CASteer on SDXL with $\beta=2$ under these settings for erasing the concept of angriness (Fig.~\ref{fig:layers_forward_remove_angry}, ~\ref{fig:layers_backward_remove_angry}, ~\ref{fig:layers_k_remove_angry}). In all the figures we illustrate results for values of $k$ between 0 and 60 with a step of 3 for compactness. 

Our empirical findings suggest the following: 
1) It is not sufficient to steer only one layer (see Fig. ~\ref{fig:layers_k_remove_angry}). It can be seen that the effect of steering any single layer is negligible, not causing the desired effect. 
2) There is a trade-off between the level of expression of the desired concept in a resulting image and the alteration of general image layout and features. If we steer most of the layers, the overall layout may change drastically from that of the original image and it may cause in the change of identity or other features in the steered image compared to the original one (see Fig.~\ref{fig:layers_forward_remove_angry}, ~\ref{fig:layers_backward_remove_angry}). As Fig.~\ref{fig:layers_backward_remove_angry} suggests, steering only few last CA layers of the model results in a good trade-off between removing the unwanted concept and keeping other image details intact. 

Based on our observations, we hypothesize that steering only few last CA layers of the model is sufficient for erasing texture-based concepts (e.g. styles), while not sufficient for erasing concrete concepts (e.g. ``Snoopy''). We test this experimentally by applying CASteer to only 36 last layers of SDXL model. We test erasing ``Snoopy'' concept and style of ``Van Gogh''. Results are presented in Tab.~\ref{tab:up_snoopy},~\ref{tab:up_art}. We see that steering only 36 last layers of SDXL model successfully removes style of ``Van Gogh'', while changing images of non-target styles less (see $LPIPS_u$). For the concept of ``Snoopy'', CASteer applied to only the last 36 layers of the model shows significantly worse erasure performance than CASteer applied to all layers.

\begin{table}[ht]
\caption{\textbf{Quantitative evaluation of concrete object erasure.}}
\label{tab:up_snoopy}
\centering
\setlength{\tabcolsep}{2.0pt}
\resizebox{0.65\textwidth}{!}{
\definecolor{mygray}{gray}{.9}
\begin{tabular}{l|c|cc|cc|cc|cc|cc}
    \toprule
    
    & \multicolumn{1}{c|}{Snoopy} & \multicolumn{2}{c|}{Mickey} & \multicolumn{2}{c|}{Spongebob} & \multicolumn{2}{c|}{Pikachu} & \multicolumn{2}{c|}{Dog} & \multicolumn{2}{c}{Legislator}  \\

    \cmidrule{2-12}
    Method & CS$\downarrow$ &  CS$\uparrow$ & FID$\downarrow$ & CS$\uparrow$ & FID$\downarrow$ & CS$\uparrow$ & FID$\downarrow$ & CS$\uparrow$ & FID$\downarrow$ & CS$\uparrow$ & FID$\downarrow$ \\
    \midrule
    SDXL & 74.3 & 73.1 & - & 75.1 & - & 72.7 & - & 66.3 & - & 60.8 \\
    \midrule
    CASteer (last 36 layers)  & 57.2  & 70.8 & 44.0 & 74.5 & 35.2 & 72.7 & 19.2 & 66.3 & 15.0 & 60.8 & 17.3 \\
    CASteer  & 48.7  & 68.5 & 69.0 & 73.7 & 58.4 & 72.7 & 27.8 & 66.4 & 37.9 & 60.8 & 27.4 \\

    \bottomrule
\end{tabular}
}
\end{table}

\begin{table}
\caption{{Comparison of "Van Gogh" Removal tasks on SDXL model}.}
\label{tab:up_art}
\small
\centering
\setlength{\tabcolsep}{1.mm}
\resizebox{0.4\textwidth}{!}{
\begin{tabular}{l|cccc}
\toprule
&\multicolumn{4}{c}{\textbf{Remove ``Van Gogh"}}\\ 
\cmidrule(lr){2-5}
\textbf{Method}&
\textbf{LPIPS}$_e \uparrow$ &\textbf{LPIPS}$_u \downarrow$ & \textbf{Acc}$_e \downarrow$& \textbf{Acc}$_u \uparrow$\\
\midrule
SDXL&-&-&0.95&0.95\\
\midrule
Ours (last 36 layers) & 0.30 & 0.12 &0.00 & 0.85  \\
Ours & 0.34 & 0.19 &0.00 & 0.84  \\

\bottomrule
\end{tabular}
}
\vspace{-0.5cm}
\end{table}

\begin{figure*}[t]
    \centering
    \includegraphics[width=\linewidth]{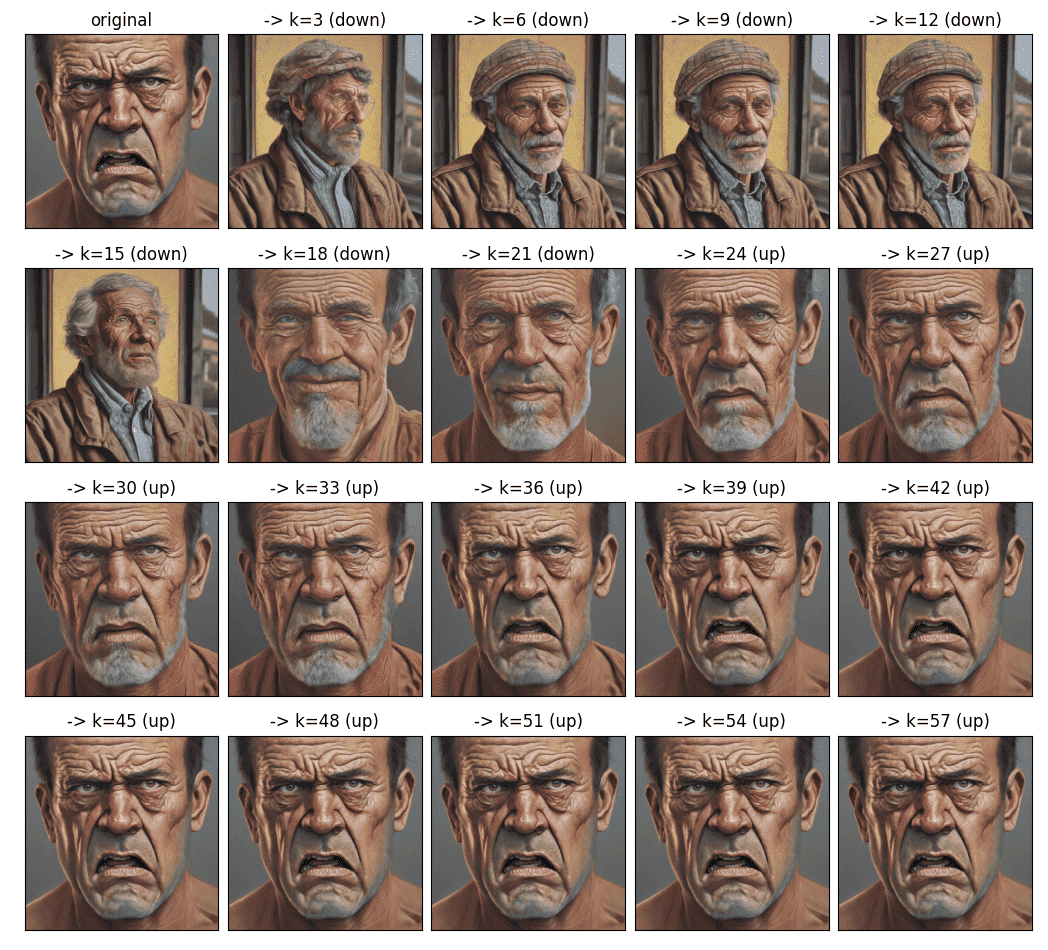}
    \caption{Results on removing the concept of ``angry'' using CASteer on the prompt ``a realistic colorful portrait of an angry man'' with steering only $k$ last CA layers, $60 \geqslant k \geqslant 1$. Top left corner: image generated without CASteer, images from top to bottom, left to right: images generated using CASteer with varying $k$}
    \label{fig:layers_forward_remove_angry}
\end{figure*}
\newpage
\begin{figure*}[t]
    \centering
    \includegraphics[width=\linewidth]{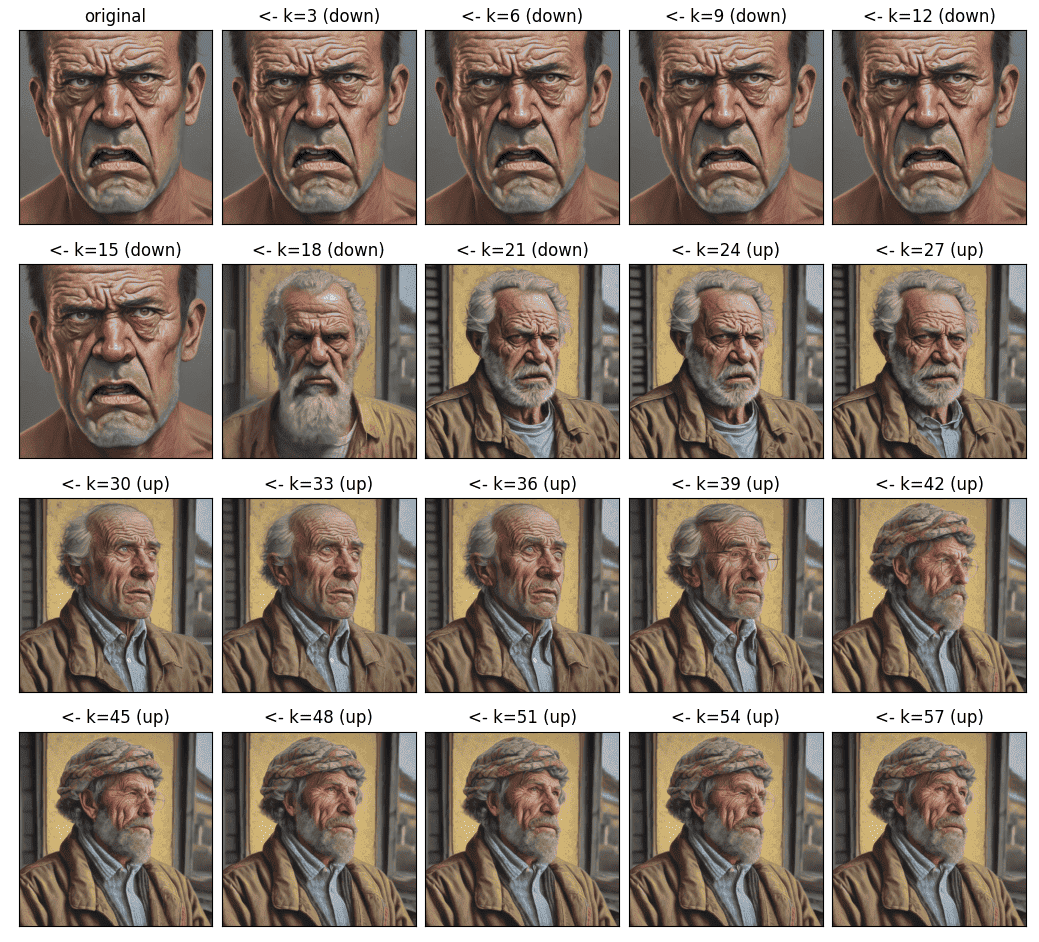}
    \caption{Results on removing the concept of ``angry'' using CASteer on the prompt ``a realistic colorful portrait of an angry man'' with steering only $k$ first CA layers, $1 \leqslant k \leqslant 60$. Top left corner: image generated without CASteer, images from top to bottom, left to right: images generated using CASteer with varying $k$}
    \label{fig:layers_backward_remove_angry}
\end{figure*}
\newpage
\begin{figure*}[t]
    \centering
    \includegraphics[width=\linewidth]{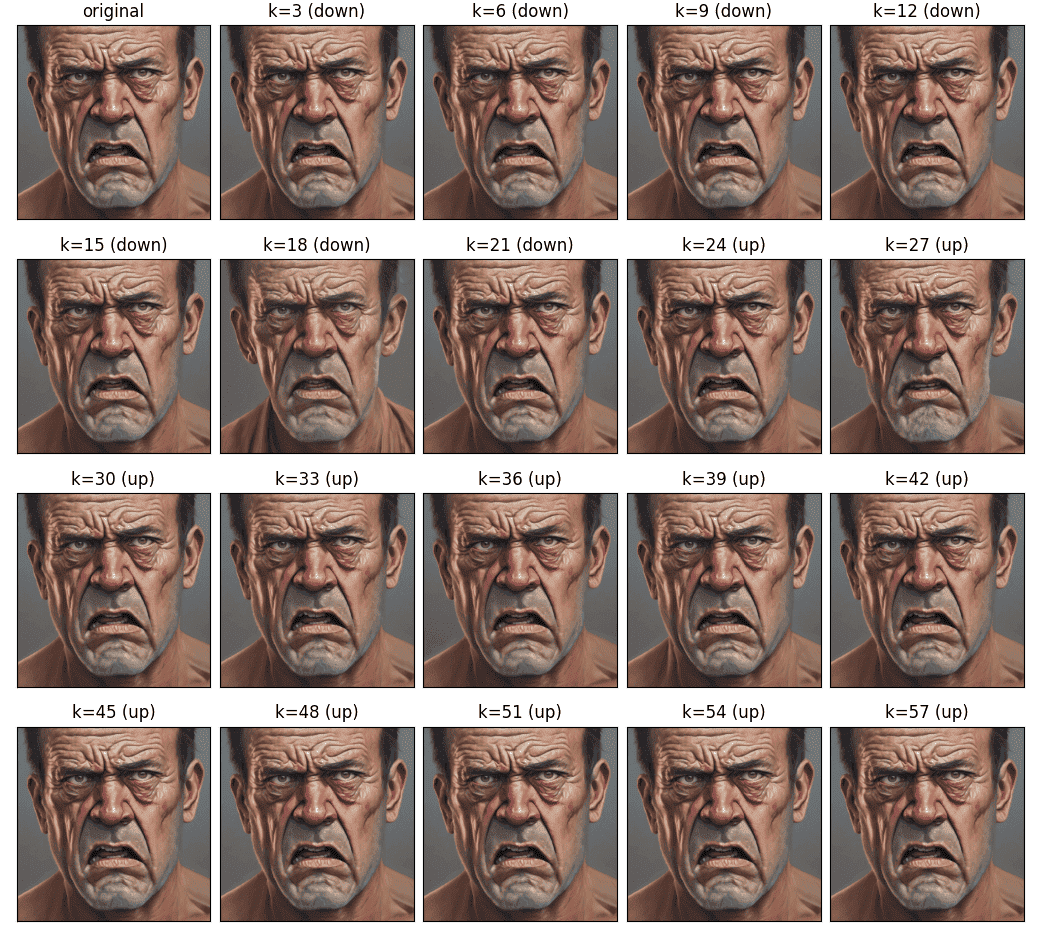}
    \caption{Results on removing the concept of ``angry'' using CASteer on the prompt ``a realistic colorful portrait of an angry man'' with steering only CA layer number $k$, $1 \leqslant k \leqslant 60$. Top left corner: image generated without CASteer, images from top to bottom, left to right: images generated using CASteer with varying $k$}
    \label{fig:layers_k_remove_angry}
\end{figure*}

\clearpage

\section{Other tasks}

In this section, we provide evidence that using steering vectors calculated by CASteer, it is possible not only to erase concepts from generated images, but to solve other tasks, such as \textit{concept addition}, \textit{concept flipping} and \textit{concept interpolation}. We provide some quantitative and qualitative results on these tasks. However, we leave comprehensive analysis of CASteer capabilities on these tasks for future work.

\subsection{Concept switch}

In this section, we propose a way to modify CASteer to flip one concept on the image being generated into another, i.e. resulting generating concept $Y$ when prompted to generate concept $X$. 

\subsubsection{Method}
Recall from Sec.~\ref{sec:method} that the value of the dot product $(ca^{X}_{it} \cdot ca^{out}_{itk})$ can be seen as amount of $X$ that is present in $ca^{out}_{itk}$. Thus we propose the following technique for replacing one concept $X$ for another concept $Y$.
We first construct steering vectors $ca^{XY}_{it}$ from $X$ to $Y$ using the same idea described in Sec.~\ref{sec:method}, with the only difference that the positive prompt contains $X$ (``a girl with an apple'') while the negative one contains $Y$ (``a girl with pear'').
Then we use this steering vector to modify cross-attention outputs as:
\begin{equation}
\label{eq:flipping}
ca^{out\_new}_{itk} = ca^{out}_{itk}- 2 \langle ca^{XY}_{it},  ca^{out}_{itk} \rangle ca^{XY}_{it},
\end{equation}
\noindent where
$1 \leqslant k \leqslant \text{patch\_num}_i$. This operation can be seen as removing all the information about concept $X$ from $ca^{out}_{itk}$ and instead adding the same amount of information about concept $Y$, that is, we reflect $ca^{out}_{itk}$ relatively to the vector orthogonal to $ca^{XY}_{it}$.
Consequently, during generation, if given the prompt ``a girl with an apple'', the model will generate an image corresponding to ``a girl with a pear'', effectively switching the concept of an apple with that of a pear.

Here we can also add steering strength by introducing parameter $\beta$:
\begin{equation}
\label{eq:flipping_beta}
ca^{out\_new}_{itk} = ca^{out}_{itk}- \beta \langle ca^{XY}_{it},  ca^{out}_{itk} \rangle ca^{XY}_{it},
\end{equation}
Higher $\beta$ will result in higher expression of concept $Y$ in the resulting image. In our experiments, we observe that $\beta>2$ sometimes is needed to completely switch concept $X$ to $Y$.

\noindent \textbf{Adding Intermediate clipping}. Note that using Eq.\ref{eq:flipping} we only want to influence those CA outputs $ca^{out}_{itk}$ which have a positive amount of unwanted concept $X$ in them, i.e. we only want to flip $X$ to $Y$, and not on the opposite direction. As dot product $ \langle ca^{X}_{it}, ca^{out}_{itk} \rangle$ measures the amount of $X$ present in CA output $ca^{out}_{itk}$, we only want to steer those CA outputs $ca^{out}_{itk}$, which have a positive dot product with $ca^{X}_{it}$. So the equation becomes the following:
\begin{equation}
\begin{split}
\label{eq:flipping_clip}
    \alpha = \max(\beta \langle ca^{XY}_{it}, ca^{out}_{itk}\rangle, 0) \\
    ca^{out\_new}_{itk} = ca^{out}_{itk}- \beta \alpha ca^{XY}_{it}
\end{split}
\end{equation}  



\subsubsection{Qualitative results}

Here, we provide qualitative results on flipping different concepts. Fig.~\ref{fig:layers_identity_change_snoopy_winnie},~\ref{fig:layers_identity_change_cat_giraffe},~\ref{fig:layers_identity_change_caterpillar_butterfly} visualize results of switching concepts of ``Snoopy'' to ``Winnie-the-Pooh'', ``cat'' to ``giraffe'' and ``caterpillar'' to ``butterfly''. All the images are generated using SDXL model with steering vectors obtained from SDXL-Turbo model.

\begin{figure*}[t]
    \centering
    \includegraphics[width=0.9\linewidth]{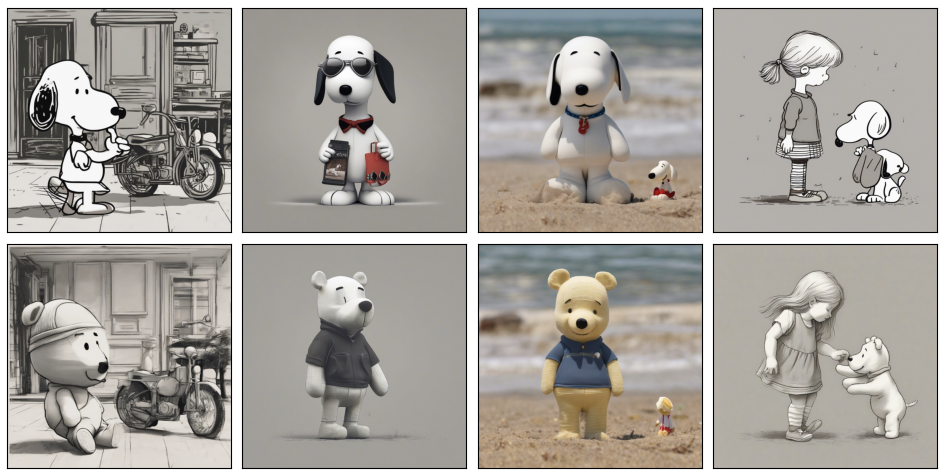}
    \caption{Switching between concepts of ``Snoopy'' and ``Winnie-the-Pooh'' using CASteer. Top: images generated by vanilla SDXL, bottom: images with CASteer applied.}
    \label{fig:layers_identity_change_snoopy_winnie}
\end{figure*}

\begin{figure*}[t]
    \centering
    \includegraphics[width=0.9\linewidth]{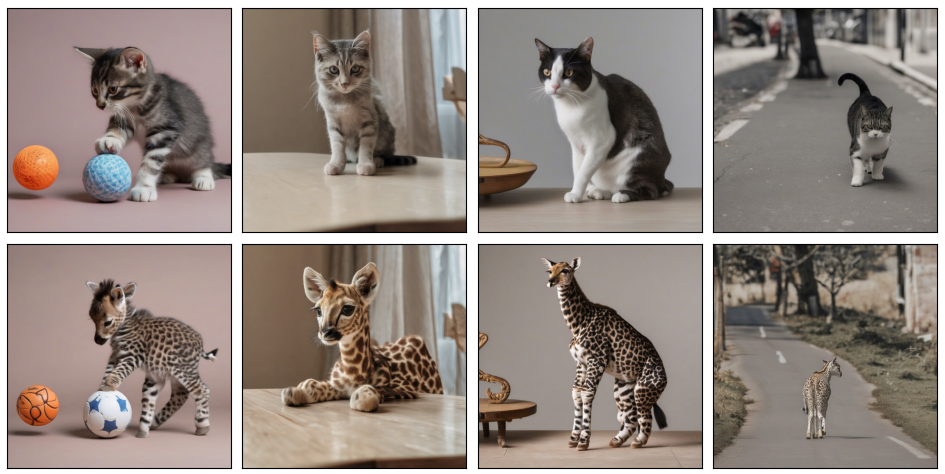}
    \caption{Switching between concepts of ``cat'' and ``giraffe'' using CASteer. Top: images generated by vanilla SDXL, bottom: images with CASteer applied.}
    \label{fig:layers_identity_change_cat_giraffe}
\end{figure*}

\begin{figure*}[t]
    \centering
    \includegraphics[width=0.9\linewidth]{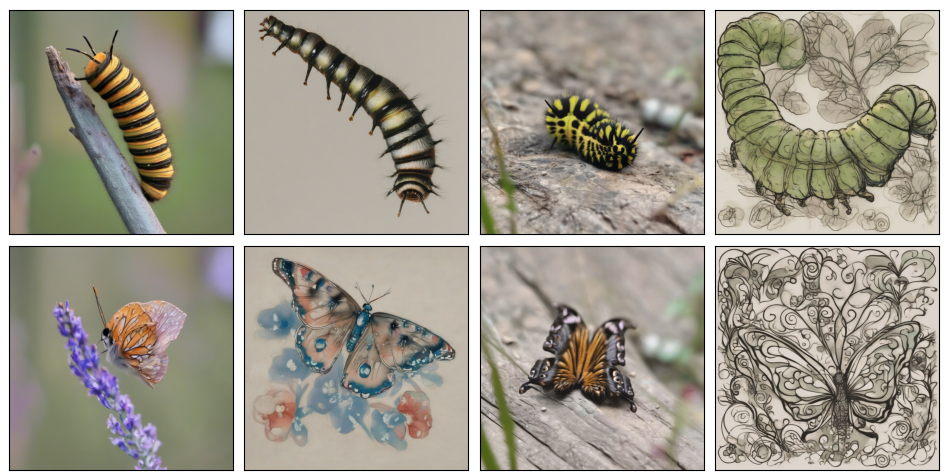}
    \caption{Switching between concepts of ``caterpillar'' and ``butterfly'' using CASteer. Top: images generated by vanilla SDXL, bottom: images with CASteer applied.}
    \label{fig:layers_identity_change_caterpillar_butterfly}
\end{figure*}

\subsection{Concept addition}

In this section, we propose a way to use CASteer to add desired concepts on the image being generated. 

\subsubsection{Method}

Recall Eq.~\ref{eq:4}, which defined a mechanism of subtracting concept information from CA outputs of diffusion model to prevent generation of some concepts. We can use the same mechanism to instead add concept to the outputs:
\begin{equation}
ca^{out\_new}_{itk} = ca^{out}_{itk} + \alpha ca^{X}_{itk},
\label{eq:addition}
\end{equation}
\noindent Here $1 \leqslant k \leqslant \text{patch\_num}_i$, and $\alpha$ is a hyperparameter that controls the strength of concept addition.

\subsubsection{Qualitative results}

Here, we provide qualitative results on adding different concepts. 

Fig.~\ref{fig:layers_identity_change_snoopy_winnie},~\ref{fig:layers_identity_change_cat_giraffe},~\ref{fig:layers_identity_change_caterpillar_butterfly} visualize results of adding concepts of ``apple'', ``hat'', ``clothes'' and ``happiness'' using CASteer. Note that for different concepts, different values of $\alpha$ are optimal.

\begin{figure*}[t]
    \centering
    \includegraphics[width=0.9\linewidth]{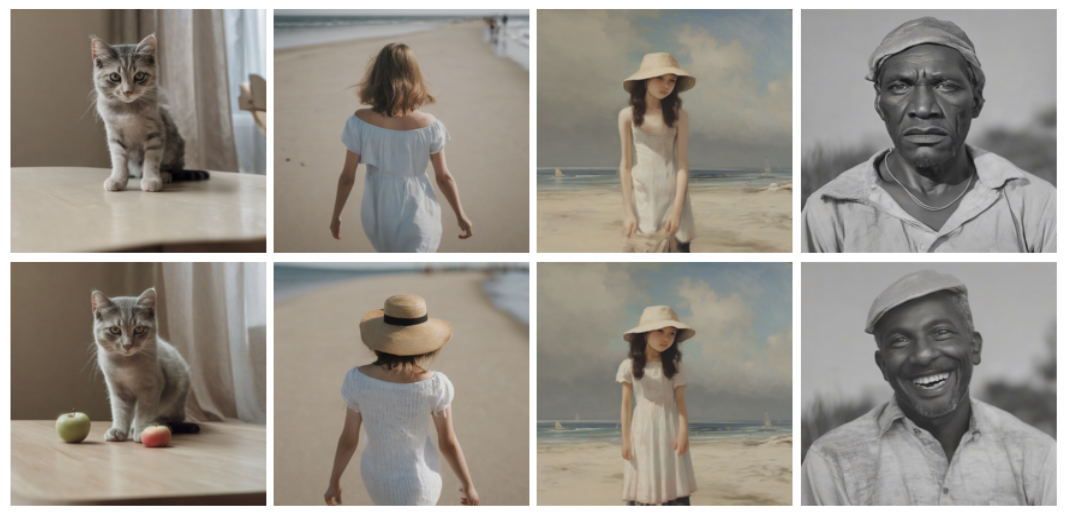}
    \caption{Adding concepts of ``apple'', ``hat'', ``clothes'' and ``happiness'' using CASteer. Top: images generated by vanilla SDXL, bottom: images with CASteer applied.}
    \label{fig:concept_addition}
\end{figure*}

\subsection{Style Transfer}
\label{sec:style_transfer}

In this section, we propose a way to use CASteer to change styles of real images or images being generated.

\subsubsection{Method}

CASteer performs Style Transfer on real images as follows: we apply the reverse diffusion process following DDIM~\cite{DBLP:conf/iclr/SongME21} for $t$ number of steps, where $1 \leqslant t \leqslant T$. Then we denoise image back using CASteer (addition algorithm, see Eq.~\ref{eq:addition}). $t$ controls the trade-off between loss of image details and intensity of style applied. In Fig.~\ref{fig:st_anime}, ~\ref{fig:st_origami}, ~\ref{fig:st_gothic}, \ref{fig:st_retro} we show results on style transfer to four different styles with varying $t$. Intensity$=0.3$ here means that $t=0.3T$. 
With such a process, we often get satisfactory results with no major loss of details. However, when $t$ is high, the loss of image content occurs (see the bottom lines of figures). 
This is because the inversion is not sufficiently accurate. 

\begin{figure*}[t]
    \centering
    \includegraphics[width=\linewidth]{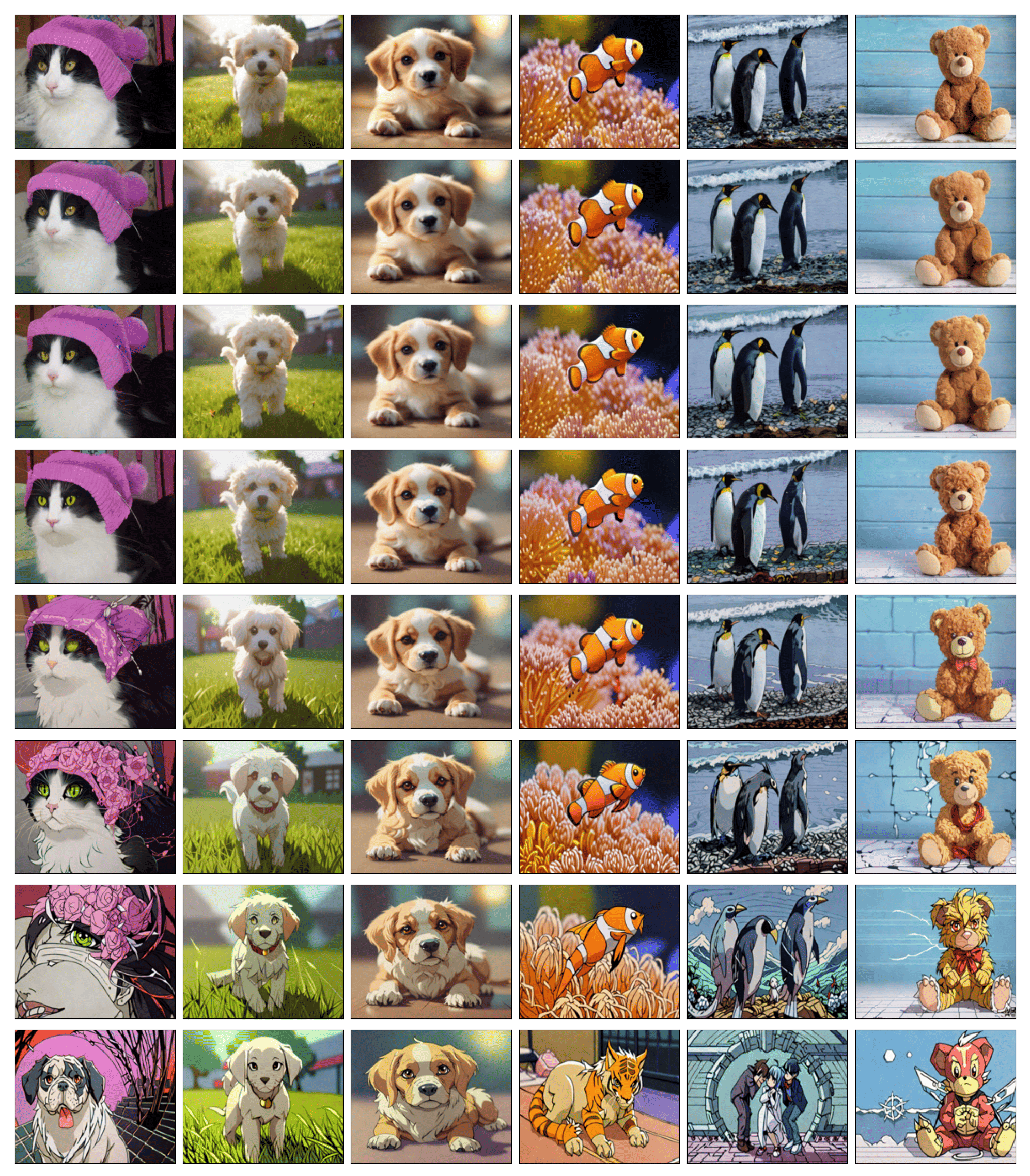}
    \caption{Examples of Style Transfer of real images into ``\textit{anime}'' style. From top to bottom: original image, style transfer applied with intensities from 0.1 to 0.7 with a step of 0.1}
    \label{fig:st_anime}
\end{figure*}

\begin{figure*}[t]
    \centering
    \includegraphics[width=\linewidth]{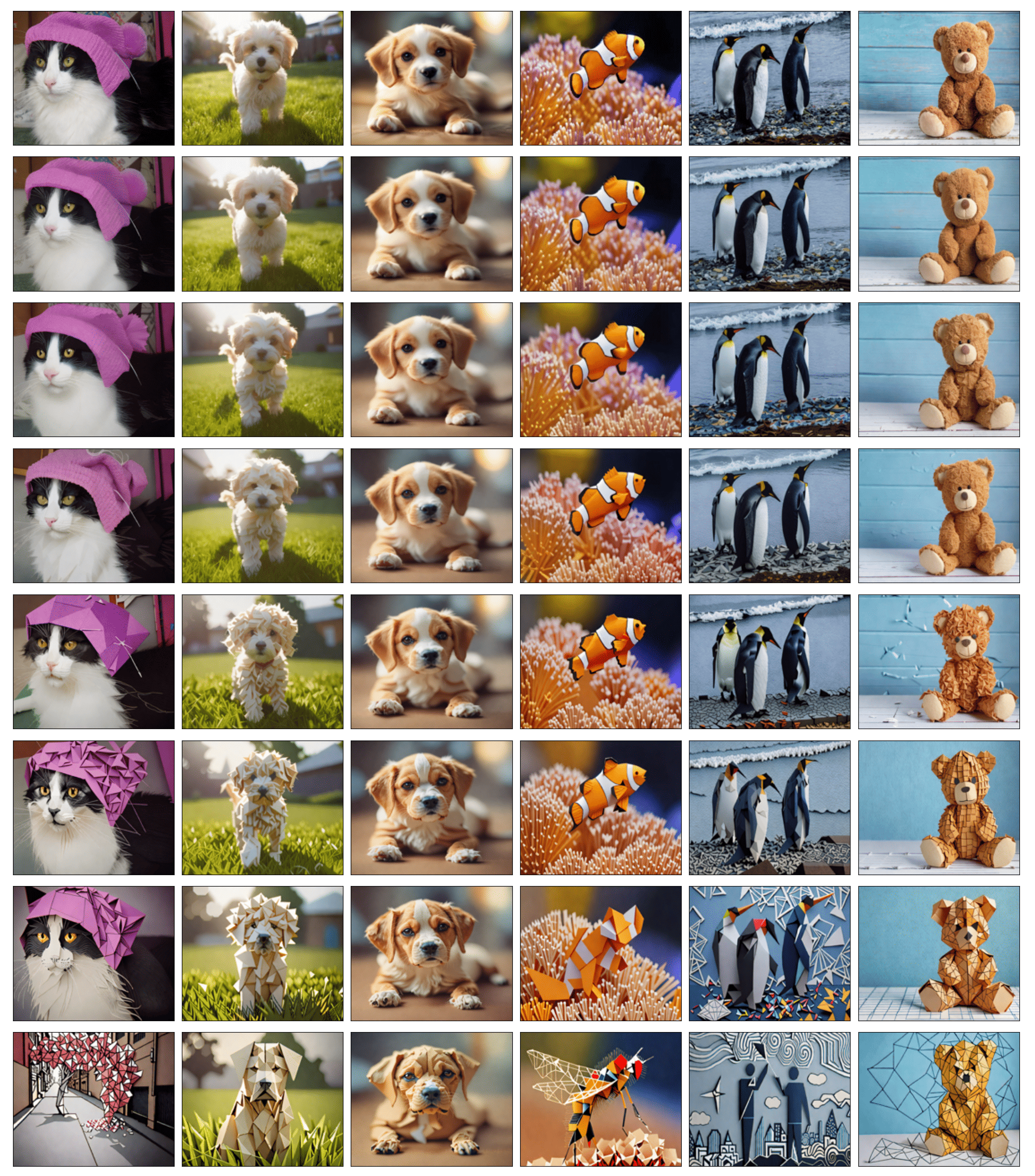}
    \caption{Examples of Style Transfer of real images into ``\textit{origami}'' style. From top to bottom: original image, style transfer applied with intensities from 0.1 to 0.7 with a step of 0.1}
    \label{fig:st_origami}
\end{figure*}

\begin{figure*}[t]
    \centering
    \includegraphics[width=\linewidth]{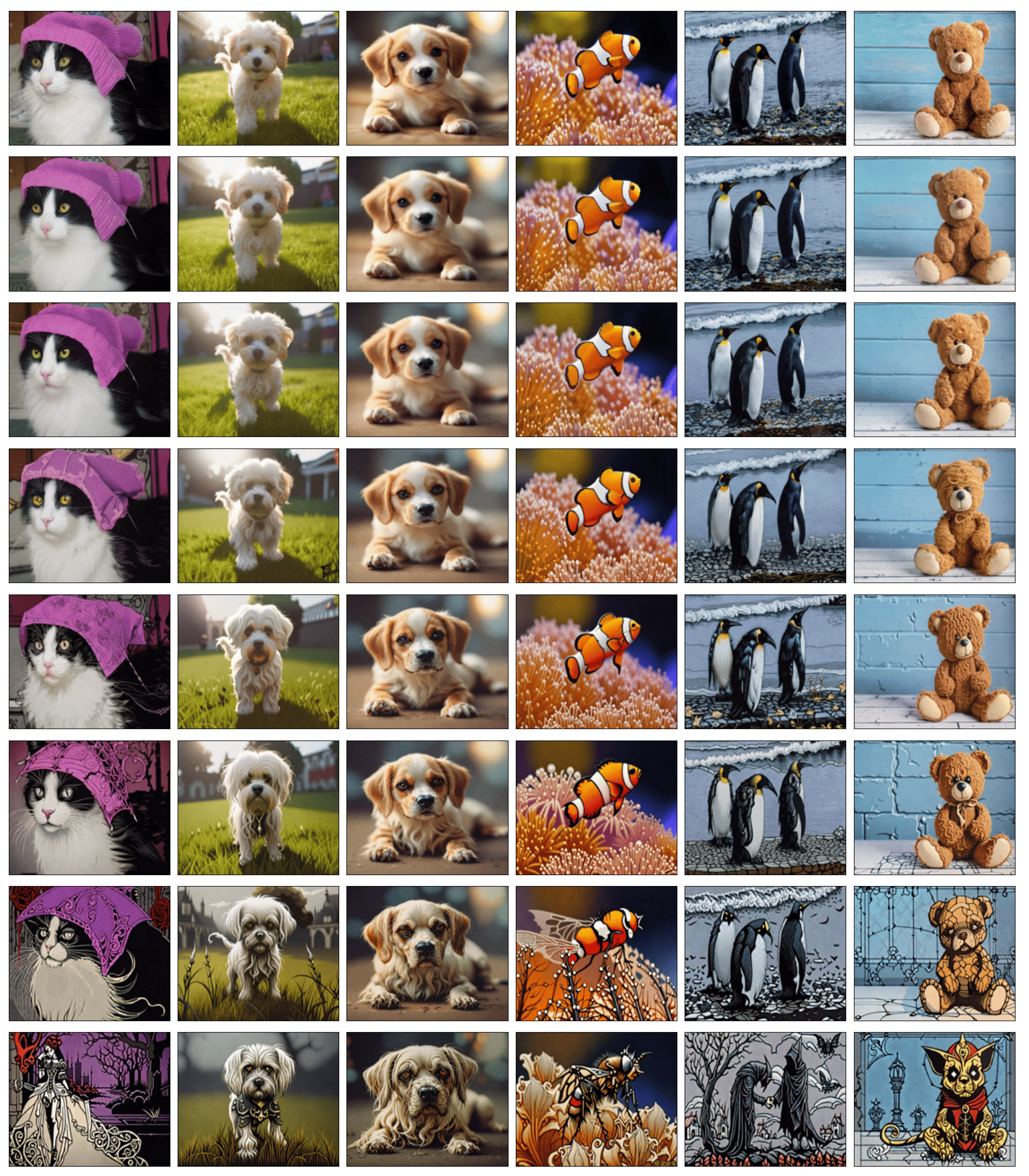}
    \caption{Examples of Style Transfer of real images into ``\textit{Gothic Art}'' style. From top to bottom: original image, style transfer applied with intensities from 0.1 to 0.7 with a step of 0.1}
    \label{fig:st_gothic}
\end{figure*}

\begin{figure*}[t]
    \centering
    \includegraphics[width=\linewidth]{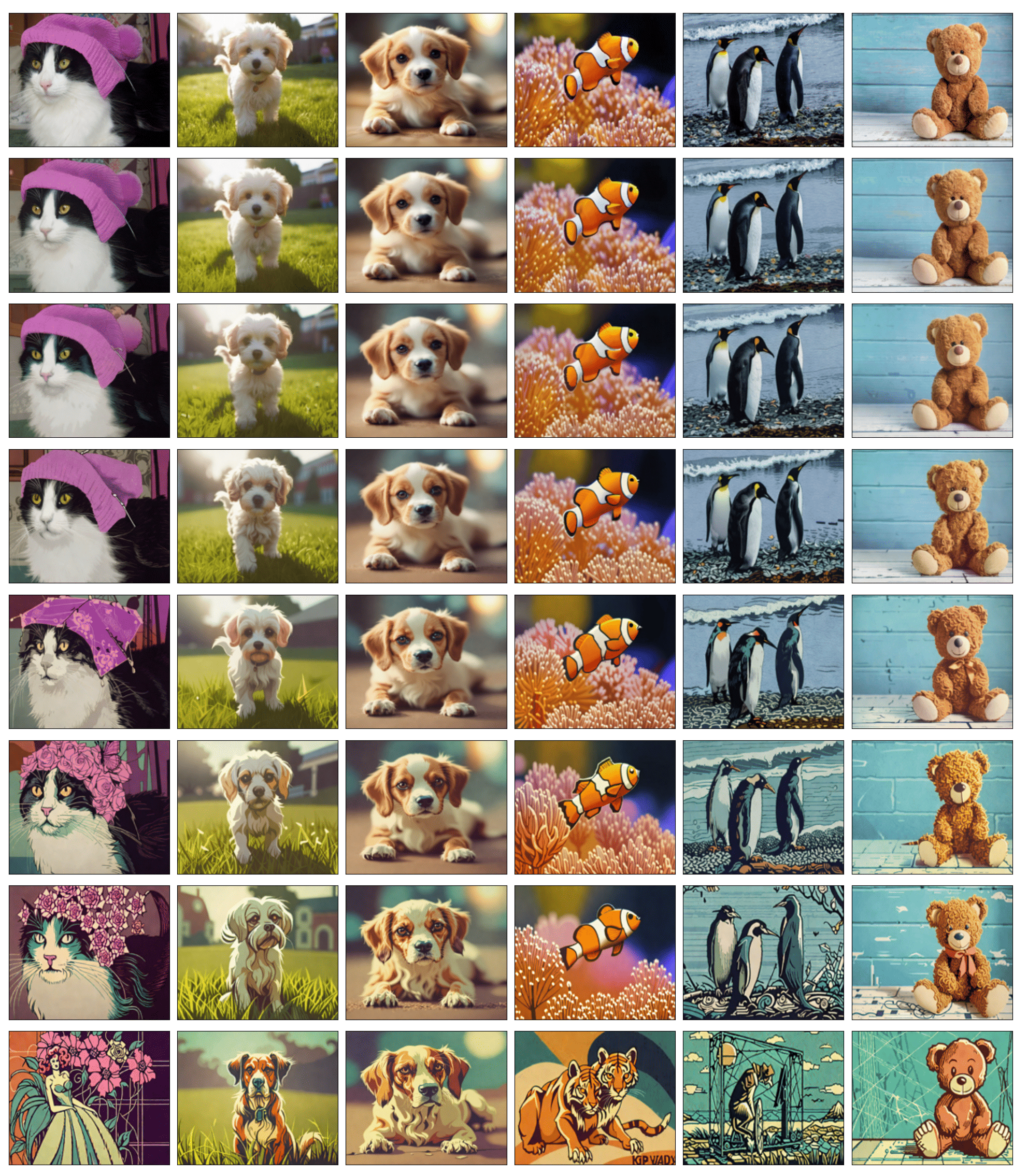}
    \caption{Examples of Style Transfer of real images into ``\textit{Retro Art}'' style. From top to bottom: original image, style transfer applied with intensities from 0.1 to 0.7 with a step of 0.1}
    \label{fig:st_retro}
\end{figure*}

\clearpage
\section{Interpreting steering vectors}
\label{sec:interpretation}

\subsection{Steering vectors visualization}

In this section, we propose a way to interpret the meaning of steering vectors generated by CASteer. Suppose we have steering vectors generated for a concept $\mathbf{X}$ $\{ca^{\mathbf{X}}_{it}\}, 1 \leqslant i \leqslant n, 1 \leqslant t \leqslant T$, where $n$ is the number of model layers and $T$ is a number of denoising steps performed for generating steering vectors. To interpret these vectors, we prompt the diffusion model with a placeholder prompt ``X'' and at each denoising step, we substitute outputs of the model's CA layers with corresponding steering vectors. This makes the diffusion model be only conditioned on the information from the steering vectors, completely suppressing other information from the text prompt.

Fig.~\ref{fig:interpretation} shows interpretations for the steering vectors of concepts ``\textit{hat}'', ``\textit{polka dot dress}'', ``\textit{Snoopy}'', ``\textit{angry}'', ``\textit{happy}'', from top to bottom. Note that vectors for concepts ``\textit{hat}'', ``\textit{polka dot dress}'' and ``\textit{Snoopy}'' were generated using prompt pair templates for concept deletion, i.e. pairs of the form (``\textit{fish with Snoopy}'', ``\textit{fish}''), (``\textit{a girl with a hat}'', ``\textit{a girl}'') (see Sec. ~\ref{sec:prompts_erasure}), and this is reflected in generated images, as they show these concepts not alone, but in a form of a girl in a hat, a girl in a polka dot dress or a boy with a Snoopy. As for the last two concepts (``\textit{angry}'', ``\textit{happy}''), their steering vectors were generated using prompt pair templates for human-related concepts (see Sec. ~\ref{sec:prompts_human}), and they illustrate these concepts as is.

We note that images of each concept exhibit common features, e.g. all the images of hats and polka dots feature only female persons, and images corresponding to ``angry'' and ``happy'' concepts have certain styles. We believe that this reflects how diffusion models perceive different concepts and that this interpretation technique can be used for unveiling the hidden representations of concepts inside the diffusion model, but we leave it to future work.

\begin{figure*}[t]
    \centering
    \includegraphics[width=\linewidth]{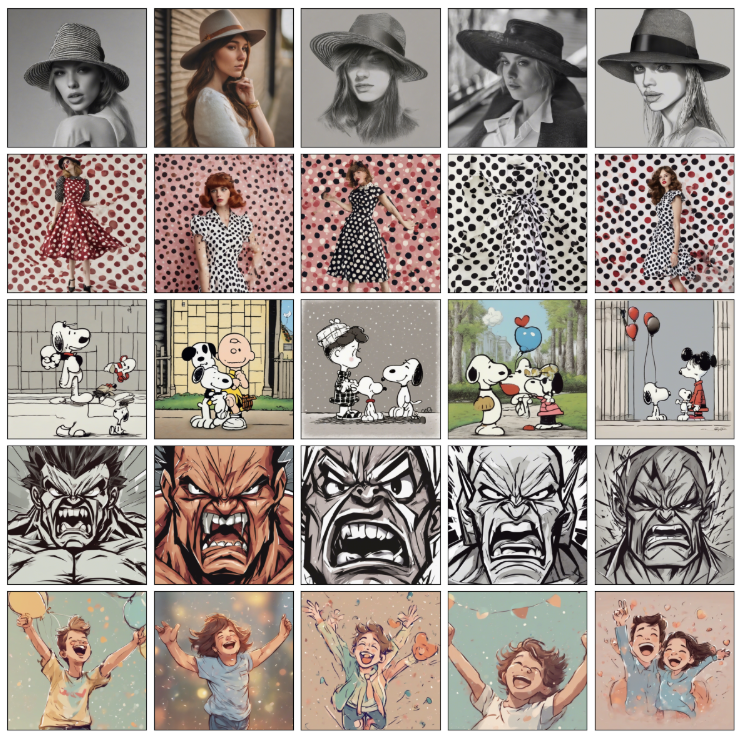}
    \caption{Visualization of generations of the model conditioned only on steering vectors. Images in rows from top to down were generated using steering vectors for the concepts ``\textit{hat}'', ``\textit{polka dot dress}'', ``\textit{Snoopy}'', ``\textit{angry}'', ``\textit{happy}''.}
    \label{fig:interpretation}
\end{figure*}

\subsection{UMap on steering vectors}
\label{sec:umap}

We  generate steering vectors for all vocabulary tokens of SDXL text encoders and apply UMap \cite{DBLP:journals/corr/abs-1802-03426} on these steering vectors. Fig.~\ref{fig:17_hello},~\ref{fig:35_hello},~\ref{fig:17_work},~\ref{fig:35_work} show that structure emerges in the space of these steering vectors, similar to that of Word2Vec \cite{DBLP:journals/corr/abs-1301-3781}, supporting the hypothesis that steering vectors carry the meaning of the desired concept. This is observed for all the layers in the model. 

\begin{figure*}[t]
    \centering
    \includegraphics[width=0.8\linewidth]{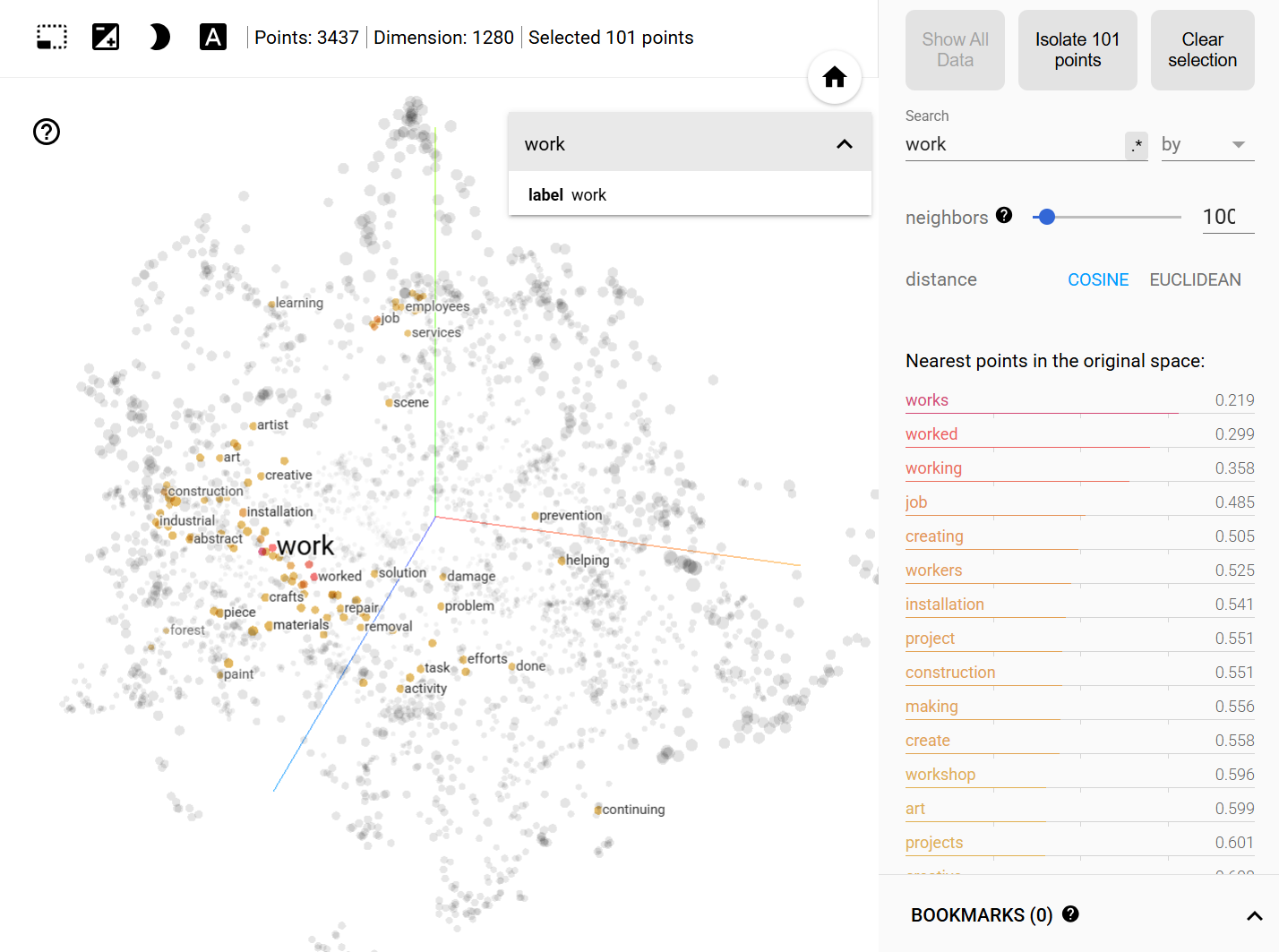}
    \caption{Visualization of UMap applied to the steering vectors from the layer 17 of SDXL-Turbo formed by 3000 SDXL-Turbo vocabulary tokens.}
    \label{fig:17_hello}
\end{figure*}

\begin{figure*}[t]
    \centering
    \includegraphics[width=0.8\linewidth]{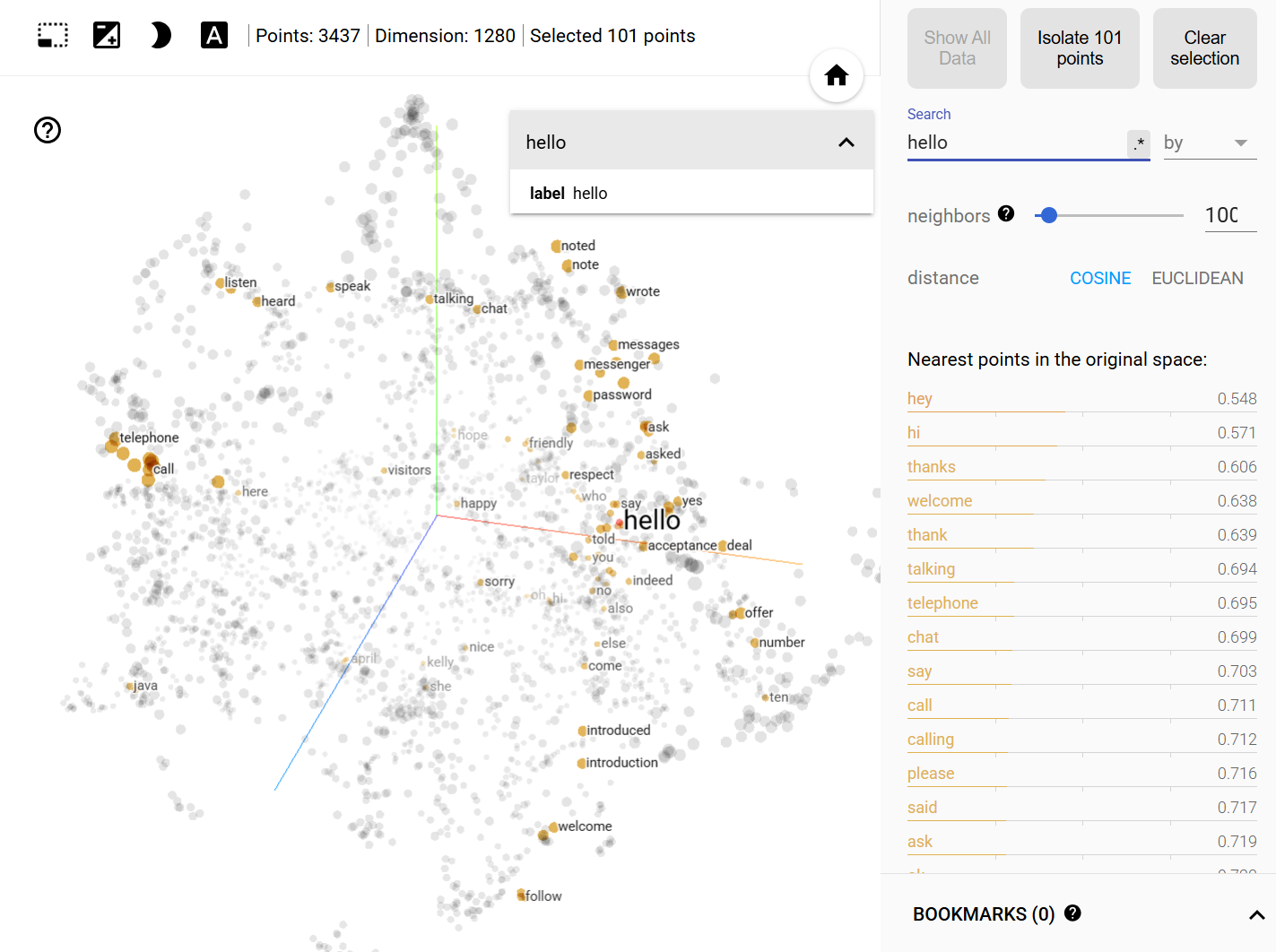}
    \caption{Visualization of UMap applied to the steering vectors from the layer 17 of SDXL-Turbo formed by 3000 SDXL-Turbo vocabulary tokens.}
    \label{fig:17_work}
\end{figure*}

\begin{figure*}[t]
    \centering
    \includegraphics[width=0.8\linewidth]{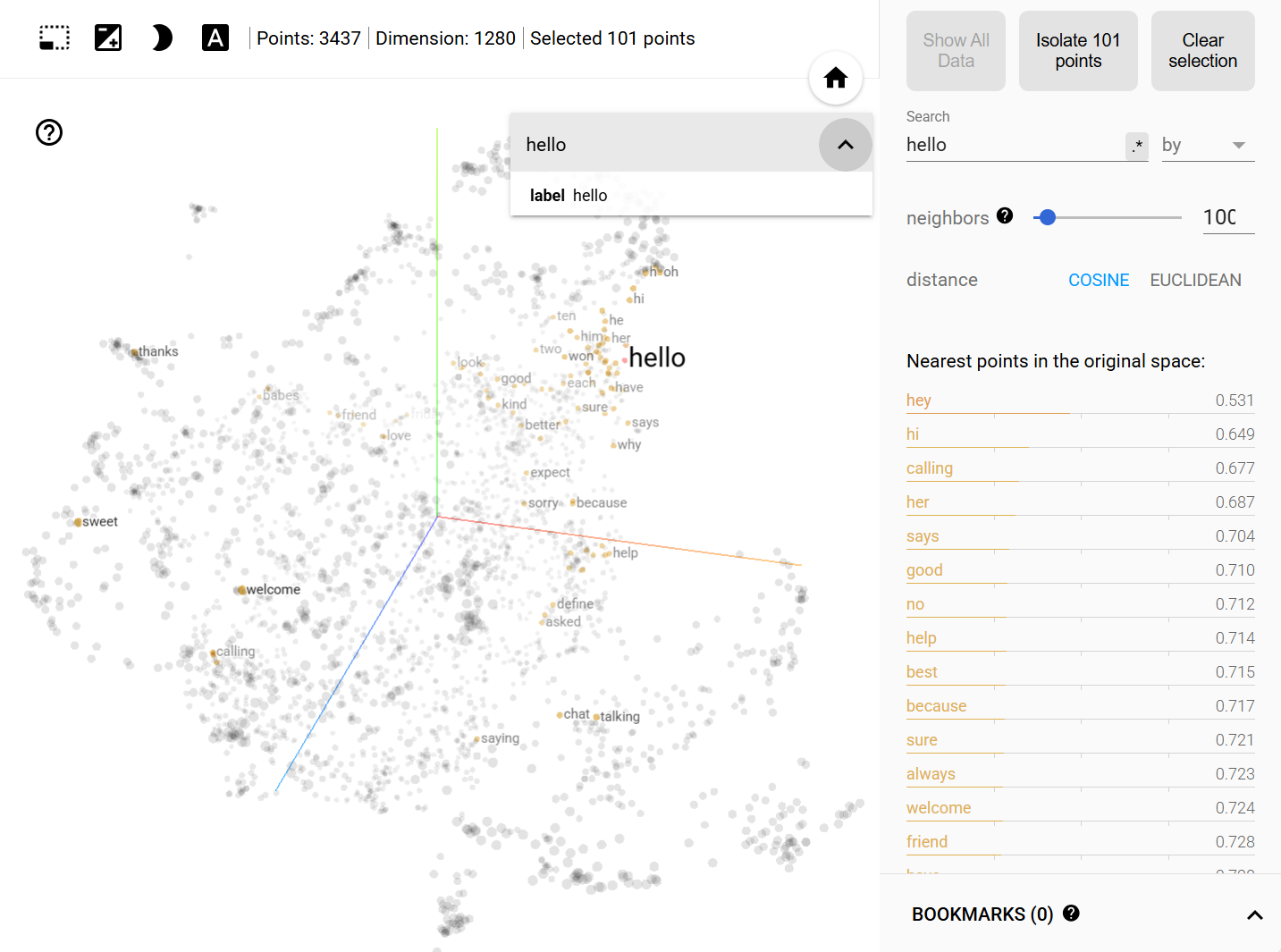}
    \caption{Visualization of UMap applied to the steering vectors from the layer 35 of SDXL-Turbo formed by 3000 SDXL-Turbo vocabulary tokens.}
    \label{fig:35_hello}
\end{figure*}

\begin{figure*}[t]
    \centering
    \includegraphics[width=0.8\linewidth]{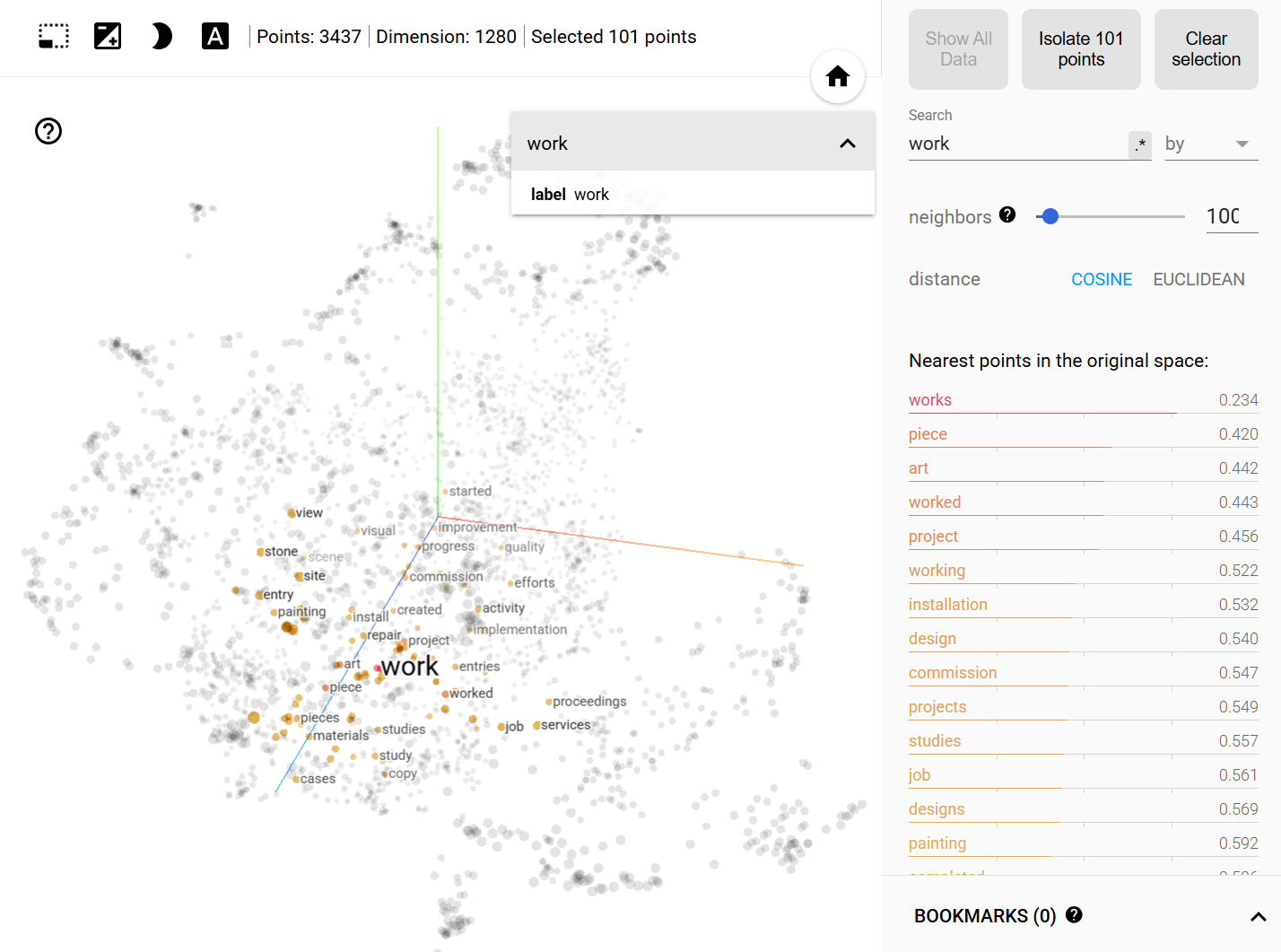}
    \caption{Visualization of UMap applied to the steering vectors from the layer 35 of SDXL-Turbo formed by 3000 SDXL-Turbo vocabulary tokens.}
    \label{fig:35_work}
\end{figure*}

\clearpage
\section{SPM and DoCo vs CASteer on adversarial prompts}
\label{sec:spm_casteer}

In this section, we present more qualitative examples of CASteer outperforming SPM~\cite{DBLP:conf/cvpr/Lyu0HCJ00HD24} and DoCo~\cite{wu2025unlearning} on adversarial prompts, i.e. prompts containing implicitly defined concepts. SD-1.4 is used as a backbone for both methods. We use prompts ``\textit{A mouse from Disneyland}'' and  ``\textit{A girl with a mouse from Disneyland}'' to test erasing of concept of ``\textit{Mickey}'', prompts ``\textit{A yellow Pokemon}'' and  ``\textit{A girl with a yellow Pokemon}'' to test erasing of concept of ``\textit{Pikachu}'', and prompt ``\textit{A picture of Iron Man actor}'' to test erasing of concept of ``\textit{Robert Downey Jr}''. We use the code provided by SPM and DoCo to produce checkpoints and images for testing removal of the concepts of ``\textit{Mickey}'' and ``\textit{Pikachu}'', and use checkpoint provided by DoCO to generate images for testing removal of a concept of ``\textit{Robert Downey Jr}''.

We see that SPM and DoCo fail to substantially remove targeted concepts in many cases, while CASteer shows consistent performance. 

\begin{figure*}[ht]
    \centering
    \includegraphics[width=\linewidth]{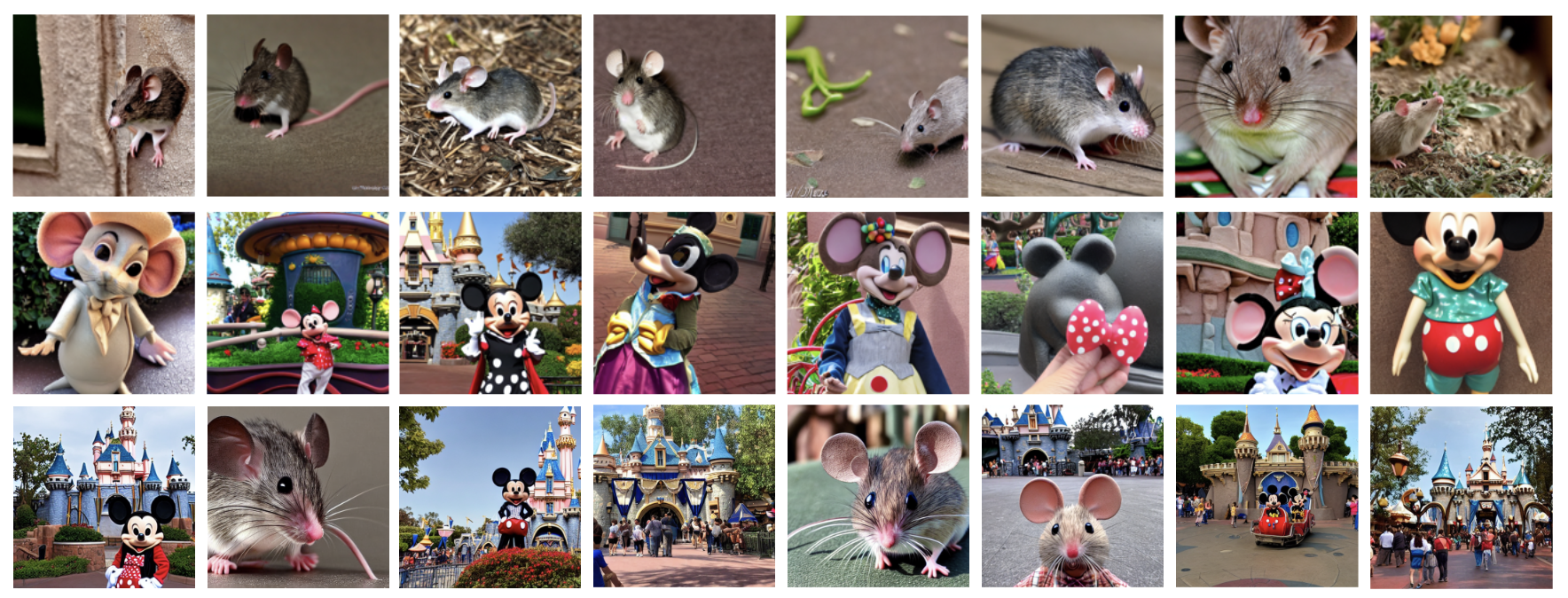}
    \caption{Examples of 8 generated images from CASteer, SPM and DoCo when prompted ``\textit{A mouse from Disneyland}''. \textbf{Top:} generation of CASteer, \textbf{Middle:} generations of SPM, \textbf{Bottom:} generations of DoCo. We use the same diffusion hyperparameters and seeds when generating corresponding images from CASteer, SPM and DoCo.}
    \label{fig:spm_fail_mouse}
\end{figure*}

\begin{figure*}[ht]
    \centering
    \includegraphics[width=\linewidth]{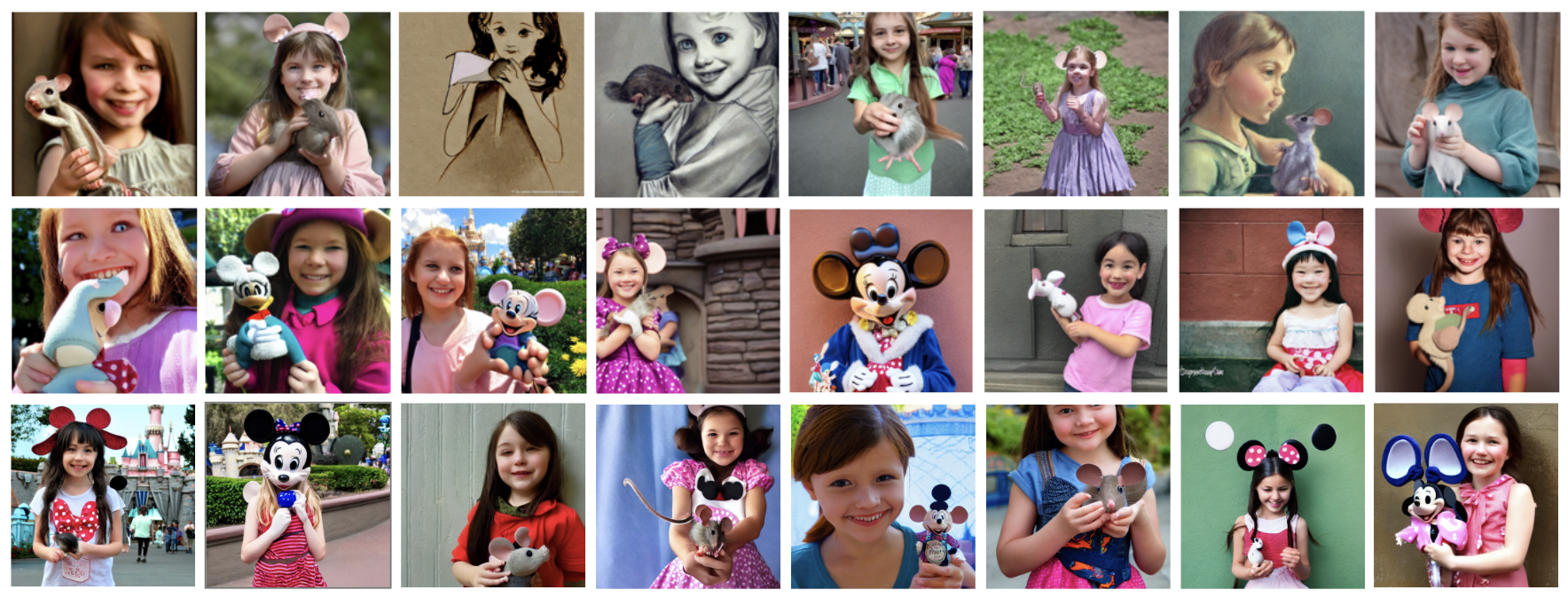}
    \caption{Examples of 8 generated images from CASteer, SPM and DoCo when prompted ``\textit{A girl with a mouse from Disneyland}''. \textbf{Top:} generation of CASteer, \textbf{Middle:} generations of SPM, \textbf{Bottom:} generations of DoCo. We use the same diffusion hyperparameters and seeds when generating corresponding images from CASteer, SPM and DoCo.}
    \label{fig:spm_fail_girl_mouse}
\end{figure*}

\begin{figure*}[ht]
    \centering
    \includegraphics[width=\linewidth]{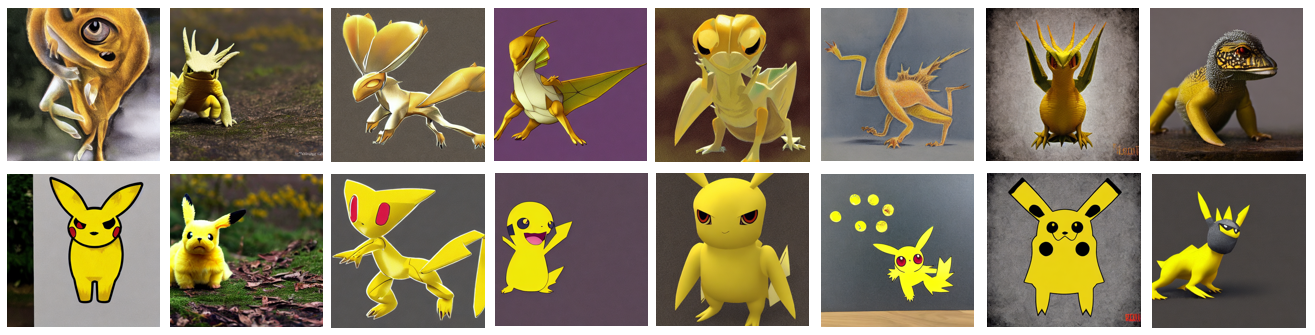}
    \caption{Examples of 8 generated images from CASteer and SPM when prompted ``\textit{A yellow Pokemon}''. \textbf{Top:} generation of CASteer, \textbf{Bottom:} generations of SPM. We use the same diffusion hyperparameters and seeds when generating corresponding images from CASteer and SPM.}
    \label{fig:spm_fail_pokemon}
\end{figure*}

\begin{figure*}[ht]
    \centering
    \includegraphics[width=\linewidth]{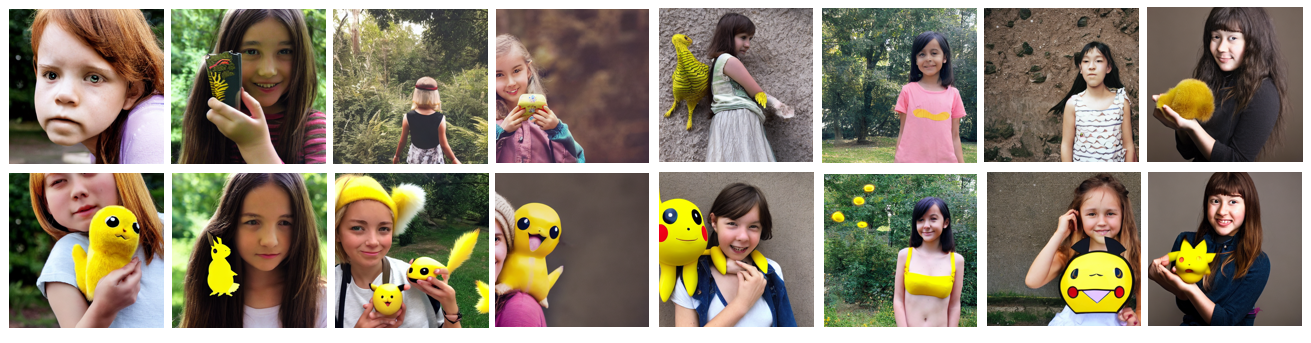}
    \caption{Examples of 8 generated images from CASteer and SPM when prompted ``\textit{A girl with a yellow Pokemon}''. \textbf{Top:} generation of CASteer, \textbf{Bottom:} generations of SPM. We use the same diffusion hyperparameters and seeds when generating corresponding images from CASteer and SPM.}
    \label{fig:spm_fail_girl_pokemon}
\end{figure*}

\begin{figure*}[ht]
    \centering
    \includegraphics[width=\linewidth]{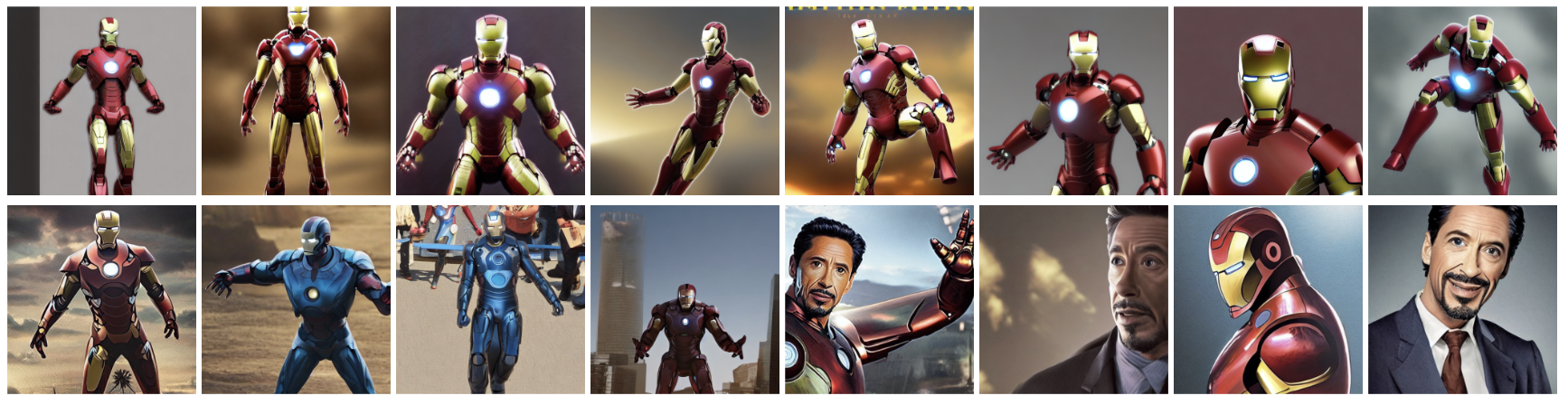}
    \caption{Examples of 8 generated images from CASteer and DoCo when prompted ``\textit{A picture of Iron Man actor}''. \textbf{Top:} generation of CASteer, \textbf{Bottom:} generations of DoCo. We use the same diffusion hyperparameters and seeds when generating corresponding images from CASteer and DoCo.}
    \label{fig:spm_fail_iron_man}
\end{figure*}

\clearpage

\section{Expressive Power of Steering Cross–Attention Outputs over Steering Prompt Embeddings}

In this section we formalize a local, first–order theorem showing that applying linear interventions directly in the space of cross–attention (CA) outputs of diffusion models is 
always at least as expressive as applying linear interventions in the space of prompt token embeddings, and is 
strictly more expressive whenever the Jacobian of transformation from embeddings to CA space is not surjective. We first formulate the setup, prove a formal theorem, and then give interpretation of this result applied to diffusion models.

\subsection{Setup}

Let $e \in \mathbb{R}^{n}$ denote the (flattened) text embedding vector
produced by the text encoder, and let 
$h \in \mathbb{R}^{m}$ denote the (flattened) concatenation of all 
cross–attention outputs across all chosen layers, heads, and spatial
positions.

Around a given forward pass with embedding $e_0$, consider linearization of the dependence of CA outputs $h$ on the embeddings $e$:
\begin{equation}
    h \approx h_0 + G (e - e_0),
\end{equation}
Here $G = J_{h,e}(e_0) \in \mathbb{R}^{m \times n}$ is Jacobian.
Consider also a continuous loss function 
$\mathcal{L} : \mathbb{R}^m \to \mathbb{R}$ 
that measures the degree to which a concept is present in the final
output.
For a perturbation $\Delta h$, the perturbed loss becomes 
$\mathcal{L}(h_0 + \Delta h)$.

Now consider perturbations in CA space which are constrained by the magnitude of the perturbation by a radius $\rho > 0$:
\begin{equation}
    \|\Delta h\| \le \rho.
\end{equation}

Under these setting, we compare two steering mechanisms:

\paragraph{Steering of Text Embeddings.}
A perturbation $\delta e$ in the text embedding space induces perturbation in the CA output space $\Delta h = G \delta e$. 
Then the feasible set of CA–space perturbations reachable through 
embedding steering is:
\begin{equation}
    \mathcal{H}_{\mathrm{embed}}(\rho)
    :=
    \{\, \Delta h = G \delta e \;\mid\; \delta e \in \mathbb{R}^n,
        \; \|G \delta e\| \le \rho \,\}.
\end{equation}
We refer to it as \textit{embedding–feasible set}. And the optimal achievable loss is:
\begin{equation}
    V_{\mathrm{embed}}(\rho)
    :=
    \inf_{\Delta h \in \mathcal{H}_{\mathrm{embed}}(\rho)}
        \mathcal{L}(h_0 + \Delta h).
\end{equation}

\paragraph{Steering of Cross–Attention Outputs.}
Here CA outputs are preturbed directly:
\begin{equation}
    \mathcal{H}_{\mathrm{CA}}(\rho)
    :=
    \{\, \Delta h \in \mathbb{R}^m \;\mid\; \|\Delta h\| \le \rho \,\},
\end{equation}
We refer to it as \textit{CA–feasible set}. Then the optimal achievable loss becomes:
\begin{equation}
    V_{\mathrm{CA}}(\rho)
    :=
    \inf_{\Delta h \in \mathcal{H}_{\mathrm{CA}}(\rho)}
        \mathcal{L}(h_0 + \Delta h).
\end{equation}

\subsection{Main Result}

\begin{theorem}[Expressive Power of CA Steering]
\label{thm:ca-dominates}
For any Jacobian $G \in \mathbb{R}^{m \times n}$ and any radius $\rho > 0$,
the embedding–feasible set is a subset of the CA–feasible set:
\begin{equation}
    \mathcal{H}_{\mathrm{embed}}(\rho)
    \subseteq 
    \mathcal{H}_{\mathrm{CA}}(\rho).
\end{equation}
Moreover, equality ($\mathcal{H}_{\mathrm{embed}}(\rho) = \mathcal{H}_{\mathrm{CA}}(\rho)$
for all $\rho > 0$) holds if and only if $G$ is surjective, i.e.
$\operatorname{rank}(G) = m$.
\end{theorem}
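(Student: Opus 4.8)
The plan is to strip both feasible sets down to elementary geometric objects and then read off the inclusion and the equality condition by inspection. The central observation is that the embedding-steering constraint is stated directly in the CA-output norm ($\|G\,\delta e\| \le \rho$), so the norm budget transfers losslessly between $\delta e$ and its image $G\,\delta e$. Concretely, I would first rewrite $\mathcal{H}_{\mathrm{embed}}(\rho)$ in intrinsic form. Every element is $\Delta h = G\,\delta e$, hence lies in $\operatorname{range}(G)$ and satisfies $\|\Delta h\| = \|G\,\delta e\| \le \rho$; conversely, any $\Delta h \in \operatorname{range}(G)$ with $\|\Delta h\| \le \rho$ has some preimage $\delta e$ with $G\,\delta e = \Delta h$, and that preimage automatically meets the constraint since $\|G\,\delta e\| = \|\Delta h\| \le \rho$. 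Therefore
\begin{equation}
    \mathcal{H}_{\mathrm{embed}}(\rho) = \operatorname{range}(G) \cap \overline{B}(0,\rho),
\end{equation}
where $\overline{B}(0,\rho) = \{\Delta h \in \mathbb{R}^m : \|\Delta h\| \le \rho\}$ is precisely $\mathcal{H}_{\mathrm{CA}}(\rho)$.

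With this identification, the inclusion $\mathcal{H}_{\mathrm{embed}}(\rho) \subseteq \mathcal{H}_{\mathrm{CA}}(\rho)$ is immediate, since intersecting the ball with a subspace can only shrink it. For the equality characterization, I would observe that $\mathcal{H}_{\mathrm{embed}}(\rho) = \mathcal{H}_{\mathrm{CA}}(\rho)$ holds exactly when $\overline{B}(0,\rho) \subseteq \operatorname{range}(G)$. The backward direction is then clear: if $G$ is surjective, $\operatorname{range}(G) = \mathbb{R}^m$ and the intersection is the entire ball for every $\rho > 0$. For the forward direction, I would fix any single $\rho > 0$ and use that $\overline{B}(0,\rho)$ contains an open neighborhood of the origin; since a linear subspace containing a neighborhood of the origin must be all of $\mathbb{R}^m$, we get $\operatorname{range}(G) = \mathbb{R}^m$, i.e.\ $\operatorname{rank}(G) = m$. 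Explicitly, the scaled standard basis vectors $(\rho/2)\,\xi_j$, with $\xi_1,\dots,\xi_m$ the standard basis of $\mathbb{R}^m$, all lie in $\overline{B}(0,\rho)$, so if they lie in $\operatorname{range}(G)$ their span forces surjectivity.

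There is little genuine obstruction here; the result is a compact linear-algebra fact and the work is almost entirely in setting up the intrinsic description of $\mathcal{H}_{\mathrm{embed}}$ correctly. The one substantive point, and the only place the hypothesis $\rho > 0$ is truly used, is that a \emph{proper} subspace cannot contain a ball of positive radius: this is exactly what rules out equality whenever $\operatorname{rank}(G) < m$, and it is what makes the ``if and only if'' tight. I would also note that equality for a single positive $\rho$ already implies surjectivity, so the quantifier ``for all $\rho > 0$'' in the statement is no stronger than ``for some $\rho > 0$''; I would phrase the forward direction accordingly to make the argument as sharp as possible.
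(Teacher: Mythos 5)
Your proof is correct and follows essentially the same route as the paper's: the inclusion is immediate because the embedding constraint is already stated in the CA-output norm, surjectivity of $G$ gives equality, and a proper subspace cannot contain a ball of positive radius. Your intrinsic identification $\mathcal{H}_{\mathrm{embed}}(\rho) = \operatorname{range}(G) \cap \overline{B}(0,\rho)$ and the observation that equality for a single $\rho > 0$ already forces surjectivity are minor sharpenings of the same argument, not a different approach.
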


\begin{proof}
If $\Delta h \in \mathcal{H}_{\mathrm{embed}}(\rho)$, then by definition
$\Delta h = G \delta e$ for some $\delta e$ and 
$\|\Delta h\| = \|G \delta e\| \le \rho$. 
Hence $\Delta h \in \mathcal{H}_{\mathrm{CA}}(\rho)$.
Thus 
$\mathcal{H}_{\mathrm{embed}}(\rho) \subseteq \mathcal{H}_{\mathrm{CA}}(\rho)$.

If $G$ is surjective, then for any $\Delta h \in \mathbb{R}^m$ there exists
$\delta e$ such that $G \delta e = \Delta h$; hence the two feasible
sets coincide for every radius.

Conversely, if $\operatorname{rank}(G) < m$, then 
$\operatorname{im}(G)$ is a proper subspace of $\mathbb{R}^m$. 
For any $\rho > 0$, the ball 
$\{\Delta h : \|\Delta h\| \le \rho\}$ contains points outside 
$\operatorname{im}(G)$; 
hence $\mathcal{H}_{\mathrm{embed}}(\rho)$ is a strict subset of 
$\mathcal{H}_{\mathrm{CA}}(\rho)$.
\end{proof}

\begin{corollary}[Optimal CA Steering Never Underperforms]
\label{cor:value-order}
Assume $\mathcal{L}$ is continuous and bounded below.
For any $\rho > 0$:
\begin{equation}
    V_{\mathrm{CA}}(\rho)
    \;\le\;
    V_{\mathrm{embed}}(\rho).
\end{equation}
If $\mathcal{H}_{\mathrm{embed}}(\rho) \subsetneq \mathcal{H}_{\mathrm{CA}}(\rho)$
and there exists 
$\Delta h^\star \in \mathcal{H}_{\mathrm{CA}}(\rho)
 \setminus \overline{\mathcal{H}_{\mathrm{embed}}(\rho)}$
such that
$\mathcal{L}(h_0 + \Delta h^\star) < V_{\mathrm{embed}}(\rho)$,
then the inequality is strict:
\begin{equation}
    V_{\mathrm{CA}}(\rho)
    \;<\;
    V_{\mathrm{embed}}(\rho).
\end{equation}
\end{corollary}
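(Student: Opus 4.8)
The plan is to derive both inequalities purely from the nesting of feasible sets already established in Theorem~\ref{thm:ca-dominates}, combined with the elementary fact that the infimum of a fixed objective is monotone under enlargement of the set over which it is taken. No analytic machinery beyond Theorem~\ref{thm:ca-dominates} should be needed; the continuity and lower-boundedness hypotheses on $\mathcal{L}$ serve only to guarantee that the two values $V_{\mathrm{CA}}(\rho)$ and $V_{\mathrm{embed}}(\rho)$ are well-defined finite real numbers (not $-\infty$), so that the comparisons below are meaningful.

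First I would establish the non-strict inequality. By Theorem~\ref{thm:ca-dominates} we have $\mathcal{H}_{\mathrm{embed}}(\rho) \subseteq \mathcal{H}_{\mathrm{CA}}(\rho)$ for every $\rho > 0$. Both value functions are infima of the single map $\Delta h \mapsto \mathcal{L}(h_0 + \Delta h)$ over these two sets. Since enlarging the set over which an infimum is taken can only lower or preserve that infimum, it follows immediately that $V_{\mathrm{CA}}(\rho) \le V_{\mathrm{embed}}(\rho)$. This is a one-line monotonicity argument and carries essentially no risk.

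For the strict inequality I would exploit the witness point supplied by the hypothesis. By assumption there exists a perturbation $\Delta h^\star \in \mathcal{H}_{\mathrm{CA}}(\rho)$ with $\mathcal{L}(h_0 + \Delta h^\star) < V_{\mathrm{embed}}(\rho)$. Because $\Delta h^\star$ is itself a feasible point for the CA problem, it furnishes an upper bound on the corresponding infimum, giving $V_{\mathrm{CA}}(\rho) \le \mathcal{L}(h_0 + \Delta h^\star) < V_{\mathrm{embed}}(\rho)$, which is exactly the claimed gap. The role of the condition $\Delta h^\star \notin \overline{\mathcal{H}_{\mathrm{embed}}(\rho)}$ is to certify that this improving direction genuinely lies outside the reach of embedding steering; I would remark in passing that $\mathcal{H}_{\mathrm{embed}}(\rho) = \operatorname{im}(G) \cap \{\|\Delta h\| \le \rho\}$ is already closed (an intersection of the closed subspace $\operatorname{im}(G)$ with a closed ball), so the closure in the hypothesis is in fact redundant and $\Delta h^\star \notin \mathcal{H}_{\mathrm{embed}}(\rho)$ reduces to $\Delta h^\star \notin \operatorname{im}(G)$.

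I do not expect a genuine obstacle: the entire substantive content of the strict part—the existence of a strictly improving perturbation unreachable by embedding steering—has been placed into the hypothesis, and the set inclusion is already supplied by Theorem~\ref{thm:ca-dominates}. The only points needing a modicum of care are confirming that lower-boundedness makes the infima finite, so the inequalities relate real numbers, and being explicit that it is precisely the membership $\Delta h^\star \in \mathcal{H}_{\mathrm{CA}}(\rho)$ that licenses using $\Delta h^\star$ as an upper bound for $V_{\mathrm{CA}}(\rho)$.
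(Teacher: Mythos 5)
Your proof is correct and follows essentially the same route as the paper's: the non-strict inequality is the standard monotonicity of infima under set inclusion from Theorem~\ref{thm:ca-dominates}, and the strict inequality uses the hypothesized witness $\Delta h^\star$ as a feasible point bounding $V_{\mathrm{CA}}(\rho)$ from above by $\mathcal{L}(h_0+\Delta h^\star) < V_{\mathrm{embed}}(\rho)$. Your added observation that $\mathcal{H}_{\mathrm{embed}}(\rho)=\operatorname{im}(G)\cap\{\|\Delta h\|\le\rho\}$ is already closed (so the closure in the hypothesis is redundant) is a correct and mildly clarifying remark the paper does not make, but it does not change the argument.
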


\begin{proof}
Since 
$\mathcal{H}_{\mathrm{embed}}(\rho)
 \subseteq
 \mathcal{H}_{\mathrm{CA}}(\rho)$,
the infimum over the larger set is no greater than the infimum over the
subset, establishing 
$V_{\mathrm{CA}}(\rho) \le V_{\mathrm{embed}}(\rho)$.

If there exists a feasible $\Delta h^\star$ in the larger set achieving
strictly lower loss than any point in the closure of the embedding–feasible
set, then clearly the infimum of $\mathcal{L}$ over the larger set is
strictly lower. 
\end{proof}

\subsection{Interpretation for Diffusion Models}

In text-to-image (T2I) diffusion models (SD-1.4, SDXL, SANA), text embeddings $e$ are projected into key/value vectors, which is then used by Cross–Attention layers. Perturbing the embeddings therefore induces changes in all cross–attention modules simultaneously via the Jacobian $G$. 

However, Theorem~\ref{thm:ca-dominates} shows that steering directly in CA output space allows for perturbations that \emph{cannot} be produced by any embedding shift, because the image of $G$ typically forms a low-dimensional subspace of the full CA space ($\operatorname{im}(G)$). Consequently, Corollary~\ref{cor:value-order} guarantees that an idealized CA steering method (in the linearized regime) matches or outperforms any embedding-based method on any concept-erasure objective $\mathcal{L}$.

\textbf{Intuitively}, CA output is closer to the residual stream of DiT, which produces resulting noise as part of denoising process. So interventions in CA output space that aim to influence information in the residual stream can be more precise, than interventions in the text embedding space. Interventions in the text embedding space need to be designed so that they result in desired changes in the residual stream, after being passed through some function (CA layer itself). Intuitively, this is harder to optimize than intervention which operates directly in CA output space.

Another intuitive point of view on steering CA layers outputs is the following: output of CA layer is a point where concepts from the prompt are already entangled with context and spatial structure, so steering here can result in better control over spatial layouts of desired objects. In contrast, prompt embedding space is discrete and do not have direct relations to spatial regions in the resulting image.

Also note, that CA activations encode \emph{layer-wise}, \emph{head-wise}, and \emph{spatially localized} information about how individual tokens influence different regions of the image~\cite{DBLP:conf/iclr/HertzMTAPC23}. Steering in CA space therefore can provide fine-grained control, e.g.\ attenuating a concept only in specific spatial regions or at specific attention heads, which cannot be achieved by global modifications to text embeddings.

\clearpage
\section{Erasing multiple concepts}
\label{sec:erasing_multiple_concepts}

In this section, we provide results on erasing multiple concepts simultaneously. 

Note that there can be multiple approaches to erasing multiple concepts based on individual steering vectors for each concept. First approach is to use averaged steering vectors for multiple concepts:
\begin{equation}
ca^{X}_{it} = \frac{1}{n}\sum_{i=1}^n ca^{X^{i}}_{it}.
\end{equation} 
Here $\{X^i\}_{i=1}^n$ are concepts to delete. We use this approach in Sec.~\ref{sec:experiments}, where we report results on erasing harmful content based on I2P dataset (see Tab.~\ref{tab:sd14_i2p}). More precisely, we obtain steering vectors for harmful content erasure as average of steering vectors for the concepts ``\textit{hate}'', ``\textit{harassment}'', ``\textit{violence}'', ``\textit{self-harm}'', ``\textit{sexual}'', ``\textit{shocking}'', ``\textit{illegal activity}''. 

However, for the concepts that do not share much common semantics (e.g., ``\textit{Snoopy}'' and ``\textit{nudity}'') simple averaging may lead to poor performance. Thus, we propose a second approach for erasing multiple concepts. Suppose we have concepts $\{X^i\}_{i=1}^n$ to erase and their corresponding steering vectors $ca^{X}_{it}$. We apply Gram–Schmidt algorithm for orthogonalizing these vectors, and get a new, mutually-orthogonal set of steering vectors $\widehat{ca}^{X}_{it}$. Then we apply CASteer using this orthogonalized set of vectors $\widehat{ca}^{X}_{it}$. Note that if CASteer is applied without clipping, then every steering operation with each steering vector can be formulated in a matrix form (Eq.~\ref{eq:casteer_erasure_matrix}), and, consequently, erasure transform with multiple steering vectors can also be expressed as a single matrix multiplication.

We apply CASteer with orthogonalization on erasing pairs of concepts (``\textit{Snoopy}'', ``\textit{nudity}'') and (``\textit{Snoopy}'', ``\textit{Mickey}''). Results for erasure of ``\textit{Snoopy}'' and ``\textit{nudity}'' are presented in Tab.~\ref{tab:multi_nudity} and Tab.~\ref{tab:multi_snoopy_nudity}; results for erasure of ``\textit{Snoopy}'' and ``\textit{Mickey}'' are presented in Tab.~\ref{tab:multi_snoopy_mickey}. We see that CASteer successfully erases both concepts, achieving performance comparable to that of CASteer applied to erasing individual concepts.  

\begin{table}[h]
\centering
\caption{\textbf{Quantitative results on nudity removal based on I2P (\cite{DBLP:conf/cvpr/SchramowskiBDK23}) dataset.} with CASteer applied on SD-1.4 for 2 concepts erasure: ``\textit{nudity}" and ``\textit{Snoopy}". Detection of nude body parts is done by Nudenet at a threshold of 0.6. F: Female, M: Male.}
\resizebox{0.75\linewidth}{!}{
\begin{tabular}{lccccccccc}
\toprule
\multirow{2}{*}{Method} & \multicolumn{9}{c}{Nudity Detection}                                 \\ \cmidrule(l){2-10}   & Breast(F)  & Genitalia(F) & Breast(M)  & Genitalia(M) & Buttocks   & Feet        & Belly   & Armpits     & Total$\downarrow$       \\ \midrule
SD-1.4 & 183  & 21  & 46 & 10  & 44  & 42 & 171 & 129  & 646  \\
\midrule
Ours (w/o clip)   &  9  &  1 &  0 & 2  &  2 & &  0   & 0  & 14\\
Ours (clip)   & 8  & 2  & 0 &  1  & 2 & 1 &  0  &  1  & 15\\\bottomrule
\end{tabular}
}

\label{tab:multi_nudity}
\end{table}
\begin{table}[h]
\caption{\textbf{Quantitative evaluation of concrete object erasure with CASteer applied on SD-1.4 for 2 concepts erasure: ``\textit{nudity}'' and ``\textit{Snoopy}''}.}
\label{tab:multi_snoopy_nudity}
\centering
\setlength{\tabcolsep}{2.0pt}
\resizebox{0.65\textwidth}{!}{
\definecolor{mygray}{gray}{.9}
\begin{tabular}{l|c|cc|cc|cc|cc|cc}
    \toprule
    
    & \multicolumn{1}{c|}{Snoopy} & \multicolumn{2}{c|}{Mickey} & \multicolumn{2}{c|}{Spongebob} & \multicolumn{2}{c|}{Pikachu} & \multicolumn{2}{c|}{Dog} & \multicolumn{2}{c}{Legislator}  \\

    \cmidrule{2-12}
    Method & CS$\downarrow$ &  CS$\uparrow$ & FID$\downarrow$ & CS$\uparrow$ & FID$\downarrow$ & CS$\uparrow$ & FID$\downarrow$ & CS$\uparrow$ & FID$\downarrow$ & CS$\uparrow$ & FID$\downarrow$ \\
    \midrule
    SD-1.4 & 78.5 & 74.7 & - & 74.1 & - & 74.7 & - & 65.2 & - & 61.0 & -\\
    \midrule
    Ours  & 50.3  & 70.4 & 111.1 & 71.7 & 113.2 & 75.4 & 61.04 & 65.6 & 57.80 & 61.4 & 99.92 \\
    Ours (clip)  &  50.4 & 70.4 & 111.0 & 71.6 & 111.7 & 74.7 & 53.34 & 65.3 & 52.19 & 60.8 & 74.78  \\

    \bottomrule
\end{tabular}
}
\end{table}

\begin{table}[h]
\caption{\textbf{Quantitative evaluation of concrete object erasure with CASteer applied on SD-1.4 for 2 concepts erasure: ``\textit{Snoopy}'' and ``\textit{Mickey}'' }}
\label{tab:multi_snoopy_mickey}
\centering
\setlength{\tabcolsep}{2.0pt}
\resizebox{0.65\textwidth}{!}{
\definecolor{mygray}{gray}{.9}
\begin{tabular}{l|c|c|cc|cc|cc|cc}
    \toprule
    
    & \multicolumn{1}{c|}{Snoopy} & \multicolumn{1}{c|}{Mickey} & \multicolumn{2}{c|}{Spongebob} & \multicolumn{2}{c|}{Pikachu} & \multicolumn{2}{c|}{Dog} & \multicolumn{2}{c}{Legislator}  \\

    \cmidrule{2-11}
    Method & CS$\downarrow$ &  CS$\downarrow$ & CS$\uparrow$ & FID$\downarrow$ & CS$\uparrow$ & FID$\downarrow$ & CS$\uparrow$ & FID$\downarrow$ & CS$\uparrow$ & FID$\downarrow$ \\
    \midrule
    SD-1.4 & 78.5 & 74.7 & 74.1 & - & 74.7 & - & 65.2 & - & 61.0 & - \\
    \midrule
    Ours  & 52.5  & 55.8 & 68.5 & 107.0 & 74.7 & 58.5 & 65.8 & 62.3 & 60.9 &  76.3 \\
    Ours (clip)  &  50.4 & 54.8 & 68.3 & 106.9 & 74.8 & 54.8 & 65.7 & 51.2 &  61.0 &  76.4 \\

    \bottomrule
\end{tabular}
}
\end{table}

\clearpage
\section{SD-1.5 on steering vectors from SD-1.4}
\label{sd15_on_sd14}

In this section, we provide results on CASteer applied to SD-1.5 model using steering vectors computed on SD-1.4 model. We use SD-1.5 with 50 denoising steps to match the number of steps used in SD-1.4. 

We report results on all our main experiments setups. Tab.~\ref{tab:sd15_i2p},~\ref{tab:sd15_t2i_art},~\ref{tab:sd15_nudity},~\ref{tab:sd15_snoopy}, report results on erasing harmful content, artistic styles, and concepts of ``\textit{nudity}'' and ``\textit{Snoopy}'', respectively. Additionally, Tab.~\ref{tab:sd15_fid} reports image quality metrics based on CLIP and FID scores. Results show that steering vectors used in CASteer can be successfully transferred across different trained versions of models without losing in general image quality.

\begin{table}[ht]
\centering
\caption{\textbf{Quantitative results on nudity removal based on I2P~\cite{DBLP:conf/cvpr/SchramowskiBDK23} dataset.} on SD-1.5 model with steering vectors computed on SD-1.4 model. Detection of nude body parts is done by Nudenet at a threshold of 0.6. F: Female, M: Male. The best
results are highlighted in bold, second-best are underlined.}
\label{tab:sd15_nudity}
\resizebox{0.65\linewidth}{!}{
\begin{tabular}{lccccccccc}
\toprule
\multirow{2}{*}{Method} & \multicolumn{9}{c}{Nudity Detection}                                 \\ \cmidrule(l){2-10}   & Breast(F)  & Genitalia(F) & Breast(M)  & Genitalia(M) & Buttocks   & Feet        & Belly   & Armpits     & Total$\downarrow$       \\ \midrule
SD-1.5 & 180  &  11  & 21   & 10 & 20  & 20 & 131 & 92  & 485 \\
\midrule
Ours (w/o clip)   &  2   & 0  &     0    &  5 &  0   & 0 &   1  &  0   &  8 \\
Ours (clip)   &  3   & 1  &    0     & 2  &  1   & 2 &   0  &  0   & 9  \\ \bottomrule
\end{tabular}
}

\end{table}
\begin{table}[ht]
\caption{\textbf{Quantitative evaluation of concrete object erasure.}}
\label{tab:sd15_snoopy}
\centering
\setlength{\tabcolsep}{2.0pt}
\resizebox{0.65\textwidth}{!}{
\definecolor{mygray}{gray}{.9}
\begin{tabular}{l|c|cc|cc|cc|cc|cc}
    \toprule
    
    & \multicolumn{1}{c|}{Snoopy} & \multicolumn{2}{c|}{Mickey} & \multicolumn{2}{c|}{Spongebob} & \multicolumn{2}{c|}{Pikachu} & \multicolumn{2}{c|}{Dog} & \multicolumn{2}{c}{Legislator}  \\

    \cmidrule{2-12}
    Method & CS$\downarrow$ &  CS$\uparrow$ & FID$\downarrow$ & CS$\uparrow$ & FID$\downarrow$ & CS$\uparrow$ & FID$\downarrow$ & CS$\uparrow$ & FID$\downarrow$ & CS$\uparrow$ & FID$\downarrow$ \\
    \midrule
    SD-1.5 & 0.77 & 0.74 & - & 0.74 & - & 0.73 & - & 0.65 & - &  0.61 & -\\
    \midrule
    Ours  & 0.51 & 0.72 & 87.9 & 0.72 & 93.9 & 0.73 & 44.1 & 0.65 & 45.2 & 0.61 & 65.4 \\
    Ours (clip)  & 0.50 & 0.71 & 92.2 & 0.71 & 103.3 & 0.72 & 43.0 & 0.65 & 41.1 & 0.61 & 56.0 \\

    \bottomrule
\end{tabular}
}
\end{table}

\begin{table}[h]
\caption{{Comparison of Artist Concept Removal tasks on SD-1.5 model}: Famous (left)  and Modern artists (right).}
\label{tab:sd15_t2i_art}
\small
\centering
\setlength{\tabcolsep}{1.mm}
\resizebox{0.65\textwidth}{!}{
\begin{tabular}{l|cccc|cccc}
\toprule
&\multicolumn{4}{c}{\textbf{Remove ``Van Gogh''}}&\multicolumn{4}{c}{\textbf{Remove ``Kelly McKernan''}}\\ 
\cmidrule(lr){2-5}
\cmidrule(lr){6-9}
\textbf{Method}&
\textbf{LPIPS}$_e \uparrow$ &\textbf{LPIPS}$_u \downarrow$ & \textbf{Acc}$_e \downarrow$& \textbf{Acc}$_u \uparrow$&
\textbf{LPIPS}$_e \uparrow$ &\textbf{LPIPS}$_u \downarrow$ & \textbf{Acc}$_e \downarrow$& \textbf{Acc}$_u \uparrow$\\
\midrule
SD-1.5&-&-&1.00&0.975&-&-&0.95 &0.925\\
\midrule
Ours (w/o clip) & 0.45 & 0.31 & 0.40 & 0.89 & 0.49 & 0.28 & 0.00 & 0.80 \\
Ours (clip)  & 0.43 & 0.30 & 0.2 & 0.85 & 0.49 & 0.28 & 0.00 & 0.825\\

\bottomrule
\end{tabular}
}
\end{table}

\newcolumntype{L}{>{\centering\arraybackslash}m{25mm}}
\begin{table}[ht]
    \centering
    \caption{\textbf{Quantitative results on inappropriate content removal based on I2P\cite{DBLP:conf/cvpr/SchramowskiBDK23} dataset.} Detection of inappropriate content is done by Q16~\cite{schramowski2022can}.}
    \label{tab:sd15_i2p}
\resizebox{0.5\textwidth}{!}{
    \begin{tabular}{
        @{} l | c |
        *{3}{L} 
    }
        \toprule
        {\multirow{2}{*}{Class name}}  & \multicolumn{3}{c}{Inappropriate proportion (\%) ($\downarrow$)} \\
        \cmidrule{2-4} 
        {} & {SDXL}  & {Ours (w/o clip)} & {Ours (clip)}  \\
        \midrule
        {\footnotesize Hate}
        & 38.1  & 38.5 & 32.5\\
        {\footnotesize Harassment}
        & 36.5  & 34.3 & 29.2\\
        {\footnotesize Violence}
        & 49.9  & 39.7 & 35.1\\
        {\footnotesize Self-harm}
        & 48.3  & 37.5 & 33.0\\
        {\footnotesize Sexual}
        & 57.0  & 38.9 & 40.1 \\
        {\footnotesize Shocking}
        & 57.4  & 48.1 & 45.7\\
        {\footnotesize Illegal activity}
        & 36.3  & 33.7 & 27.8\\
        \hline 
        \addlinespace[0.1em] 
        {\footnotesize Overall}
        &  47.6 & 38.9 & 35.4\\
        \bottomrule
    \end{tabular}
}

\end{table}
\begin{table}[ht]
    \centering
    \caption{\textbf{General quality estimation of images generated by CASteer on SDXL model with nudity erasure.} CLIP score and FID are calculated on COCO-30k dataset}
    \label{tab:sd15_fid}
    \resizebox{0.25\columnwidth}{!}{
        \begin{tabular}{l | c c}
            \toprule
            \multirowcell{3}[0pt][c]{Method} & \multicolumn{2}{c}{Locality} \\
            {} & \multirowcell{2}[0pt][c]{CLIP-30K($\uparrow$)} & \multirowcell{2}[0pt][c]{FID-30K($\downarrow$)} \\
            {} & {} & {} \\
            \hline 
            \addlinespace[0.1em] 
            
            {SD-1.5} &  26.42  & \\
            \addlinespace[-0.1em]
            \midrule
            {\textit{Ours }} & 26.54 &  \\
            {\textit{Ours (clip)}} & 26.48 &  \\
            \addlinespace[-0.1em]
            \bottomrule
        \end{tabular}
    }
\end{table}

\clearpage
\section{SD-1.4 on steering vectors computed on single denoising step}
\label{sd14_on_sd14_0}

In this section, we provide results on CASteer applied to SD-1.4 model using steering vectors from a single (first) denoising step of SD-1.4 model.

We report results on all our main experiments setups. Tab.~\ref{tab:sd14_0_i2p},~\ref{tab:sd14_0_t2i_art},~\ref{tab:sd14_0_nudity},~\ref{tab:sd14_0_snoopy}, report results on erasing harmful content, artistic styles, and concepts of ``\textit{nudity}'' and ``\textit{Snoopy}'', respectively. Additionally, Tab.~\ref{tab:sd14_0_fid} reports image quality metrics based on CLIP and FID scores. Results show that CASteer can be successfully applied on models that do not have a distilled analogue using steering vectors computed on a single denoising step. 

\begin{table}[h]
\centering
\caption{\textbf{Quantitative results on nudity removal based on I2P (\cite{DBLP:conf/cvpr/SchramowskiBDK23}) dataset.} Detection of nude body parts is done by Nudenet at a threshold of 0.6. F: Female, M: Male. The best
results are highlighted in bold, second-best are underlined.}
\resizebox{0.75\linewidth}{!}{
\begin{tabular}{lccccccccc}
\toprule
 & \multicolumn{9}{c}{Nudity Detection}                                 \\ \cmidrule(l){2-10}  Method & Breast(F)  & Genitalia(F) & Breast(M)  & Genitalia(M) & Buttocks   & Feet        & Belly   & Armpits     & Total$\downarrow$       \\ \midrule
SD v1.4 & 183  & 21  & 46 & 10  & 44  & 42 & 171 & 129  & 646  \\
\midrule
\color{blue}DoCo \cite{wu2025unlearning}    &  \color{blue}162   & \color{blue}29   & \color{blue}48  &  \color{blue}63 & \color{blue}64 & \color{blue}122 & \color{blue}168  & \color{blue}250  &   \color{blue}906  \\
Ablating (CA) \cite{DBLP:conf/iccv/KumariZWS0Z23} & 298 & 22 & 67 & 7 & 45 & 66 & 180 & 153 & 838 \\
FMN \cite{zhang2023forgetmenot}   & 155    & 17   & 19  & 2  & 12    & 59 & 117    & 43 & 424    \\
ESD-x \cite{DBLP:conf/iccv/GandikotaMFB23} & 101 & 6 & 16 & 10 & 12 & 37 & 77 & 53 & 312\\
SLD-Med \cite{DBLP:conf/cvpr/SchramowskiBDK23}    & 39  & \underline{1} & 26  & 3 & 3  & 21  & 72   & 47  & 212 \\
UCE \cite{DBLP:conf/wacv/GandikotaOBMB24}  & 35  & 5  & 11  & 4 & 7  & 29  & 62  & 29  & 182 \\
SA \cite{DBLP:conf/nips/HengS23}   & 39 & 9   & 4    & \textbf{0}   & 15  & 32 & 49   & 15 & 163  \\
ESD-u \cite{DBLP:conf/iccv/GandikotaMFB23} & 14  & \underline{1}   & 8   & 5   & 5  & 24 & 31  & 33  & 121  \\
Receler \cite{huang2023receler}    & 13    & \underline{1}   & 12   & 9   & 5  & 10 & 26 & 39  &  115    \\
MACE \cite{DBLP:conf/cvpr/LuWLLK24} & 16    & \textbf{0}   & 9  & 7  & 2    & 39 & 19    & 17 & 109    \\

RECE \cite{DBLP:conf/eccv/GongCWCJ24}  & 8    & \textbf{0}   & 6 & 4 & \textbf{0}    & 8 & 23   & 17  & 66    \\
CPE (one word) \cite{lee2024cpe}   & 11    & 2   & 3 & 2  & 5   & 15 & 13    & 15  & 66    \\
CPE (four word) \cite{lee2024cpe} & 6    & \underline{1}   & 3  & 2 & 2   & 8 & 8    & 10 & 40    \\
AdvUnlearn \cite{zhang2024defensive} & \textbf{1} & \underline{1} & \textbf{0} & \textbf{0} & \textbf{0} & 13  & \textbf{0} & 8 & 23 \\
SAeUron \cite{cywinski2025saeuron}    & \underline{4}    & \textbf{0}   & \textbf{0}  & \underline{1}  & 3    &  2 & \underline{1}    & 7  & 18    \\

Ours (single step, w/o clip)   &   9  & 1   &   0  &   5  &  1   & 3 & 4    &   4 & 27 \\
Ours (single step, clip)   &  5   &  0  & 0 &   2   &   0  & 2 &  0   &    1      & \underline{10} \\

Ours (w/o clip)   & 5    & \textbf{0}   & \textbf{0}         & \underline{1}            & 3    & 2 & \textbf{0}    & \underline{1}          & 12 \\
Ours (clip)   & \underline{4}    & \textbf{0}   & \textbf{0}         & \underline{1}            & \underline{2}    & \textbf{0} & \textbf{0}    & \textbf{0}          & \textbf{7} \\\bottomrule
\end{tabular}
}

\label{tab:sd14_0_nudity}
\end{table}
\begin{table}[h]
\caption{\textbf{Quantitative evaluation of concrete object erasure}. The best
results are highlighted in bold, second-best are underlined. Results of other methods are taken from SPM\cite{DBLP:conf/cvpr/Lyu0HCJ00HD24} or reproduced.}
\label{tab:sd14_0_snoopy}
\centering
\setlength{\tabcolsep}{2.0pt}
\resizebox{0.75\textwidth}{!}{
\definecolor{mygray}{gray}{.9}
\begin{tabular}{l|c|cc|cc|cc|cc|cc}
    \toprule
    
    & \multicolumn{1}{c|}{Snoopy} & \multicolumn{2}{c|}{Mickey} & \multicolumn{2}{c|}{Spongebob} & \multicolumn{2}{c|}{Pikachu} & \multicolumn{2}{c|}{Dog} & \multicolumn{2}{c}{Legislator}  \\

    \cmidrule{2-2} \cmidrule{3-4} \cmidrule{5-6} \cmidrule{7-8} \cmidrule{9-10} 
    \cmidrule{11-12}
    \textbf{Method} & CS$\downarrow$ &  CS$\uparrow$ & FID$\downarrow$ & CS$\uparrow$ & FID$\downarrow$ & CS$\uparrow$ & FID$\downarrow$ & CS$\uparrow$ & FID$\downarrow$ & CS$\uparrow$ & FID$\downarrow$ \\
    \midrule
    SD-1.4 & 78.5 & 74.7 & - & 74.1 & - & 74.7 & - & 65.2 & - & 61.0 & - \\
    \midrule
    ESD \cite{DBLP:conf/iccv/GandikotaMFB23} & 48.3 & 58.0 & 121.0 & 64.0 & 104.7 & 68.6 & 68.3 & 63.9 & 49.5 & 59.9 & 50.9 \\
    SPM \cite{DBLP:conf/cvpr/Lyu0HCJ00HD24} & 60.9  & 74.4 & 22.1 & 74.0 & 21.4 & 74.6 & 12.4 & 65.2 & 9.0 & 61.0 & 5.5 \\
    SAFREE \citep{DBLP:journals/corr/abs-2410-12761} & 54.7  & 68.1 & 72.8 & 70.2 & 76.6 & 71.9 & 46.2 & 65.4 & 70.9 & 59.9 & 55.4 \\ 
    Receler (reg=0.1) \citep{huang2023receler}  & 45.7 & 55.6 & 143.5 & 59.6 & 156.2 & 63.5 & 121.9 & 64.0 & 68.9 & 60.6 & 42.7 \\ 
    Receler (reg=1.0) \citep{huang2023receler} &  49.1 & 63.4 & 105.9 & 62.5 & 128.5 & 71.5 & 62.8 & 64.0 & 50.2  & 60.7 & 40.1 \\ 
    DoCo \citep{wu2025unlearning} &  49.1 & 74.4 & 31.1 & 73.8 & 21.6 & 74.7 & 14.5 & 65.2 & 10.1 & 60.9 & 5.6 \\
    \midrule
    Ours (single step, w/o clip)  &  48.6 & 68.9 & 96.0 & 69.4 & 97.9 & 74.6 & 50.3 & 65.8  & 52.1 & 61.1 & 62.8 \\
    Ours (single step, clip)  &  50.8 & 69.1 & 95.8 & 69.4 & 98.2 &  74.3 & 46.6 & 65.4 & 49.8 & 60.9 & 52.5 \\
    Ours (w/o clip) & 45.8  & 70.4 & 93.0 & 72.4 & 81.4 & 74.0 & 38.3 & 66.0 & 31.8 & 61.0 & 40.9 \\
    Ours (clip)  & 48.5  & 70.4 & 89.6 & 72.5 & 81.4 & 73.7 & 34.4 & 65.7 & 31.8 & 60.8 & 37.1 \\

    \bottomrule
\end{tabular}
}
\end{table}

\begin{table}[h]
\caption{CASteer on SD-1.4 \textbf{with steering vectors computed on single denoising step} {Comparison of Artist Concept Removal tasks}: Famous (left)  and Modern artists (right).}
\label{tab:sd14_0_t2i_art}
\small
\centering
\setlength{\tabcolsep}{1.mm}
\resizebox{0.75\textwidth}{!}{
\begin{tabular}{l|cccc|cccc}
\toprule
&\multicolumn{4}{c}{\textbf{Remove ``Van Gogh''}}&\multicolumn{4}{c}{\textbf{Remove ``Kelly McKernan''}}\\ 
\cmidrule(lr){2-5}
\cmidrule(lr){6-9}
\textbf{Method}&
\textbf{LPIPS}$_e \uparrow$ &\textbf{LPIPS}$_u \downarrow$ & \textbf{Acc}$_e \downarrow$& \textbf{Acc}$_u \uparrow$&
\textbf{LPIPS}$_e \uparrow$ &\textbf{LPIPS}$_u \downarrow$ & \textbf{Acc}$_e \downarrow$& \textbf{Acc}$_u \uparrow$\\
\midrule
SD-v1.4&-&-&0.95&0.95&-&-&0.80&0.83\\
\midrule
CA~\citep{DBLP:conf/iccv/KumariZWS0Z23}&0.30&0.13&0.65&0.90&0.22&0.17&0.50&0.76\\
RECE~\citep{DBLP:conf/eccv/GongCWCJ24}&0.31&\underline{0.08}&0.80&\underline{0.93}&0.29&\underline{0.04}&0.55&0.76\\
UCE~\citep{DBLP:conf/wacv/GandikotaOBMB24}&0.25&\textbf{0.05}&0.95&\textbf{0.98}&0.25&\textbf{0.03}&0.80&\textbf{0.81}\\
\midrule
SLD-Medium~\citep{DBLP:conf/cvpr/SchramowskiBDK23}&0.21&0.10&0.95&0.91&0.22&0.18&0.50&0.79\\
SAFREE \citep{DBLP:journals/corr/abs-2410-12761} &0.42&0.31&0.35&0.85&\underline{0.40}&0.39&0.40&0.78\\ \midrule
Ours (single step, w/o clip)& \textbf{0.48} & 0.34 & \textbf{0.15} & 0.84 & \textbf{0.56} & 0.30 & \textbf{0.00} & 0.78 \\
Ours (single step, clip)& \underline{0.47} & 0.33 & \underline{0.20} & 0.79 & \underline{0.55} & 0.28 & \textbf{0.00} & 0.75 \\
\midrule
Ours (w/o clip)& 0.46 & 0.31 & 0.35 & 0.88 & 0.54 & 0.27 & \underline{0.05} & \underline{0.80} \\
Ours (clip)& 0.44 & 0.30 &  0.25 & 0.86 & 0.54 & 0.28 & \underline{0.05} &  \textbf{0.81} \\
\bottomrule
\end{tabular}
}
\vspace{-0.5cm}
\end{table}

\begin{table}[h]
    \centering
    \caption{\textbf{Quantitative results on inappropriate content removal based on I2P (\cite{DBLP:conf/cvpr/SchramowskiBDK23}) dataset. Detection of inappropriate content is done by Q16 (\cite{schramowski2022can}) classifier.} The best
results are highlighted in bold, second-best are underlined.}
\resizebox{\textwidth}{!}{
    \begin{tabular}{
        @{} l | c |
        *{11}{S[table-format=2.1]} 
    }
        \toprule
        {\multirow{2}{*}{Class name}}  & \multicolumn{12}{c}{Inappropriate proportion (\%) ($\downarrow$)} \\
        \cmidrule{2-13} 
        {} & {SD}  & {FMN} & {Ablating} & {ESD-x} & {SLD} & {ESD-u} & {UCE} & {Receler} & {Ours (single step, w/o clip)} & {Ours (single step, clip)} & {Ours (w/o clip)} & {Ours (clip)} \\
        \midrule
        {\footnotesize Hate}
        & 44.2  & 37.7 & 40.8 & 34.1 & 22.5 & 26.8 & 36.4 & 28.6 & \underline{28.1} & \textbf{25.5} & 35.5 & 29.0\\
        {\footnotesize Harassment}
        & 37.5 & 25.0 & 32.9 & 30.2 & 22.1 & 24.0 & 29.5 & \textbf{21.7} & 28.8 & 25.7 & 29.9 & \underline{25.6}\\
        {\footnotesize Violence}
        & 46.3 & 47.8 & 43.3 & 40.5 & 31.8 & 35.1 & 34.1 & \textbf{27.1} & 31.2 & 30.3 & 32.5 & \underline{27.8}\\
        {\footnotesize Self-harm}
        & 47.9 & 46.8 & 47.4 & 36.8 & 30.0 & 33.7 & 30.8 & \underline{24.8} & 27.1 & \textbf{24.1} &  26.1 & 26.2\\
        {\footnotesize Sexual}
        & 60.2 & 59.1 & 60.3 & 40.2 & 52.4 & 35.0 & 25.5 & 29.4 & 34.8 & 33.9 & \underline{23.0} & \textbf{20.7}\\
        {\footnotesize Shocking}
        & 59.5 & 58.1 & 57.8 & 45.2 & 40.5 & 40.1 & 41.1 & \underline{34.8} & 39.4 & 37.7 & 38.4 & \textbf{34.0}\\
        {\footnotesize Illegal activity}
        & 40.0 & 37.0 & 37.9 & 28.9 & 22.1 & 26.7 & 29.0 & \underline{21.3} & 24.1 & 23.2 &  21.5 & \textbf{17.6}\\
        \hline 
        \addlinespace[0.1em] 
        {\footnotesize Overall}
        & 48.9 & 47.8 & 45.9 & 36.6 & 33.7 & 32.8 & 31.3 & \underline{27.0} & 31.2 & 29.4 & 28.9 & \textbf{25.6}\\
        \bottomrule
    \end{tabular}
}
\label{tab:sd14_0_i2p}
\end{table}
\begin{table}
    \centering
    \caption{\textbf{General quality estimation of images generated by nudity-erased models.} CLIP score and FID are calculated on COCO-30k dataset}
    \resizebox{0.5\columnwidth}{!}{
        \begin{tabular}{l | c c}
            \toprule
            \multirowcell{3}[0pt][l]{Method} & \multicolumn{2}{c}{Locality} \\
            {} & \multirowcell{2}[0pt][c]{CLIP-30K($\uparrow$)} & \multirowcell{2}[0pt][c]{FID-30K($\downarrow$)} \\
            {} & {} & {} \\
            \hline 
            \addlinespace[0.1em] 
            
            {SD v1.4} &  31.34  & 14.04\\
            \addlinespace[-0.1em]
            \midrule

            {FMN} & 30.39 & 13.52 \\
            {CA} & \textbf{31.37} & 16.25 \\
            {AdvUn} & 28.14 & 17.18 \\
            {Receler} & 30.49 & 15.32 \\
            {MACE} & 29.41 & 13.42 \\
            {CPE} &  \textbf{31.19} & 13.89 \\
            {UCE} &  30.85 & 14.07 \\
            {SLD-M} & 30.90 & 16.34 \\
            {ESD-x} & 30.69  & 14.41 \\
            {ESD-u} & 30.21  & 15.10 \\
            {SAeUron} & 30.89  & 14.37 \\
            {\textit{Ours (single step, w/o clip)}} & 30.67 & 13.33 \\
            {\textit{Ours (single step, clip)}} &31.05  & 13.14 \\
            {\textit{Ours (w/o clip)}} & 30.69 & \underline{13.28} \\
            {\textit{Ours (clip)}} & \underline{31.09} & \textbf{13.02} \\
            \addlinespace[-0.1em]
            \bottomrule
        \end{tabular}
    }
\label{tab:sd14_0_fid}
\end{table}


\end{document}